\documentclass[leqno,a4paper,twoside,11pt]{article}
\usepackage[nointlimits,nosumlimits]{amsmath}
\usepackage{amsfonts,amssymb,amsthm,ifthen}
\usepackage{color}
\usepackage[arrow,matrix,curve]{xy}
\usepackage{hyperref}
\usepackage[utf8x]{inputenc}
\usepackage[square,numbers]{natbib}

\addtolength{\textwidth}{2cm}
\addtolength{\evensidemargin}{-2cm}
\addtolength{\voffset}{-1cm}
\addtolength{\textheight}{2cm}

\newtheorem{Thm}{Theorem}[section]
\newtheorem{Prp}[Thm]{Proposition}
\newtheorem{Lem}[Thm]{Lemma}
\newtheorem{Cor}[Thm]{Corollary}
\newtheorem{Def}[Thm]{Definition}

\newtheorem{Exp}[Thm]{Example}

\newtheorem{IntroThm}{Theorem}[section]

\newtheorem{IntroPrp}[IntroThm]{Proposition}

\unitlength=1mm
\theoremstyle{definition}

\newcommand{\mE}{\mathcal E}
\newcommand{\mF}{\mathcal F}
\newcommand{\mO}{\mathcal O}
\newcommand{\mP}{\mathcal P}
\newcommand{\mT}{\mathcal T}

\newcommand{\bN}{\mathbb N}
\newcommand{\bR}{\mathbb R}
\newcommand{\bZ}{\mathbb Z}

\newcommand{\Der}{\mathrm{Der}}
\newcommand{\ev}{\mathrm{ev}}
\newcommand{\End}{\mathrm{End}}
\newcommand{\Hom}{\mathrm{Hom}}
\newcommand{\hol}{\mathrm{hol}}
\newcommand{\Hol}{\mathrm{Hol}}
\newcommand{\rk}{\mathrm{rk}}
\newcommand{\pr}{\mathrm{pr}}
\newcommand{\spann}{\mathrm{span}}

\newcommand{\oA}{\overline{A}}
\newcommand{\oB}{\overline{B}}
\newcommand{\fgl}{\mathfrak{gl}}

\newcommand{\arr}[2]{%
  \begin{array}{@{}#1@{}}#2\end{array}}
\newcommand{\abs}[1]{\left| #1 \right|}
\newcommand*{\bigtimes}{\mathop{\raisebox{-.5ex}{\hbox{\huge{$\times$}}}}}
\newcommand{\dd}[2][]{\frac{\partial #1}{\partial #2}}
\newcommand{\scal}[3][]{\ifthenelse{\equal{#1}{}}{
  \left\langle #2,\,#3 \right\rangle
}{\ifthenelse{\equal{#1}{(}}{
  \left( #2,\,#3 \right)
}{\ifthenelse{\equal{#1}{[}}{
  \left[ #2,\,#3 \right]
}{
  #1\left( #2,\,#3 \right)
}}}}
\newcommand{\setsep}{\;\big|\;}

\newcommand\blfootnote[1]{%
  \begingroup
  \renewcommand\thefootnote{}\footnotetext{#1}%
  \endgroup
}

\renewcommand{\title}[1]{\vbox{\center\LARGE{\textsc{#1}}}\vspace{5mm}}
\renewcommand{\author}[1]{\vbox{\center\large{\textsc{#1}}}\vspace{5mm}}
\newcommand{\address}[1]{\vbox{\center\em#1}}
\newcommand{\email}[1]{\vbox{\center\tt#1}\vspace{5mm}}

\begin{document}

\title{The Twofold Way of Super Holonomy}

\author{Josua Groeger$^1$}

\blfootnote{Research funded by the Institutional Strategy of the University
of Cologne in the German Excellence Initiative.}

\address{Universit\"at zu K\"oln, Institut f\"ur Theoretische Physik,\\
  Z\"ulpicher Str. 77, 50937 K\"oln, Germany }

\email{$^1$groegerj@thp.uni-koeln.de}

\begin{abstract}
\noindent
There are two different notions of holonomy in supergeometry,
the supergroup introduced by Galaev and
our functorial approach motivated by super Wilson loops.
Either theory comes with its own version of
invariance of vectors and subspaces under holonomy.
By our first main result, the Twofold Theorem, these definitions are equivalent.
Our proof is based on the Comparison Theorem, our second main result,
which characterises Galaev's holonomy algebra as an algebra
of coefficients, building on previous results.
As an application, we generalise some of Galaev's results to
S-points, utilising the holonomy functor.
We obtain, in particular, a de Rham-Wu decomposition theorem
for semi-Riemannian S-supermanifolds.
\end{abstract}

\noindent
2010 \textit{Mathematics Subject Classification.} 58A50, 53C29, 18F05.\\
\textit{Key words and phrases.} supermanifolds, holonomy, group functor.

\section{Introduction}

A connection on a vector bundle over a smooth manifold
gives rise to an isomorphism of fibres through parallel transport
along a connecting path. The group of such isomorphisms with respect to
loops, all starting and ending at the same point, is known as the holonomy group.
It is an important concept of an algebraic encoding of geometric properties
\cite{KN96,Joy00,MS99,GL10}.

The generalisation of holonomy to supergeometry is nontrivial due to the lack of
a sufficiently powerful notion of parallel transport in that context.
Recently, two approaches have been introduced that both overcome
this difficulty. In the first one due to Galaev \cite{Gal09},
a suitable generalisation of the Ambrose-Singer theorem is taken as the
definition of a super Lie algebra, which is then endowed to a super
Harish-Chandra pair, thus obtaining a super Lie group for every topological
point of the manifold.
In the second approach \cite{Gro14}, auxiliary Graßmann generators
are introduced that allow for a supergeometric parallel transport
modelling super Wilson loops \cite{MS10,BKS12}. The holonomy of an
$S$-point $x$ is then a Lie group valued functor $T\mapsto\Hol_x(T)$.

The relation between both theories is nontrivial, as to be elaborated henceforth.
The Twofold Theorem, to be stated next in an informal way,
is our main result for the sake of applications.

\begin{IntroThm}[Twofold Theorem]
\label{thmIntroTwofold}
A vector in the pullback of the super vector bundle under an $S$-point $x$
is invariant under Galaev's holonomy supergroup if and only if it is
invariant under $\Hol_x(T)$,
for a sufficiently large choice $T$ of auxiliary Graßmann generators.
An analogous statement holds for invariant subspaces.
\end{IntroThm}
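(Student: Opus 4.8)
The plan is to translate both invariance conditions into \emph{infinitesimal} ones and then to identify the two holonomy algebras that appear, by means of the Comparison Theorem. On Galaev's side, the holonomy supergroup is the super Lie group attached to the super Harish-Chandra pair $(\hol,\Hol_0)$, where $\hol\subseteq\fgl(x^*\mE)$ is Galaev's holonomy algebra and $\Hol_0$ the holonomy group of the underlying reduced connection; hence a vector of $x^*\mE$ is fixed by the supergroup exactly when it is annihilated by $\hol$ and fixed by $\Hol_0$, and a subspace is invariant exactly when it is stable under both. On the functorial side, for each fixed $T$ the group $\Hol_x(T)$ is an ordinary finite-dimensional Lie group, with Lie algebra $\hol_x(T)=\mathrm{Lie}(\Hol_x(T))\subseteq(\fgl(x^*\mE)\otimes\Lambda_T)_{\bar 0}$ acting on $x^*\mE\otimes\Lambda_T$; passing to the identity component, invariance under $\Hol_x(T)$ of the image of a vector, resp.\ subspace, in $x^*\mE\otimes\Lambda_T$ amounts to invariance under $\hol_x(T)$ together with invariance under the component group of $\Hol_x(T)$.

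First I would dispose of the discrete part: the component group of $\Hol_x(T)$ is controlled by the reduced holonomy, i.e.\ by $\Hol_0$, which is literally the same object in both theories, so the two $\Hol_0$-conditions coincide. It then remains to show that, for $T$ large enough, a vector $v$ is annihilated by $\hol$ (resp.\ a subspace $W$ is stable under $\hol$) if and only if the $T$-extension $v\otimes 1$ is annihilated by $\hol_x(T)$ (resp.\ $W\otimes\Lambda_T$ is $\hol_x(T)$-stable).

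This is exactly where the Comparison Theorem enters: it identifies $\hol$ with the algebra of coefficients of $\hol_x(T)$, so that for $T$ sufficiently large one has the inclusion $\hol_x(T)\subseteq\hol\otimes\Lambda_T$ and, moreover, the $\Lambda_T$-coefficients occurring in elements of $\hol_x(T)$, read off against a homogeneous monomial basis of $\Lambda_T$, span $\hol$ with matching parities. The inclusion gives the ``only if'' direction immediately, in both the vector and the subspace case. For the converse, write an arbitrary $\xi\in\hol_x(T)$ as $\xi=\sum_j\eta_j\otimes m_j$ with $\eta_j\in\hol$ and the $m_j$ distinct monomials; then $\xi\cdot(v\otimes 1)=\sum_j\pm(\eta_j v)\otimes m_j$, and the vanishing of this element forces $\eta_j v=0$ for all $j$ by linear independence of the $m_j$. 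Since, as $\xi$ varies, the coefficients $\eta_j$ span $\hol$, we get $\hol\cdot v=0$; comparing $m_j$-components in $\xi\cdot(w\otimes 1)\in W\otimes\Lambda_T$ shows in the same way that every coefficient stabilises $W$, hence $\hol\cdot W\subseteq W$.

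\textbf{Main obstacle.} I expect the genuine difficulty to lie not in the linear-algebra endgame but in two earlier points. First, the Comparison Theorem itself must be established, relating Galaev's Ambrose-Singer-type generators --- iterated covariant derivatives of the curvature --- to the coefficients produced by the Graßmann-extended parallel transport of the super Wilson loop formalism; it is here that the quantitative threshold for ``sufficiently large $T$'' is pinned down. Second, the reduction to the infinitesimal level requires genuine care in the super and $S$-point setting: one must set up Galaev's construction at an $S$-point, match the non-connected parts of the two theories, make precise the meaning of ``invariant under the holonomy functor'' --- invariance of the $T$-extension for all large enough $T$, compatibly with the inclusions $T\subseteq T'$ --- and check that enlarging $T$ neither creates nor destroys invariance.
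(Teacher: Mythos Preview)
Your outline is correct and closely parallels the paper's proof of the precise version (Thm.~\ref{thmTwofold}). Both directions use the Comparison Theorem as the essential input, and for the implication (functorial $\Rightarrow$ Galaev) your coefficient-extraction argument is exactly what the paper does. The one genuine difference lies in the converse (Galaev $\Rightarrow$ functorial): rather than arguing directly, as you do, that the Lie-algebra inclusion $\hol_x(T)\subseteq(\hol^{\mathrm{Gal}}_x\otimes\mO_T)_{\overline 0}$ forces the identity component to fix $v$ and then handling the components separately via a homotopy, the paper passes through the \emph{group-level} inclusion $\Hol_x(T)\subseteq\Hol_x^{\mathrm{Gal}}(T)$ of Prp.~\ref{prpGalaevFunctor} (which already packages the homotopy argument) combined with a stabiliser sub-supergroup argument: the Harish-Chandra pair of $\Hol_x^{\mathrm{Gal}}$ sits inside that of the stabiliser $G_x\subseteq GL^S_{\rk\mE}$ of $X_x$, whence $\Hol_x^{\mathrm{Gal}}(T)\subseteq G_x(T)$ on functors of points. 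Your route is more elementary and avoids the Harish-Chandra machinery; the paper's route makes the role of the representable supergroup more visible and isolates Prp.~\ref{prpGalaevFunctor} as a standalone ingredient. One terminological correction: for a general $S$-point the group you call $\Hol_0$ is not the holonomy of the underlying reduced connection but rather $(\Hol_x)_{\overline 0}:=\Hol_x(T=\bR^{0|0})$, the group of parallel transports along $S$-loops; the homotopy argument you flag as needing care --- that every component of $\Hol_x(T)$ meets $(\Hol_x)_{\overline 0}$ --- is precisely the content of the second paragraph in the proof of Prp.~\ref{prpGalaevFunctor}.
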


In \cite{Gro14}, we argued that the generators of Galaev's holonomy algebra
can be extracted as certain coefficients from the Lie algebras occurring
in the functorial approach in the common situation of a topological point,
and a partial sketch of a proof was given.
In this article, we establish this observation, generalised to $S$-points,
as the Comparison Theorem,
which is our main result from a technical point of view.

\begin{IntroThm}[Comparison Theorem]
\label{thmIntroComparison}
Galaev's holonomy superalgebra can be characterised as the algebra of
$T$-coefficient matrices of the Lie algebras $\hol_x(T)$, for all $T$.
\end{IntroThm}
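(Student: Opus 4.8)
Our strategy is to prove the Comparison Theorem by reducing both holonomy objects to an Ambrose--Singer-type presentation and then matching the two resulting generating sets monomial by monomial. On the functorial side, \cite{Gro14} provides, for each choice $T$ of auxiliary Graßmann generators, a super Ambrose--Singer theorem: the Lie algebra $\hol_x(T)$ is spanned by the $T$-valued operators $(P^T_\delta)^{-1}\circ R(\xi,\eta)\circ P^T_\delta$, where $\delta$ ranges over $T$-point loops based at $x$, the arguments $\xi,\eta$ over $T$-point tangent vectors, and $P^T_\delta$ is the genuine parallel transport made available by the extra odd generators of $T$. On Galaev's side, his holonomy superalgebra is by definition \cite{Gal09} the smallest super Lie subalgebra $\mathfrak g$ of $\fgl$ of the fibre at the reduced point underlying $x$ that contains all conjugated covariant derivatives of curvature, that is, all operators $P_\gamma^{-1}\circ(\nabla^k_{Y_1,\dots,Y_k}R)(X_1,X_2)\circ P_\gamma$ with $\gamma$ a curve in the reduced manifold, $P_\gamma$ its classical parallel transport, and the $Y_i,X_j$ homogeneous vector fields. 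Both presentations will first be lifted to the $S$-point setting, where $\fgl$ is formed from the pullback of the super vector bundle under $x$ and all operators are $\mO_S$-linear; for Galaev's side this is a direct transcription of \cite{Gal09}, while for the functorial side it requires the $S$-point version of the Ambrose--Singer theorem, which I would establish as a preliminary following \cite{Gro14}.

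The heart of the proof is the dictionary between the two generating sets. I would expand a $T$-point loop $\delta$, together with the $T$-point tangent data $\xi,\eta$, into a reduced part plus Graßmann-odd corrections, substitute into $(P^T_\delta)^{-1}\,R(\xi,\eta)\,P^T_\delta$, and expand the time-ordered exponential defining $P^T_\delta$. The coefficient of a fixed Graßmann monomial of $T$ then reads off --- once the resulting iterated integrals are evaluated --- as precisely a classically parallel-transported covariant derivative $\nabla^k R$ of the curvature, of exactly the type generating $\mathfrak g$, the attainable order $k$ being controlled by the number of odd generators available in $T$. Conversely, each generator of $\mathfrak g$ is realised as a $T$-coefficient matrix of a suitable element of $\hol_x(T)$ once $T$ carries enough odd generators. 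Hence the totality of $T$-coefficient matrices of the Ambrose--Singer generators of the $\hol_x(T)$, taken over all $T$, coincides with the generating set of Galaev's holonomy superalgebra.

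It remains to promote this equality of generating sets to an equality of super Lie algebras. Extracting the coefficient of a Graßmann monomial from a supercommutator $[A,B]$ in $\fgl$ of the $T$-extended fibre produces a signed sum of supercommutators of coefficient matrices of $A$ and $B$; consequently the real span of all $T$-coefficient matrices, over all $T$, is closed under the super bracket, carries the $\bZ_2$-grading induced by the parity of the Graßmann monomials, and is therefore a super Lie algebra. By the dictionary it both contains and is generated by the super Ambrose--Singer generators, so it equals $\mathfrak g$; conversely each $\hol_x(T)$ is recovered from $\mathfrak g$ by extension of coefficients to $T$, which is the form in which this result is used in the proof of the Twofold Theorem. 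I expect the main obstacle to lie in the Graßmann expansion of the functorial parallel transport: one must verify that its coefficient matrices are \emph{exactly} the parallel-transported covariant derivatives of $R$, with no spurious terms and with the correct iterated-integral combinatorics, and that $\mO_S$-linearity is preserved throughout. This is precisely the point at which the partial sketch of \cite{Gro14} has to be completed and generalised to $S$-points.
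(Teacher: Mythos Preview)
Your overall strategy---matching Ambrose--Singer generators via Gra\ss{}mann expansion of the time-ordered exponential---is the approach partially sketched in \cite{Gro14}, and the paper explicitly abandons it in favour of a different one, for precisely the reason you flag at the end. The direct expansion into iterated integrals becomes combinatorially heavy beyond second order, and your expectation that the coefficient of a fixed $T$-monomial is \emph{exactly} a parallel-transported $\nabla^k R$ ``with no spurious terms'' is in fact false: extra commutator terms do appear.

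The paper instead derives a closed formula for a single odd derivative $\partial_{\eta^i}$ of parallel transport, viewed as a homotopy (Prp.~\ref{prpEtaPgamma}). Applied to a conjugated operator $P_\gamma^{-1}F_y(u,v)P_\gamma$, this yields (Cor.~\ref{corPartialEtaCurvature}) the pullback covariant derivative $(y^*\nabla)_{\partial_{\eta^i}}F_y$ \emph{plus} a supercommutator with an integrated curvature term. That commutator is exactly the spurious contribution you worry about; it is not shown to vanish but is absorbed inductively, since it lies in a filtered piece $\hol_x(T)^{(k-1)}$ of strictly lower order (Def.~\ref{defHolk}, Cor.~\ref{corEtaDerivatives}). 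Two further lemmas (Lem.~\ref{lemHolk}, Lem.~\ref{lemHolk2}) convert between iterated pullback derivatives $(y^*\nabla)_{\partial_{\eta^{i_k}}}\!\cdots(y^*\nabla)_{\partial_{\eta^{i_1}}}R_y$ and genuine higher covariant derivatives $(\nabla^k R)_y$; the direction producing Galaev's generators requires the carefully designed $S\times T$-points~(\ref{eqnGalaevYGeneralised}), built so that each even coordinate direction can be differentiated up to $k$ times without the relevant Gra\ss{}mann monomials annihilating one another. Thus the paper replaces your global expansion by a one-step derivative formula iterated under a filtration, which is what makes the combinatorics tractable.

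One further caution: your closing claim that ``each $\hol_x(T)$ is recovered from $\mathfrak g$ by extension of coefficients to $T$'' is too strong if read as an equality. Only the inclusion $\hol_x(T)\subseteq(\hol_x^{\mathrm{Gal}}\otimes\mO_T)_{\overline 0}$ follows (Prp.~\ref{prpGalaevFunctor}), and it is in general proper; equality holds only on the high-degree part (Prp.~\ref{prpSufficientlyLarge}). This inclusion, not an equality, is what the Twofold Theorem actually uses.
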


Our proof uses a formula for an odd derivative of parallel
transport interpreted as a homotopy and relates the pullback
of higher covariant derivatives to covariant derivatives of the pullback.
The Comparison Theorem is the main ingredient in the proof of the
Twofold Theorem. It further forms the basis of the following result.

\begin{IntroPrp}
\label{prpIntroSubfunctor}
The functor of points of Galaev's holonomy supergroup is
the smallest representable group functor
which contains $\Hol_x$ as a subfunctor.
On the level of Lie algebras,
the $T$-coefficients of monomials of sufficiently large degree agree.
\end{IntroPrp}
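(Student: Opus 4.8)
Write $\mathfrak{hol}_x$ and $\Hol^{\mathrm{Gal}}_x$ for Galaev's holonomy superalgebra and supergroup at the $S$-point $x$, and let $E_x$ denote the pullback of the super vector bundle under $x$. The plan is to deduce everything from the Comparison Theorem together with the standard identification, for a super Lie group $G$ with super Lie algebra $\mathfrak g$, of the ordinary Lie algebra of the ordinary Lie group $G(T)$ with the even part $(\mathfrak g\otimes\Lambda_T)_0$ of the tensor product with the Graßmann algebra $\Lambda_T$ of the auxiliary generators $T$. That $T\mapsto\Hol^{\mathrm{Gal}}_x(T)$ is a representable group functor is immediate, being the functor of points of a super Lie group. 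To see that $\Hol_x$ is a subfunctor, apply the Comparison Theorem: it says precisely that every $T$-coefficient matrix occurring in $\hol_x(T)$ lies in $\mathfrak{hol}_x$, with the parity of the corresponding monomial, so that $\hol_x(T)\subseteq(\mathfrak{hol}_x\otimes\Lambda_T)_0=\mathrm{Lie}\bigl(\Hol^{\mathrm{Gal}}_x(T)\bigr)$ as sub--Lie algebras of $\fgl(E_x)$. Since $\Hol_x(T)$ and $\Hol^{\mathrm{Gal}}_x(T)$ are connected (and share the same group of connected components, both being governed by loops, resp.\ by the same Harish-Chandra data), this Lie algebra inclusion integrates to $\Hol_x(T)\subseteq\Hol^{\mathrm{Gal}}_x(T)$, naturally in $T$.

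For minimality, let $\underline{\mathcal G}\subseteq\underline{\mathrm{GL}}(E_x)$ be any representable group functor with $\Hol_x\subseteq\underline{\mathcal G}$; representability exhibits $\mathcal G$ as a super Lie subgroup of $\mathrm{GL}(E_x)$ with super Lie algebra $\mathfrak g_{\mathcal G}$, and for every $T$ one gets $\hol_x(T)\subseteq\mathrm{Lie}(\mathcal G(T))=(\mathfrak g_{\mathcal G}\otimes\Lambda_T)_0$. Reading off $T$-coefficients and invoking the other half of the Comparison Theorem — that $\mathfrak{hol}_x$ is \emph{spanned} by all such coefficients — yields $\mathfrak{hol}_x\subseteq\mathfrak g_{\mathcal G}$, hence $\Hol^{\mathrm{Gal}}_x\subseteq\mathcal G$ by connectedness and $\underline{\Hol^{\mathrm{Gal}}_x}\subseteq\underline{\mathcal G}$. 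Thus $\underline{\Hol^{\mathrm{Gal}}_x}$ is contained in every representable group functor containing $\Hol_x$ and is itself one of these by the previous paragraph, whence it is the smallest; uniqueness is automatic, since via the Lie functor the connected representable group subfunctors of $\underline{\mathrm{GL}}(E_x)$ above $\Hol_x$ correspond to super subalgebras of $\fgl(E_x)$ containing $\mathfrak{hol}_x$, a poset with least element.

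The statement on Lie algebras asserts that, expanding in the monomial basis $\{\eta_I\}$ of $\Lambda_T$, the coefficient subspace of $\eta_I$ in $\hol_x(T)$ coincides with that of $\eta_I$ in $(\mathfrak{hol}_x\otimes\Lambda_T)_0$, i.e.\ with $(\mathfrak{hol}_x)_{|I|\bmod 2}$, once $|I|$ exceeds a bound $N$. One inclusion is the subfunctor inclusion above. For the reverse, I would revisit the construction behind the Comparison Theorem, which produces each Ambrose-Singer-type generator of $\mathfrak{hol}_x$ (an $m$-fold covariant derivative of the curvature, contracted with tangent vectors) as the coefficient of an explicit Graßmann monomial in some element of $\hol_x(T)$; allowing the contracting tangent vectors in that construction to carry higher-degree Graßmann factors drawn from a sufficiently large $T$ pushes this monomial to any prescribed degree of the correct parity, the coefficient being unchanged. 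Combining this with the $S_n$-symmetry of $\hol_x(T)$ under relabelling of the auxiliary generators — under which the coefficient subspace of $\eta_I$ depends only on $|I|$ — the coefficient subspace of any $\eta_I$ with $|I|>N$ contains every generator of $(\mathfrak{hol}_x)_{|I|\bmod 2}$, hence all of it; here $N$ is the maximum, over the finitely many generators spanning the finite-dimensional algebra $\mathfrak{hol}_x$, of the order of covariant derivative needed.

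The main obstacle is this last step: making precise that each generator of $\mathfrak{hol}_x$ genuinely reappears as an \emph{arbitrarily high}-degree coefficient of an element still lying in $\hol_x(T)$ — a refinement of the Comparison Theorem's argument, in which one must track how the homotopy formula for the odd derivative of parallel transport, and the relation between pullbacks of higher covariant derivatives and covariant derivatives of the pullback, behave when the inserted vectors carry Graßmann factors of higher degree — together with an honest determination of the bound $N$. Secondary care is needed in the reading of ``representable'' (inside $\underline{\mathrm{GL}}(E_x)$) and in the connectedness conventions used to integrate the Lie-algebra inclusions, so that ``smallest'' is understood among functors comparable with $\Hol_x$ at the level of identity components.
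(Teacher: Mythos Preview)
Your overall strategy matches the paper's: deduce the Lie algebra inclusion from the Comparison Theorem, integrate it to the group level, and then use the coefficient description in the other direction for minimality. However, two of the steps you flag as obstacles are exactly the places where your argument is either incomplete or unnecessarily complicated, and the paper resolves them quite differently.

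First, the connectedness issue. Your claim that ``$\Hol_x(T)$ and $\Hol_x^{\mathrm{Gal}}(T)$ are connected (and share the same group of connected components)'' is not justified and, as stated, is not correct: $(\Hol_x)_{\overline{0}}=\Hol_x(\bR^{0|0})$ need not be connected. The paper's argument (Prp.~\ref{prpGalaevFunctor}) is concrete and avoids any appeal to $\pi_0$: given $P_\gamma\in\Hol_x(T)$, homotope the $S\times T$-loop $\gamma$ to its underlying $S$-loop $\tilde\gamma$ (obtained by setting the $T$-generators to zero). This produces a path in $GL_{\rk\mE}(\mO_{S\times T})$ from $P_{\tilde\gamma}$ to $P_\gamma$, so $P_\gamma\cdot P_{\tilde\gamma}^{-1}$ lies in the identity component $\Hol_x^{\mathrm{Gal}}(T)^0$, while $P_{\tilde\gamma}\in(\Hol_x)_{\overline{0}}$ is contained in $\Hol_x^{\mathrm{Gal}}(T)$ \emph{by construction} of the Harish-Chandra pair. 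The same mechanism is what makes your minimality argument work: from $\Hol_x\subseteq\underline{\mathcal G}$ at $T=\bR^{0|0}$ you get $(\Hol_x)_{\overline{0}}\subseteq\mathcal G_{\overline{0}}$, and together with $\mathfrak{hol}_x\subseteq\mathfrak g_{\mathcal G}$ this gives an inclusion of Harish-Chandra pairs, hence $\Hol_x^{\mathrm{Gal}}\subseteq\mathcal G$. You should state this explicitly rather than invoking ``connectedness''.

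Second, the high-degree coefficient statement. Your proposed route---reopening the Comparison Theorem's homotopy and pullback formulas and tracking how inserted Graßmann factors propagate---is far more delicate than necessary. The paper's argument (the paragraph preceding Prp.~\ref{prpSufficientlyLarge}) is a one-line observation: from the explicit Ambrose--Singer form (\ref{eqnHolonomyAlgebra}) of the generators of $\hol_x(T)$, one sees that if $A\in\hol_x(T)$ then $A\cdot\eta^J\in\hol_x(T)$ for every \emph{even} monomial $\eta^J\in\mO_T$ (absorb $\eta^J$ into one of the even vectors $u,v$). Hence any coefficient that occurs at all occurs at arbitrarily high even-shifted degree. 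Combined with the stabilisation coming from finite-dimensionality of $\fgl_{\rk\mE}(\mO_S)$, this immediately gives
\[
\hol_x(T)=(\hol_x(T))_{\deg\leq N}\oplus(\hol_x^{\mathrm{Gal}}\otimes(\mO_T)_{\deg>N})_{\overline{0}}
\]
for $N$ and $T$ large enough, with no need to revisit the proof of the Comparison Theorem.
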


In view of our results mentioned so far, one might conjecture that
the functor of points of Galaev's holonomy supergroup should be the
sheafification of our supergroup functor, with respect to a Grothendieck
topology of submersions natural in our context.
This topology agrees with the fppf topology \cite{Zub09,MZ11}
on the category of Graßmann algebras.
We establish the following.

\begin{IntroPrp}
\label{prpIntroFPPF}
Both $\Hol_x$ and the functor of points of Galaev's holonomy supergroup
are sheaves in the fppf topology.
\end{IntroPrp}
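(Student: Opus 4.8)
The plan is to treat the two functors by separate arguments, since they are sheaves for structurally different reasons. Write $\underline G$ for the functor of points of Galaev's holonomy supergroup. For $\underline G$ the assertion is essentially formal: by construction \cite{Gal09} this supergroup is a super Lie group attached to a super Harish-Chandra pair, so $\underline G$ is representable. Faithfully flat descent holds on the category of Graßmann algebras --- equivalently, the fppf topology there is subcanonical, as worked out in \cite{Zub09, MZ11} --- and hence every representable functor, in particular the functor of points of any super Lie group, is an fppf sheaf. So the first half needs only this citation, once one checks that the site in play (Graßmann algebras with their superalgebra homomorphisms, which is where the auxiliary generators live) is the one for which subcanonicity is available.

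For $\Hol_x$ I would argue that it is a \emph{sub}sheaf of $\underline G$. Proposition \ref{prpIntroSubfunctor} supplies a monomorphism $\Hol_x\hookrightarrow\underline G$, and a subpresheaf of a separated presheaf is separated, so $\Hol_x$ automatically satisfies the separation axiom and only the gluing axiom needs checking. Given an fppf covering $\{T\to T_i\}_i$ and a matching family $g_i\in\Hol_x(T_i)$, it glues to a unique $g\in\underline G(T)$ because $\underline G$ is a sheaf; it remains to see that $g$ lies in the subgroup $\Hol_x(T)$. Everything therefore reduces to the statement that membership ``$g\in\Hol_x(T)$'' inside $\underline G(T)$ is an fppf-local property of $T$: the forward implication is trivial functoriality, and the content is that it is \emph{reflected} along faithfully flat morphisms.

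To establish this reflection property I would pass from the generation-by-parallel-transport description of $\Hol_x(T)$ to the Lie-algebraic picture furnished by the Comparison Theorem \ref{thmIntroComparison} and Proposition \ref{prpIntroSubfunctor}. These identify $\hol_x(T)\subseteq\fgl$, and with it the immersed Lie subgroup $\Hol_x(T)$ it generates, in terms of the \emph{fixed} Galaev data read off as $T$-coefficient matrices; so $T\mapsto\hol_x(T)$ arises from a single Lie algebra by an extension-of-coefficients prescription, and $g\in\Hol_x(T)$ becomes equivalent to a prescribed family of $T$-coefficients of $g$ (equivalently, of the logarithm of its unipotent part) lying in a fixed submodule, together with the classical holonomy class of the body of $g$ lying in the image of the fixed loop system of the $S$-point $x$. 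Linear conditions on coefficient modules descend along faithfully flat morphisms by ordinary fppf descent for modules, and the loop system pulls back compatibly because the $T_i$ and $T$ share the underlying reduced point and $x$ is fixed throughout; hence $g\in\Hol_x(T)$ and the gluing step is complete. The main obstacle is exactly this translation: the definition of $\Hol_x(T)$ via products of holonomies around super loops is not visibly fppf-local, and the proof must convert it into the coefficient-wise, linear-algebraic characterisation of the Comparison Theorem before flat descent can be applied --- with additional care that the non-identity (component-group) part of the holonomy is matched across the cover, which is where the fixedness of the $S$-point and of its underlying base is used.
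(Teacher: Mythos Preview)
Your argument for $\Hol_x^{\mathrm{Gal}}$ via representability and subcanonicity of the fppf site is correct. The paper, however, does not split the two cases: it observes (Lem.~\ref{lemSubfunctors}) that both $\Hol_x$ and $\Hol_x^{\mathrm{Gal}}$ are subfunctors of the functor of points of $\fgl_{\rk\mE}^S$, and then proves in one stroke that \emph{every} subfunctor of $\fgl_{\rk\mE}^S$ is an fppf sheaf on $\mP$. The reason is that an fppf cover in $\mP$ is by definition a submersion of superpoints, and in adapted coordinates a submersion $\bR^{0|L'}\to\bR^{0|L}$ is simply the projection killing the extra generators $\eta^{L+1},\ldots,\eta^{L'}$. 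Such a map has a section, so the descent condition on an element of $F(\bR^{0|L'})$ forces it to be independent of the extra generators; the descended element lies in $F(\bR^{0|L})$ because it is obtained by applying $F$ to that section. No structure of holonomy beyond ``subfunctor of $\fgl^S$'' enters.

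Your proposed argument for $\Hol_x$ has a genuine gap. You claim that ``$T\mapsto\hol_x(T)$ arises from a single Lie algebra by an extension-of-coefficients prescription'', so that membership in $\Hol_x(T)$ is detected by linear conditions on $T$-coefficients plus a body condition. But this is exactly what the paper shows to be false: Prp.~\ref{prpGalaevFunctor} gives only a \emph{proper} inclusion $\hol_x(T)\subseteq(\hol_x^{\mathrm{Gal}}\otimes\mO_T)_{\overline{0}}$ in general, and Prp.~\ref{prpSufficientlyLarge} recovers equality only on the high-degree part of $\mO_T$. If your coefficient characterisation held, $\Hol_x$ and $\Hol_x^{\mathrm{Gal}}$ would coincide (up to components), whereas the whole point of Prp.~\ref{prpHolSheaves} is that both are already sheaves yet remain distinct --- so sheafification cannot bridge them. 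The Comparison Theorem tells you what the \emph{union} of all coefficient matrices is, not what $\hol_x(T)$ is for a given $T$; it is therefore the wrong tool here. The paper's split-cover argument sidesteps the problem entirely, since it needs no description of $\Hol_x(T)$ beyond its being a subset of $GL_{\rk\mE}(\mO_{S\times T})$.
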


Prp. \ref{prpIntroSubfunctor} and Prp. \ref{prpIntroFPPF} show
in a very precise way how the two approaches to super holonomy are related.
In the framework of either theory, it is natural to formulate
generalisations of the milestones of classical holonomy, that is to say
the holonomy principle, the theorem on parallel subbundles and
the de Rham-Wu theorem. By our Twofold Theorem, these generalisations
are, respectively, equivalent. While the holonomy principle
holds for general $S$-points (Thm. 2 in \cite{Gro14}),
the other two theorems mentioned were proved by Galaev in \cite{Gal09}
in the supergroup approach for topological points only.
We establish the general case in the functorial approach.

\begin{IntroThm}
\label{thmIntroParallel}
Parallel subbundles uniquely correspond to holonomy invariant subspaces.
\end{IntroThm}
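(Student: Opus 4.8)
The plan is to establish the correspondence inside the functorial approach, where a supergeometric parallel transport is available after adjoining auxiliary Graßmann generators, and then to transfer the conclusion to Galaev's supergroup by the Twofold Theorem. Throughout I fix an $S$-point $x$ and, for a choice $T$ of generators, work with the $T$-parallel transport of \cite{Gro14}, which makes the classical glueing arguments available. The two directions and uniqueness are then treated in turn.

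\emph{From parallel subbundles to invariant subspaces.} Let $\mF\subseteq\mE$ be a parallel subbundle, i.e.\ $\nabla_Y s$ is a section of $\mF$ whenever $s$ is. Integrating this relation along a path shows that $T$-parallel transport carries fibres of $\mF$ into fibres of $\mF$; in particular the pullback fibre $x^*\mF$ is stable under $\Hol_x(T)$ for every $T$, hence invariant under Galaev's holonomy supergroup by the Twofold Theorem. \emph{From invariant subspaces to parallel subbundles.} Conversely, start with a subspace $W\subseteq x^*\mE$ invariant under $\Hol_x(T)$ for sufficiently large $T$, equivalently (Twofold Theorem) under Galaev's supergroup. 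Define a candidate $\mF$ fibrewise, the fibre over $p$ being the $T$-parallel transport of $W$ along a path from the image of $x$ to $p$. Invariance of $W$ makes this independent of the chosen path within each connected component, and the Comparison Theorem makes it independent of $T$ once $T$ is large enough, so these local pieces glue to a subsheaf of $\mE$. One then verifies that $\mF$ is a genuine sub-super vector bundle — local triviality by trivialising $\mE$ through a frame parallel along radial paths in a contractible coordinate neighbourhood, in which frame $\mF$ becomes the constant subspace $W$ — and that $\mF$ is parallel, which follows because $\mF$ is by construction preserved by all parallel transports, so differentiating gives $\nabla_Y$(sections of $\mF$)$\,\subseteq\mF$. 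Uniqueness is then immediate: any parallel subbundle equals the parallel transport of its own fibre at $x$, by the same glueing argument, hence is determined by that fibre; together with the previous two steps this yields the asserted bijection between parallel subbundles and holonomy invariant subspaces.

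The delicate point, and the step I expect to be the main obstacle, is the local-triviality claim in the $S$-point setting: one must produce, uniformly over the parameter supermanifold and compatibly with the functor of points, a local $\nabla$-adapted frame in which $\mF$ is constant, and one must check via the Comparison Theorem that the resulting subbundle does not depend on how large the auxiliary generator system $T$ was taken. This amounts to running the classical normal-frame construction in families, with the $T$-parallel transport in the rôle of ordinary parallel transport, and controlling the $T$-dependence by the coefficient characterisation of $\hol_x(T)$. Once this family version is in place, the parallelism of $\mF$ and the bijectivity of the correspondence are formal, the latter resting on the Twofold Theorem to identify the two a priori different notions of holonomy invariant subspace.
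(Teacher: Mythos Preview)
Your first direction (parallel implies invariant) is fine and matches the paper's easy part. The gap is in the converse.

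The proposal ``define $\mF$ fibrewise by parallel transport of $W$'' does not produce a candidate subsheaf in the super setting. In classical geometry a subbundle is determined by its fibres at topological points, so this is a legitimate definition; in supergeometry it is not. A subsheaf $\mF_S\subseteq\mE_S$ is a sheaf of $\mO_{S\times M}$-submodules, and pulling back to $S\times T$-points does not recover it. Concretely, parallel transport along $S\times T$-paths $\gamma:S\times T\times[0,1]\to M$ only moves $W$ between pullbacks $x^*\mE\to y^*\mE$; it does not tell you how the putative $\mF_S$ depends on the odd coordinates $\theta^j$ of $M$, because no $[0,1]$-parameterised path reaches into those directions. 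Your radial-frame idea has the same defect: trivialising along radial paths in $M_{\overline{0}}$ gives a frame depending only on the even $M$-coordinates, and in such a frame ``the constant subspace $W$'' is a subsheaf of the reduction of $\mE_S$, not of $\mE_S$ itself. So the construction, as written, does not even yield an object whose parallelism one could then check.

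The paper's proof of Thm.~\ref{thmParallelSubsheaves} proceeds differently. It first projects to the underlying classical data and invokes the classical theorem to obtain a parallel subbundle $F\subseteq E$ with basis $(e_{\oA})$. It then seeks the super subbundle in the form $f_{\oA}=e_{\oA}+T_{\oA}^{\hat{B}}e_{\hat{B}}$ and constructs the unknown coefficients $T_{\oA}^{\hat{B}}\in\mO_{S\times M}$ by an explicit induction on the total degree in the odd generators of $\mO_S$ and $\mO_M$, tracking simultaneously the parallel-transport compatibility and the covariant-derivative closure to the given order. At each step the holonomy invariance hypothesis (for suitably chosen $S\times T$-points $y$) is what makes the next coefficient well-defined, i.e.\ independent of the auxiliary path. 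This order-by-order construction in the nilpotents is precisely the substitute for the unavailable ``transport to every point''. You correctly flagged this region as the delicate one, but the obstacle is not merely running a classical normal-frame argument in families; it is that the classical definition-by-transport has no direct super analogue, and one must instead build the subsheaf perturbatively.
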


\begin{IntroThm}[De Rham-Wu]
\label{thmIntroDeRhamWu}
A semi-Riemannian $S$-supermanifold splits into a product such that
the factors have weakly irreducible holonomy.
\end{IntroThm}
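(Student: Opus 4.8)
The plan is to reduce the global splitting to an algebraic decomposition of a fibre, transported to a parallel decomposition of the tangent sheaf by the correspondence of Thm.~\ref{thmIntroParallel}. Let $\nabla$ be the Levi-Civita connection of the even, nondegenerate, supersymmetric metric $g$ on the semi-Riemannian $S$-supermanifold $M$, and fix an $S$-point $x$. Since $\nabla$ is metric, parallel transport, and therefore each Lie algebra $\hol_x(T)$ as well as Galaev's holonomy superalgebra $\mathfrak h$, acts on the pullback $x^*\mT M$ by $g$-skew-symmetric endomorphisms. The first step is the super analogue of Wu's lemma: a super vector space $V$ carrying a nondegenerate even supersymmetric form together with an action of a Lie superalgebra $\mathfrak h$ by skew-symmetric endomorphisms decomposes $g$-orthogonally as $V = V_0 \oplus V_1 \oplus \dots \oplus V_k$, where $V_0 = V^{\mathfrak h}$ is the fixed subspace and the action on each $V_i$, $i \geq 1$, is weakly irreducible (no nondegenerate proper invariant subspace). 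This is essentially Galaev's algebraic decomposition in \cite{Gal09}; I would apply it with $V = x^*\mT M$.

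The second step globalises this. By the Twofold Theorem (Thm.~\ref{thmIntroTwofold}), each $V_i$ is invariant under $\Hol_x(T)$ for $T$ sufficiently large, so Thm.~\ref{thmIntroParallel} promotes the $V_i$ to mutually $g$-orthogonal parallel subbundles $\mT_i \subseteq \mT M$ with $\mT M = \mT_0 \oplus \dots \oplus \mT_k$, the decomposition being independent of $T$ since Galaev's algebra is. Because $\nabla$ is torsion-free, each $\mT_i$ is involutive: for sections $X,Y$ of $\mT_i$ one has $[X,Y] = \nabla_X Y - \nabla_Y X \in \mT_i$ by parallelism. A super Frobenius argument then produces, around each topological point, coordinates adapted simultaneously to all the $\mT_i$; orthogonality of the distributions together with their parallelism forces $g$ to be block diagonal in these coordinates, each block depending only on the coordinates of its own factor. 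Hence $M$ is locally isometric to a product of semi-Riemannian $S$-supermanifolds carrying the product metric, the holonomy of the $i$-th factor being exactly the action on $V_i$: weakly irreducible for $i \geq 1$, trivial for $i=0$, which gives a flat factor.

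The final step passes from this local product structure to a global one, following Wu's classical monodromy argument transported to the $S$-setting. Assuming, as in the precise statement, that the underlying reduced semi-Riemannian manifold is complete and simply connected, the globally defined parallel subbundles $\mT_i$ have integral leaves through every point, and the obstruction to gluing the local product charts is a holonomy-of-loops phenomenon on the reduced manifold, which vanishes by simple connectedness; completeness then ensures that the factors are themselves complete. All constructions — the distributions, the adapted charts, the product metric, and the gluing data — are made in families over $S$ and are morphisms of $S$-supermanifolds, so the $S$-structure rides along throughout.

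The main obstacle I expect is the second step: converting the torsion-free, orthogonal, parallel distributions into an honest product of supermanifolds with a product metric, i.e.\ the super local de Rham splitting. One must build coordinate systems adapted to several distributions simultaneously and control the dependence of the metric coefficients, which in the $\bZ_2$-graded setting requires care with the interplay of even and odd coordinates and with nondegeneracy of $g$ on each $\mT_i$ — a subtlety absent in the Riemannian case but accommodated here by using weak irreducibility in place of irreducibility. By contrast, the algebraic Wu decomposition is formal once skew-symmetry is established, and the globalisation step is a routine adaptation of the classical argument using completeness and simple connectedness of the reduced manifold.
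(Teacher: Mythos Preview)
Your strategy matches the paper's: invariant nondegenerate pieces of the fibre, promoted to parallel subbundles via Thm.~\ref{thmIntroParallel}, involutive by torsion-freeness, then Frobenius locally and the classical de Rham--Wu theorem globally. Two points specific to the $S$-setting deserve more care than you give them. First, $x^*\mT M$ is a free $\mO_S$-supermodule, not a super vector space, so Galaev's algebraic Wu decomposition from \cite{Gal09} does not apply verbatim; the paper instead splits off one free nondegenerate invariant submodule together with its orthogonal complement (again free, by an $OSp$-basis argument as in Lem.~\ref{lemOrthogonalComplement}) and iterates. Second, the parallel subbundles produced by Thm.~\ref{thmIntroParallel} are subsheaves of $\mT M_S$, not of $\mT M$; before invoking Frobenius one must project to $\mT M$ by setting the $\mO_S$-generators to zero and check that freeness, directness and involutivity survive the projection --- a step you do not mention but which the paper carries out explicitly.
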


This article is organised as follows. In Sec. \ref{secTwofold}, we
briefly recall the relevant background and provide precise formulations
of Thm. \ref{thmIntroTwofold}, Thm. \ref{thmIntroComparison} and
Prp. \ref{prpIntroSubfunctor}. The proof of Thm. \ref{thmIntroComparison},
the Comparison Theorem, is deferred to the first part of
Sec. \ref{secCoefficientAlgebra}. In the second, we prove
Prp. \ref{prpIntroFPPF}. Thm. \ref{thmIntroParallel} and Thm. \ref{thmIntroDeRhamWu}
are the subject matter of Sec. \ref{secApplications}.
In a separate appendix, Sec. \ref{secFPPF}, we establish the aforementioned
characterisation of the fppf topology as a topology of submersions.
While this can be deduced from a result by Schmitt in \cite{Sch89},
we provide an independent proof based on a less abstract result
by Esin and Ko\c{c} in \cite{EK07}.

\section{The Twofold Theorem}
\label{secTwofold}

In this section, we first recall the functorial holonomy theory of \cite{Gro14} and
introduce a slight generalisation of Galaev's approach developed in \cite{Gal09}.
Having set the stage, we precisely formulate and prove
our main results concerning the relation
between both theories. Thm. \ref{thmIntroTwofold} and Thm. \ref{thmIntroComparison}
correspond to Thm. \ref{thmTwofold} and Thm. \ref{thmGalaevFunctor}, respectively,
while Prp. \ref{prpIntroSubfunctor} is split into Prp. \ref{prpGalaevFunctor},
Prp. \ref{prpSufficientlyLarge} and Prp. \ref{prpSmallestRepresentable}.

Let $M=(M_{\overline{0}},\mO_M)$ be a supermanifold in the sense of
Berezin-Kostant-Leites (\cite{Lei80}, \cite{Var04}, \cite{CCF11})
with underlying classical manifold $M_{\overline{0}}$. Concerning
notation, we shall use the subscript $\overline{0}$ also to denote
the even part of a super vector space, and $\overline{1}$ to denote
the odd part. Let $\mE$ be a super vector bundle
over $M$ considered as a sheaf of locally free $\mO_M$ supermodules, such as
the tangent sheaf $\mT M:=\Der(\mO_M)$. Moreover, we fix a superpoint
$S:=\bR^{0|L}$, an $S$-connection $\nabla$ on the sheaf
$\mE_S:=\mE\otimes_{\mO_M}\mO_{S\times M}$ and an $S$-point $x:S\rightarrow M$
on $M$.

As detailed in \cite{Gro14}, $\nabla$ gives rise to parallel transport
operators $P_{\gamma}:x^*\mE\rightarrow y^*\mE$ along an $S$-path
$\gamma:S\times[0,1]\rightarrow M$ connecting $x$ with another $S$-point $y$.
The holonomy group $\Hol_x$ is defined as the
set of parallel transports $P_{\gamma}$ such that $\gamma$ is a piecewise
smooth $S$-loop starting and ending in $x$.
For notational simplicity, we shall not denote the dependence on
$\nabla$ explicitly in the following. $\Hol_x$ carries the structure of
a Lie group. By a theorem of Ambrose-Singer type, its Lie algebra $\hol_x$ is
generated by endomorphisms of the form
\begin{align}
\label{eqnHolonomyAlgebra}
\{P_{\gamma}^{-1}\circ\scal[R_y]{u}{v}\circ P_{\gamma}\setsep
y:S\rightarrow M\,,\;\gamma:x\rightarrow y\;\;\mathrm{pw. smooth}\,,\;
u,v\in(y^*\mT M)_{\overline{0}}\}
\end{align}
where $R$ denotes the curvature tensor with respect to $\nabla$.
As it stands, the holonomy group $\Hol_x$ contains only a limited
amount of information, making it necessary
to consider a larger set of loops.
To that end, let $T=\bR^{0|L'}$ be another superpoint and consider $x$
as an $S\times T$-point, denoted $x_T:S\times T\rightarrow M$.
The prescription $T\mapsto\Hol_x(T):=\Hol_{x_T}$ extends to
a Lie group valued functor, referred to as the holonomy group functor $\Hol_x$
in the following. Similarly, the assignment of Lie algebras
$T\mapsto\hol_x(T)$ establishes a functor, denoted $\hol_x$, in its own right.
Both $\Hol_x$ and $\hol_x$ can be considered as functors from the category
$\mP$ of superpoints $T=\bR^{0|L'}$ to the category $\mathrm{Sets}$.

Galaev's holonomy supergroup studied in \cite{Gal09} is defined
for a classical point $x\in M_{\overline{0}}$, that is an $S$-point
with $L=0$ in the notation above. For comparison with the holonomy group
functor, we introduce a slight generalisation of that theory next.
To begin with, we define higher covariant derivatives of tensors of
curvature type.
Choose an auxiliary $S$-connection on $\mT M$, which will also be
referred to as $\nabla$ and left implicit.

\begin{Def}
\label{defHigherCovariantDerivatives}
Let $F\in\Hom_{\mO_{S\times M}}(\mT M_S\otimes_{\mO_{S\times M}}\mT M_S\otimes_{\mO_{S\times M}}\mE_S,\mE_S)_{\overline{0}}$
be a tensor and $Y_1,\ldots,Y_{k+1}\in\mT M_S$.
For $X,Y\in\mT M_S$, we define
\begin{align*}
\scal[(\nabla_{Y_1}F)]{X}{Y}&:=\nabla_{Y_1}\circ\scal[F]{X}{Y}
-\scal[F]{\nabla_{Y_1}X}{Y}-(-1)^{\abs{Y_1}\abs{X}}\scal[F]{X}{\nabla_{Y^1}Y}\\
&\qquad-(-1)^{\abs{Y_1}(\abs{X}+\abs{Y})}\scal[F]{X}{Y}\circ\nabla_{Y_1}
\end{align*}
and, recursively,
\begin{align*}
\nabla^{k+1}_{Y_{k+1},\ldots,Y_1}F
&:=\nabla_{Y_{k+1}}(\nabla^k_{Y_k,\ldots,Y_1}F)\\
&\qquad-\sum_j(-1)^{\abs{Y_{k+1}}(\abs{Y_k}+\ldots+\abs{Y_1})}\nabla^k_{Y_k,\ldots,Y_{j+1},\nabla_{Y_{k+1}}Y_j,Y_{j-1},\ldots,\ldots,Y_1}F
\end{align*}
\end{Def}

The covariant derivatives $\nabla^kF$ are tensors and, as such, may be pulled back
to $y:S\times T\rightarrow M$. As in \cite{Gro14} we write,
by a slight abuse of notation,
\begin{align*}
(\nabla^k_{Y_k,\dots,Y_1}F)_y(u,v):=((\nabla^kF)_y)_{Y_k,\ldots,Y_1}(u,v)
\end{align*}
for $Y_r,\ldots,Y_1,u,v\in y^*\mT M$.

\begin{Def}
\label{defGalaevAlgebra}
Let $x:S\rightarrow M$ be an $S$-point. Let $\hol_x^{\mathrm{Gal}}$ denote
the ($\bR$-)super Lie algebra generated by operators
\begin{align*}
P_{\gamma}^{-1}\circ\left(\nabla^k_{Y_k,\ldots,Y_1}R\right)_y(u,v)\circ P_{\gamma}\subseteq\End_{\mO_S}(x^*\mE)\cong\fgl_{\rk\mE}(\mO_S)
\end{align*}
with arbitrary $S$-point $y:S\rightarrow M$,
$S$-path $\gamma:x\rightarrow y$ and $Y_1,\ldots,Y_r,u,v\in y^*\mT M$.
\end{Def}

The reader should note that Def. \ref{defGalaevAlgebra},
like Def. \ref{defHigherCovariantDerivatives},
implicitly depends on the choice of an auxiliary connection
on $\mT M$. As done by Galaev for $L=0$, one can directly show that,
in fact, $\hol_x^{\mathrm{Gal}}$ is independent thereof. We will
omit this proof here, since that statement is a direct corollary of
Thm. \ref{thmGalaevFunctor} below.

The even part
$(\hol_x^{\mathrm{Gal}})_{\overline{0}}\subseteq\fgl_{\rk\mE}(\mO_S)_{\overline{0}}$
is the Lie algebra of a unique immersed connected Lie subgroup of
$GL_{\rk\mE}(\mO_S)$ (see Chp. 2 of \cite{GOV97}), that we will refer to
as $(\Hol_x^{\mathrm{Gal}})_{\overline{0}}^0$.
Moreover, we define the Lie group
\begin{align*}
(\Hol_x)_{\overline{0}}:=\Hol_x(T=\bR^{0|0})=\{P_{\gamma}\setsep\gamma:S\times[0,1]\rightarrow M\,,\;\gamma:x\rightarrow x\}\subseteq GL_{\rk\mE}(\mO_S)
\end{align*}
whose Lie algebra is contained in $(\hol_x^{\mathrm{Gal}})_{\overline{0}}$,
and which is not connected, in general.
Now let $(\Hol_x^{\mathrm{Gal}})_{\overline{0}}\subseteq GL_{\rk\mE}(\mO_S)$
be the Lie group generated
by $(\Hol_x^{\mathrm{Gal}})_{\overline{0}}^0$ and $(\Hol_x)_{\overline{0}}$,
which comes with the natural adjoint action on
$\hol_x^{\mathrm{Gal}}\subseteq\fgl_{\rk\mE}(\mO_S)$.
We thus obtain a super Harish-Chandra pair corresponding to a super Lie group.

\begin{Def}
Galaev's holonomy supergroup $\Hol_x^{\mathrm{Gal}}$ is the super Lie group
determined by the super Harish-Chandra pair
$((\Hol^{\mathrm{Gal}}_x)_{\overline{0}},\hol_x^{\mathrm{Gal}})$.
\end{Def}

We shall next define the holonomy algebra of coefficients
of $\hol_x(T)$ with respect to generators of $\mO_T$.
Continuing the above convention, the number $L$ will always
denote the odd dimension of $S=\bR^{0|L}$, whereas $L'$ is as in
$T=\bR^{0|L'}$.

For a fixed number $L'>0$, we consider the generators $\eta_{L'}^1,\ldots,\eta_{L'}^{L'}$
of the Graßmann algebra $\mO_T=\bigwedge\bR^{L'}$ corresponding to the standard
basis of $\bR^{L'}$. The canonical inclusion $\bR^{L'}\subseteq\bR^{L''}$
for $L''>L'$ induces an identification of the generators
$\eta_{L'}^i$ and $\eta_{L''}^i$ (for $1\leq i\leq L'$).
In the following, we will simply write
$\eta^i$ as an element of $\bigwedge\bR^{L''}$ for some $L''\geq L'$
whose exact value is not important unless stated otherwise.

\begin{Def}
\label{defCoefficientHolonomy}
We define the coefficient holonomy algebra $\hol^{\mathrm{C}}_x$
to be the set of coefficient matrices of $T$-generators as follows.
\begin{align*}
&\hol^{\mathrm{C}}_x:=(\hol^{\mathrm{C}}_x)_{\overline{0}}\oplus
(\hol^{\mathrm{C}}_x)_{\overline{1}}\subseteq\fgl_{\rk\mE}(\mO_S)\\
&(\hol^{\mathrm{C}}_x)_{\overline{i}}:=\{h\setsep\exists L'\in\bN\,,\;
\exists A=\sum_IA^I\cdot\eta^I\in\hol_x(T=\bigwedge\bR^{L'})\,,\;
\exists I\,:\;\abs{I}=\overline{i}\,,\;h=A^I\}
\end{align*}
\end{Def}

\begin{Lem}
\label{lemSupermoduleAlgebra}
$\hol^{\mathrm{C}}_x$ is an $\mO_S$-supermodule as well as a super Lie algebra,
with operations induced from $\fgl_{\rk\mE}(\mO_S)$.
\end{Lem}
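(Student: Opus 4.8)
The plan is to reduce the three assertions to be checked — closure of $\hol^{\mathrm{C}}_x$ under addition within each parity, under multiplication by $\mO_S$, and under the super bracket inherited from $\fgl_{\rk\mE}(\mO_S)$ — to manipulations of a particularly rigid type of element of $\hol_x(T)$, a \emph{pure monomial} $h\cdot\eta^1\cdots\eta^N$, after which only short sign computations remain. Two structural facts will be used throughout. First, for each $T=\bigwedge\bR^{L'}$ the set $\hol_x(T)$ is a finite-dimensional $\bR$-linear subspace of $\fgl_{\rk\mE}(\mO_{S\times T})_{\overline{0}}$ that is moreover stable under multiplication by $(\mO_{S\times T})_{\overline{0}}$; this follows from the Ambrose--Singer description \eqref{eqnHolonomyAlgebra} applied to $x_T$, since scaling an even tangent argument of $R$ by an even function scales the corresponding generator (tensoriality of $R$), and an even central scalar can be pulled into any iterated bracket. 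Secondly, by functoriality of $\hol_x$ a morphism $\phi\colon T'\to T$ of superpoints acts on coefficient matrices through $\mathrm{id}_{\mO_S}\otimes\phi^*$; in particular the canonical inclusion $\bigwedge\bR^{L'}\hookrightarrow\bigwedge\bR^{L''}$ leaves all coefficient matrices unchanged, a permutation of the generators permutes the $\eta^I$, and, for $t\neq 0$, the automorphism $\eta^i\mapsto t\eta^i$ multiplies the $\eta^I$-coefficient by $t^{\abs{I}}$.

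The heart of the argument is the following isolation statement: for every $h\in(\hol^{\mathrm{C}}_x)_{\overline{i}}$ there is an $N\in\bN$ with $N\equiv i\pmod 2$ such that $h\cdot\eta^1\cdots\eta^N\in\hol_x(\bigwedge\bR^N)$. To prove it, choose $A=\sum_IA^I\eta^I\in\hol_x(\bigwedge\bR^{L'})$ with $A^I=h$ for some multi-index $I$ of cardinality $m\equiv i$, and apply a permutation so that $I=\{1,\dots,m\}$. Feeding several nonzero values of $t$ into the rescaling $\eta^i\mapsto t\eta^i$ and inverting the resulting Vandermonde system — legitimate since $\hol_x(\bigwedge\bR^{L'})$ is a linear subspace — shows that the homogeneous component $A_m:=\sum_{\abs{J}=m}A^J\eta^J$ again lies in $\hol_x(\bigwedge\bR^{L'})$. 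Enlarging $L'$ if necessary so that $L'-m$ is even, multiply $A_m$ by the even element $w:=\eta^{m+1}\cdots\eta^{L'}\in(\mO_T)_{\overline{0}}$; every term $A^J\eta^J$ with $J\neq\{1,\dots,m\}$ then acquires a repeated generator and vanishes, so $A_mw=\pm\,h\cdot\eta^1\cdots\eta^{L'}$, and one takes $N:=L'$.

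Granting this, the three closures follow at once. For addition, write $h_1,h_2\in(\hol^{\mathrm{C}}_x)_{\overline{i}}$ as pure monomials $h_r\cdot\eta^1\cdots\eta^{N_r}$ with $N_r\equiv i\pmod 2$; after enlarging $T$ and, where $N_1\neq N_2$, multiplying the shorter monomial by the (even) product of the remaining generators, both lie in a common $\hol_x(\bigwedge\bR^N)$ in the form $h_r\cdot\eta^1\cdots\eta^N$, whence $(h_1+h_2)\cdot\eta^1\cdots\eta^N\in\hol_x(\bigwedge\bR^N)$ and $h_1+h_2\in(\hol^{\mathrm{C}}_x)_{\overline{i}}$. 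For multiplication by $f\in(\mO_S)_{\overline{k}}$, take a pure monomial $h\cdot\eta^1\cdots\eta^N$ representing $h\in(\hol^{\mathrm{C}}_x)_{\overline{i}}$ and multiply it by $f$ if $k=\overline{0}$, and by $f\cdot\eta^{N+1}$ — which is even, being a product of two odd elements of $\mO_{S\times T}$ — if $k=\overline{1}$; by the $(\mO_{S\times T})_{\overline{0}}$-stability of $\hol_x(T)$ the result is, up to sign, $(fh)\cdot\eta^1\cdots\eta^N$ respectively $(fh)\cdot\eta^1\cdots\eta^{N+1}$, so $fh\in(\hol^{\mathrm{C}}_x)_{\overline{i+k}}$.

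For the super bracket, realise $h_1\in(\hol^{\mathrm{C}}_x)_{\overline{i}}$ and $h_2\in(\hol^{\mathrm{C}}_x)_{\overline{j}}$ as pure monomials on \emph{disjoint} sets of generators inside a common $\bigwedge\bR^{N_1+N_2}$ — by combining the isolation statement with a permutation and an enlargement — and form the commutator $[A,B]=AB-BA$, which is the bracket of the Lie algebra $\hol_x(\bigwedge\bR^{N_1+N_2})$ because $A$ and $B$ are even. A short sign computation in $\fgl_{\rk\mE}(\mO_{S\times T})$ yields $[A,B]=(-1)^{N_1N_2}\,[h_1,h_2]\cdot\eta^1\cdots\eta^{N_1+N_2}$, where $[h_1,h_2]$ now denotes the super bracket of $\fgl_{\rk\mE}(\mO_S)$, and hence $[h_1,h_2]\in(\hol^{\mathrm{C}}_x)_{\overline{i+j}}$. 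As $\hol^{\mathrm{C}}_x$ is in addition visibly stable under sign change, the closures just established say precisely that it is an $\mO_S$-sub-supermodule and a sub-super-Lie-algebra of $\fgl_{\rk\mE}(\mO_S)$. I expect the only real obstacle to be the isolation statement — concretely, verifying that the rescaling maps $\eta^i\mapsto t\eta^i$ genuinely act on $\hol_x(T)$ in the asserted way, so that the Vandermonde extraction of $A_m$ is justified.
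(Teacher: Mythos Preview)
Your proof is correct and rests on the same two structural facts the paper uses: that $\hol_x(T)$ is closed under multiplication by even elements of $\mO_{S\times T}$ (via the tensoriality of the Ambrose--Singer generators) and that one may pass to a larger $T$ so as to realise two given coefficients on disjoint sets of generators. The paper's proof is considerably more compressed: for addition it simply invokes the first fact, and for the bracket it ``alters $B$'' by rewriting its Ambrose--Singer generators in fresh $\eta$'s so that every product $\eta^I\eta^J$ is nonzero, then reads off $[X,Y]$ directly as the $(I_0J_0)$-coefficient of $[A,B]$. Your route adds a genuine extra step---the Vandermonde extraction of homogeneous degree via the rescaling automorphisms, followed by multiplication by the complementary monomial---to reduce everything to pure monomials $h\cdot\eta^1\cdots\eta^N$. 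This is more machinery than the paper deploys, but it buys you a cleaner bookkeeping of signs and indices, and it uses only functoriality of $\hol_x$ rather than the explicit freedom to reparametrise $\gamma,u,v$ inside the generators. As for your stated worry: the rescaling $\eta^i\mapsto t\eta^i$ is an automorphism of $T$ in $\mP$, and since $\hol_x$ is a subfunctor of $\fgl_{\rk\mE}^S$ (Lem.~\ref{lemSubfunctors}) it indeed acts on $\hol_x(T)$ by $\mathrm{id}\otimes\phi_t^*$, so the Vandermonde step is fully justified.
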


\begin{proof}
The sum of two elements $X,Y\in\hol^{\mathrm{C}}_x$ is contained in the same
set. In case of opposite parity, this is by definition. Otherwise,
it follows from the fact that, by (\ref{eqnHolonomyAlgebra}),
$A\cdot\eta^K\in\hol_x(T)$ for every $A\in\hol_x(T)$ and every even monomial
$\eta^K\in\mO_T$. A similar argument concludes the proof that
$\hol^{\mathrm{C}}_x$ is an $\mO_S$-supermodule.
Finally, we claim that $\scal[[]{X}{Y}\in\hol^{\mathrm{C}}_x$.
Let $A=A^I\eta^I\in\hol_x(T)$, $B=B^J\eta^J\in\hol_x(T')$,
and let $I_0$ and $J_0$ denote
multiindices such that $X=A^{I_0}$ and $Y=B^{J_0}$.
Without loss of generality, we may assume $T=T'$. Moreover,
we may alter $B$ such as to achieve $I\cdot J\neq 0$ for all $I$ and $J$
occurring. In terms of the generators (\ref{eqnHolonomyAlgebra}),
this means to change the instances of $\gamma$, $u$ and $v$ accordingly.
It then follows that
$\scal[[]{X}{Y}=\scal[[]{A^{I_0}}{B^{J_0}}=\scal[[]{A}{B}^{I_0J_0}$,
thus concluding the proof.
\end{proof}

A Lie supergroup over $S$ is a group object in the category of
supermanifolds over $S$ and, as such, possesses an $\mO_S$-Lie
superalgebra (sheaf) which is (locally) free. This is detailed
in a forthcoming article by Alldridge and Coulembier. In the present
setting, it is thus natural to conjecture
that $\hol^{\mathrm{C}}_x$ should be free as an $\mO_S$-supermodule.
However, the following example, which resembles Exp. 4 of \cite{Gro14},
shows that this conjecture is false, in general.

\begin{Exp}
Let $M=\bR^{0|1}=(*,\langle\theta\rangle)$, $\mE=\mT M$ and $S=\bR^{0|L}$ with some
$L\geq 2$. Denoting the standard $S$-coordinates by $\hat{\eta}^j$, we define
an $S$-connection on $\mE_S$ by prescribing
$\nabla_{\partial_{\theta}}\partial_{\theta}=\hat{\eta}^1\hat{\eta}^2\theta\partial_{\theta}$.
A short calculation shows that
\begin{align*}
P_{\gamma}^{-1}\circ\scal[R_y]{u}{v}P_{\gamma}[w]=-2\hat{\eta}^1\hat{\eta}^2u^{\theta}v^{\theta}\cdot w
\end{align*}
writing $u=(y^*\partial_{\theta})\cdot u^{\theta}\in(y^*\mE)_{\overline{0}}$
and analogously for $v$.
Any element $C\in\hol^C_*$ is, therefore, of the form $C=\hat{\eta}^1\hat{\eta}^2\cdot\tilde{C}$.
\end{Exp}

\begin{Thm}[Comparison Theorem]
\label{thmGalaevFunctor}
The coefficient holonomy algebra is
Galaev's holonomy superalgebra $\hol_x^\mathrm{C}=\hol_x^{\mathrm{Gal}}$.
\end{Thm}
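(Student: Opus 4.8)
The plan is to prove the two inclusions $\hol_x^{\mathrm{Gal}}\subseteq\hol_x^{\mathrm{C}}$ and $\hol_x^{\mathrm{C}}\subseteq\hol_x^{\mathrm{Gal}}$ separately, the heart of the matter being a dictionary between, on the one hand, the $T$-coefficients of the parallel-transported curvature endomorphisms $P_\gamma^{-1}\circ\scal[R_y]{u}{v}\circ P_\gamma$ that generate $\hol_x(T)$ and, on the other hand, the pulled-back higher covariant derivatives $P_\gamma^{-1}\circ(\nabla^k_{Y_k,\ldots,Y_1}R)_y(u,v)\circ P_\gamma$ that generate $\hol_x^{\mathrm{Gal}}$. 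By Lemma \ref{lemSupermoduleAlgebra} the left-hand side is already a super Lie algebra, so it suffices to match generators. I would fix $T=\bR^{0|L'}$ with generators $\eta^1,\ldots,\eta^{L'}$ and, given an $S$-loop-based generator of $\hol_x(T)$, expand everything in $\eta$-monomials.

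The key technical input, which the introduction advertises, is a homotopy formula for an \emph{odd} derivative of parallel transport: differentiating $P_\gamma$ with respect to an odd parameter (one of the extra Graßmann generators of $T$, or a deformation parameter of the loop $\gamma$) produces, by the standard variation-of-parallel-transport identity, an integral of $P^{-1}\circ R\circ P$ against the variation field, and iterating this brings in the covariant derivatives of $R$. So the first main step is: write an arbitrary generator $P_\gamma^{-1}\circ\scal[R_y]{u}{v}\circ P_\gamma\in\hol_x(T)$ where now $\gamma$, $y$, $u$, $v$ may depend polynomially on the $\eta^i$, Taylor-expand in the $\eta^i$ around their value at the ``body'' loop, and read off that each coefficient matrix $A^I$ is a finite $\mO_S$-linear combination of terms of the shape $P_{\gamma_0}^{-1}\circ(\nabla^k_{Y_k,\ldots,Y_1}R)_{y_0}(u_0,v_0)\circ P_{\gamma_0}$ for the body data $\gamma_0,y_0,u_0,v_0$. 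This requires relating the pullback of $\nabla^kR$ under $y$ to covariant derivatives of the pullback — exactly the second ingredient named in the introduction — and careful bookkeeping of signs coming from the $\bZ_2$-grading in Def. \ref{defHigherCovariantDerivatives}. That gives $\hol_x^{\mathrm{C}}\subseteq\hol_x^{\mathrm{Gal}}$ once one checks the body data are admissible $S$-data.

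For the reverse inclusion $\hol_x^{\mathrm{Gal}}\subseteq\hol_x^{\mathrm{C}}$ I would show that every generator $P_{\gamma}^{-1}\circ(\nabla^k_{Y_k,\ldots,Y_1}R)_y(u,v)\circ P_{\gamma}$ arises as \emph{some} $T$-coefficient $A^I$ of a suitably engineered element of $\hol_x(T)$: one enlarges $T$ by $k$ (or so) extra odd generators $\zeta^1,\ldots,\zeta^k$, builds a $\zeta$-deformed loop $\gamma_\zeta$ and $\zeta$-deformed insertions realising $Y_1,\ldots,Y_k$ as the relevant variation fields, and then the coefficient of $\zeta^1\cdots\zeta^k$ in $P_{\gamma_\zeta}^{-1}\circ\scal[R_{y_\zeta}]{u}{v}\circ P_{\gamma_\zeta}$ is, by the same homotopy formula read in the other direction, precisely the desired higher covariant derivative (up to a nonzero $\mO_S$-scalar and lower-order terms already known to lie in $\hol_x^{\mathrm{C}}$, which can be subtracted off inductively on $k$). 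The base case $k=0$ is just the definition of $\hol_x(T)$. Since $\hol_x^{\mathrm{Gal}}$ is generated as a super Lie algebra by these operators and $\hol_x^{\mathrm{C}}$ is a super Lie algebra, this finishes the proof.

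The main obstacle I expect is the homotopy formula for odd derivatives of parallel transport and its iteration: making precise which variation fields appear, getting all the Koszul signs right (the grading conventions in Def. \ref{defHigherCovariantDerivatives} are delicate), and handling the interchange of the auxiliary connection on $\mT M$ with the connection $\nabla$ on $\mE$ when pulling back $\nabla^kR$ along an $S$-point $y$. A secondary subtlety is the induction that strips off lower-degree contributions in the reverse inclusion, which needs the fact — already used in Lemma \ref{lemSupermoduleAlgebra} — that $\hol_x^{\mathrm{C}}$ is closed under multiplication by even monomials and under the bracket, so that partial expansions can be corrected term by term without leaving the algebra.
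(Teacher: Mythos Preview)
Your plan matches the paper's proof closely: both directions rest on the odd-derivative formula for parallel transport (Prp.~\ref{prpEtaPgamma} and Cor.~\ref{corPartialEtaCurvature}) and on translating between iterated pullback derivatives $(y^*\nabla)_{\partial_{\eta^{i_l}}}\cdots(y^*\nabla)_{\partial_{\eta^{i_1}}}R_y$ and pulled-back higher covariant derivatives $(\nabla^kR)_y$ (Lem.~\ref{lemHolk} and Lem.~\ref{lemHolk2}), organised through an auxiliary filtration $\hol_x(T)^{(k)}$ (Def.~\ref{defHolk}) and an induction on $k$ (Cor.~\ref{corEtaDerivatives}).

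The one place your sketch is too loose is the reverse inclusion, where you propose to ``enlarge $T$ by $k$ (or so) extra odd generators'' and deform along them. When the covariant derivative $\nabla_{Y_j}$ is in an \emph{even} direction $\partial_{x^j}$, a single odd $\zeta$ cannot realise it: you need a product of two fresh odd generators to produce an even deformation of the endpoint, and if the same even direction occurs several times among $Y_1,\ldots,Y_k$ you need a distinct such pair for each occurrence. The paper handles this with the explicit special $S\times T$-points (\ref{eqnGalaevYGeneralised}), which deform each odd coordinate $\theta^i$ by a single $\eta^i$ but deform each even coordinate $x^j$ by a sum of $k$ disjoint quadratic terms $\eta^{d_1+n(2d_0)+2j-1}\eta^{d_1+n(2d_0)+2j}$; this is precisely what makes Lem.~\ref{lemHolk2} go through and is the substance behind your ``(or so)''. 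The inductive subtraction of lower-order terms you outline is indeed how the argument closes, and the final step --- forcing $u,v$ to be even by tensoring with one more odd generator and differentiating --- is also needed and is mentioned explicitly in the paper's proof.
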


Technically, the Comparison Theorem is the main result of this article.
A partial proof of one implication, in the case $S=\bR^{0|0}$,
was already given in Sec. 4 of \cite{Gro14}.
This was done by suitably expressing covariant
derivatives up to second order by infinitesimal parallel transport and,
moreover, considering special $S\times T$-points of $M$ ((20) in that reference).
We are convinced that the argument can be generalised in that,
by some inductive proof, higher covariant derivatives to any order
should be expressible by means of parallel transport.
While such a statement would certainly be of independent interest,
already calculations at low orders involve some technical complexity.

For this reason, we shall provide a different proof, which we defer
to Sec. \ref{secCoefficientAlgebra} below.
It is based on a formula for an odd derivative of parallel transport, when
interpreted as a corresponding homotopy, and a precise study of the
different pullback derivatives involved.

For further comparison of the two theories of super holonomy,
it is most natural to
relate the functors $\Hol_x$ and $\hol_x$ to the functors of points
of $\Hol_x^{\mathrm{Gal}}$ and $\hol_x^{\mathrm{Gal}}$, respectively.
It will be helpful to consider all functors occurring as subfunctors of the functor
of points of $\fgl_{\rk\mE}(\mO_S)$ to be described next.

Consider $\fgl_{n|m}(\mO_S)$ as a real super vector space, with
$\bZ_2$-grading induced by the supermatrix grading and the natural
grading on $\mO_S$. With respect to a choice of generators
$\hat{\eta}^1,\ldots,\hat{\eta}^L$ of $\mO_S$, a natural basis is given by
the matrices $(E^{lmI})$ with entries
$(E^{lmI})_{kn}:=\delta^l_k\delta^m_n\cdot\hat{\eta}^I$, where
$1\leq l,m\leq n+m$, $I$ is a multiindex, and the parity of $E^{lmI}$
is given by $\abs{l}+\abs{m}+\abs{I}$.
Let $\fgl_{n|m}^S$ denote the supermanifold corresponding to that
super vector space.
Moreover, let $GL_{n|m}^S$ denote the open subsupermanifold
such that the points in the underlying manifold
$(\fgl_{n|m}^S)_{\overline{0}}$ are invertible as supermatrices.
It is known that this condition is equivalent to invertibility of the matrix
arising from projecting away the $\mO_S$-generators.

\begin{Lem}
\label{lemGLS}
The functors of points satisfy
$\fgl_{n|m}^S(T)\cong\fgl_{n|m}(\mO_{S\times T})_{\overline{0}}$ (as Lie algebras)
and $GL_{n|m}^S(T)\cong GL_{n|m}(\mO_{S\times T})$ (as Lie groups).
Moreover, the super Harish-Chandra pair of $GL_{n|m}^S$ is given by
$(GL_{n|m}(\mO_S),\fgl_{n|m}(\mO_S))$ together with the usual adjoint action.
\end{Lem}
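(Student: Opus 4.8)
The plan is to verify the three claimed isomorphisms essentially by unwinding the definitions, since $\fgl_{n|m}^S$ and $GL_{n|m}^S$ were constructed precisely as the (super)manifolds attached to a concrete super vector space with an explicit basis. First I would recall that for an affine supermanifold $N$ modelled on a super vector space $V$, the $T$-points are $N(T)=(V\otimes_{\bR}\mO_T)_{\overline 0}$ (global sections of the structure sheaf pulled back to $T=\bR^{0|L'}$, which for a vector-space supermanifold is just the even part of $V\otimes\mO_T$). Applying this to $V=\fgl_{n|m}(\mO_S)$ with the stated $\bZ_2$-grading gives $\fgl_{n|m}^S(T)=(\fgl_{n|m}(\mO_S)\otimes_{\bR}\mO_T)_{\overline 0}$, and the canonical algebra isomorphism $\mO_S\otimes_{\bR}\mO_T\cong\mO_{S\times T}$ (an isomorphism of $\bZ_2$-graded algebras, since $S\times T=\bR^{0|L+L'}$) identifies this with $\fgl_{n|m}(\mO_{S\times T})_{\overline 0}$. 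I would then check that this identification is an isomorphism of Lie algebras: the bracket on $\fgl_{n|m}^S(T)$ is the one induced from the supermanifold structure, which on a linear supermanifold attached to a Lie superalgebra $\fg$ is the even part of the bracket on $\fg\otimes\mO_T$; under the above identification this is exactly the (even part of the) supercommutator bracket on $\fgl_{n|m}(\mO_{S\times T})$, i.e.\ the commutator on the even part. Matching the explicit basis $(E^{lmI})$ with parity $\abs l+\abs m+\abs I$ against the standard supermatrix-times-monomial basis of $\fgl_{n|m}(\mO_{S\times T})$ makes the computation transparent.

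For the second isomorphism, $GL_{n|m}^S$ is by construction the open subsupermanifold of $\fgl_{n|m}^S$ whose underlying points are the invertible supermatrices, and openness means $GL_{n|m}^S(T)$ is the full preimage in $\fgl_{n|m}^S(T)$ of the open set of invertible points under the reduction map. Using the quoted fact that a supermatrix over a Graßmann algebra is invertible iff its image under projecting away the odd generators is invertible, one sees that $GL_{n|m}^S(T)$ consists of those elements of $\fgl_{n|m}(\mO_{S\times T})_{\overline 0}$ that are invertible supermatrices, which is by definition $GL_{n|m}(\mO_{S\times T})$. That the bijection respects the group structure is immediate because multiplication on $GL_{n|m}^S$ is the restriction of matrix multiplication on $\fgl_{n|m}^S$, which under the first isomorphism is matrix multiplication in $\fgl_{n|m}(\mO_{S\times T})$; functoriality in $T$ (hence Lie-group-valued, not just group-valued, since everything is algebraic/analytic in the $\mO_T$-coefficients) follows from naturality of $\mO_S\otimes\mO_T\cong\mO_{S\times T}$.

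For the third statement, I would apply the standard equivalence between Lie supergroups over $S$ and super Harish-Chandra pairs: the pair attached to a Lie supergroup $G$ is $(G_{\overline 0},\mathrm{Lie}(G))$ with the adjoint action, where $G_{\overline 0}$ is the underlying classical Lie group (the $T=\bR^{0|0}$-points) and $\mathrm{Lie}(G)$ is the $\mO_S$-Lie superalgebra of left-invariant derivations. Here $G_{\overline 0}=GL_{n|m}^S(\bR^{0|0})=GL_{n|m}(\mO_S)$ by the second isomorphism, and $\mathrm{Lie}(GL_{n|m}^S)=\fgl_{n|m}^S(\bR^{0|1})_{\text{odd part}}\oplus\dots$; more cleanly, since $GL_{n|m}^S$ is an open subsupermanifold of the linear supermanifold on $\fgl_{n|m}(\mO_S)$, its Lie superalgebra is canonically $\fgl_{n|m}(\mO_S)$ with its supercommutator bracket, exactly as in the classical computation of $\mathrm{Lie}(GL_n)=\fgl_n$. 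The adjoint action of $GL_{n|m}(\mO_S)$ on $\fgl_{n|m}(\mO_S)$ is conjugation of supermatrices, which is the usual one. I do not anticipate a serious obstacle here; the only point requiring care is bookkeeping of parities and the precise form of the Lie bracket induced from the supermanifold structure on a linear supermanifold — i.e.\ making sure the ``even part of $\fg\otimes\mO_T$'' description of $T$-points is compatible with the chosen sign conventions for supermatrices and for the grading on $\mO_S$ — and checking that all identifications are natural in $T$ so that they upgrade from set maps to morphisms of Lie groups.
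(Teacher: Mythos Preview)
Your proposal is correct and follows essentially the same approach as the paper's proof, which is very terse: it invokes the general formula $V(T)\cong(V\otimes\mO_T)_{\overline{0}}$ for linear supermanifolds, the quoted invertibility criterion for supermatrices, and then says the Harish-Chandra description follows from these identifications. Your write-up simply spells out the details (the identification $\mO_S\otimes\mO_T\cong\mO_{S\times T}$, compatibility of brackets and group laws, naturality in $T$) that the paper leaves implicit.
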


\begin{proof}
The first assertion is clear by the general formula
$V(T)\cong(V\otimes\mO_T)_{\overline{0}}$, where $V$ on the left hand side
is the supermanifold associated to a super vector space (or super Lie algebra),
also denoted by $V$ on the right hand side. The corresponding statement
for the Lie groups holds by the characterisation of invertibility
of a supermatrix mentioned above.
With these identifications, the Harish-Chandra characterisation
is also established.
\end{proof}

While the first part of Lem. \ref{lemGLS} is valid for $T$ in the category
$\mathrm{SMan}$ of all supermanifolds,
we will use this result for the subcategory $\mP$ of superpoints $T=\bR^{0|L'}$.

Let $F_1$ and $F_2$ be functors to the category $\mathrm{Sets}$.
$F_1$ is said to be a subfunctor of $F_2$ if
$F_1(T)\subseteq F_2(T)$ for every object $T$,
and for every morphism $\varphi:T\rightarrow S$,
the morphism $F_1(\varphi)$ arises by restriction from $F_2(\varphi)$.
This is true for the functors arising in the context of holonomy as
summarised in the following lemma.

\begin{Lem}
\label{lemSubfunctors}
The functors $\hol_x$ and $\hol_x^{\mathrm{Gal}}$ are subfunctors of
(the functor of points of) $\fgl_{\rk\mE}^S$.
Similarly, $\Hol_x$ and $\Hol_x^{\mathrm{Gal}}$ are subfunctors of
$GL_{\rk\mE}^S$ (and, as such, also subfunctors of $\fgl_{\rk\mE}^S$).
\end{Lem}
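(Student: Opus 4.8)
The plan is to verify, for each of the four functors, the two defining properties of a subfunctor (pointwise containment, and compatibility of the transition morphisms), reducing everything to the identifications of Lem.~\ref{lemGLS}. I would begin by fixing a frame of the free $\mO_S$-supermodule $x^*\mE$ of rank $\rk\mE$. Base change along the projection $S\times T\to S$ turns it into a frame of $x_T^*\mE$ --- which is globally free of rank $\rk\mE$ since $S\times T$ is a superpoint --- and these frames are mutually compatible under the morphisms induced by morphisms $\varphi$ of superpoints. Hence, for every $T$, one obtains an isomorphism $\End_{\mO_{S\times T}}(x_T^*\mE)\cong\fgl_{\rk\mE}(\mO_{S\times T})$ compatible with base change and, via the identification $\fgl_{\rk\mE}(\mO_{S\times T})_{\overline{0}}\cong\fgl_{\rk\mE}^S(T)$ of Lem.~\ref{lemGLS}, with the transition maps of $\fgl_{\rk\mE}^S$; under it the even invertible endomorphisms correspond to $GL_{\rk\mE}(\mO_{S\times T})\cong GL_{\rk\mE}^S(T)$.

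For the functorial holonomy objects, the Ambrose-Singer description (\ref{eqnHolonomyAlgebra}) shows that $\hol_x(T)=\hol_{x_T}$ is generated by the \emph{even} endomorphisms $P_\gamma^{-1}\circ\scal[R_y]{u}{v}\circ P_\gamma$ of $x_T^*\mE$, so $\hol_x(T)\subseteq\fgl_{\rk\mE}(\mO_{S\times T})_{\overline{0}}\cong\fgl_{\rk\mE}^S(T)$; likewise the parallel transports $P_\gamma$ generating $\Hol_x(T)$ are even and invertible, whence $\Hol_x(T)\subseteq GL_{\rk\mE}^S(T)$. For the transition morphisms, the functor structures of $\hol_x$ and $\Hol_x$ are, by their construction in \cite{Gro14}, induced by pulling back the data involved ($S$-paths and their parallel transports, and the curvature $R$) along $\mathrm{id}_S\times\varphi$; under the frame identification above this pullback is precisely the base-change map, that is $\fgl_{\rk\mE}^S(\varphi)$, resp.\ $GL_{\rk\mE}^S(\varphi)$. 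Hence $\hol_x$ and $\Hol_x$ are subfunctors of $\fgl_{\rk\mE}^S$, resp.\ of $GL_{\rk\mE}^S$.

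For the Galaev objects, $\hol_x^{\mathrm{Gal}}$ is, by Def.~\ref{defGalaevAlgebra}, a graded subspace of the real super vector space $\fgl_{\rk\mE}(\mO_S)$; the functor of points of the associated linear supermanifold, $T\mapsto(\hol_x^{\mathrm{Gal}}\otimes_{\bR}\mO_T)_{\overline{0}}$, is contained in $(\fgl_{\rk\mE}(\mO_S)\otimes_{\bR}\mO_T)_{\overline{0}}=\fgl_{\rk\mE}^S(T)$, and its transition maps, being base change, restrict from those of $\fgl_{\rk\mE}^S$. Similarly, the super Harish-Chandra pair $((\Hol_x^{\mathrm{Gal}})_{\overline{0}},\hol_x^{\mathrm{Gal}})$ defining $\Hol_x^{\mathrm{Gal}}$ sits inside the super Harish-Chandra pair $(GL_{\rk\mE}(\mO_S),\fgl_{\rk\mE}(\mO_S))$ of $GL_{\rk\mE}^S$ (Lem.~\ref{lemGLS}), the group part via the injective immersion $(\Hol_x^{\mathrm{Gal}})_{\overline{0}}\hookrightarrow GL_{\rk\mE}(\mO_S)$ and the algebra part via the injection $\hol_x^{\mathrm{Gal}}\hookrightarrow\fgl_{\rk\mE}(\mO_S)$; passing to super Lie groups exhibits $\Hol_x^{\mathrm{Gal}}$ as a super Lie subgroup of $GL_{\rk\mE}^S$, and since an injective immersion is a monomorphism the induced map $\Hol_x^{\mathrm{Gal}}(T)\to GL_{\rk\mE}^S(T)$ is injective and natural in $T$, i.e.\ $\Hol_x^{\mathrm{Gal}}$ is a subfunctor of $GL_{\rk\mE}^S$. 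Finally, $GL_{\rk\mE}^S$ is by definition an open subsupermanifold of $\fgl_{\rk\mE}^S$, so every subfunctor of $GL_{\rk\mE}^S$ is in particular a subfunctor of $\fgl_{\rk\mE}^S$, which yields the parenthetical assertions.

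I do not expect a genuine obstacle. The only step requiring one to unwind definitions rather than merely quote Lem.~\ref{lemGLS} is the verification that the transition morphisms of $\hol_x$ and $\Hol_x$ agree with $\fgl_{\rk\mE}^S(\varphi)$ and $GL_{\rk\mE}^S(\varphi)$, i.e.\ the naturality of parallel transport and of the curvature under the change-of-generators morphisms $\mathrm{id}_S\times\varphi$; this is routine --- it is part of the functorial setup of \cite{Gro14} --- but it is where the bookkeeping sits.
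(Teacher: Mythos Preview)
The paper states this lemma without proof; it is treated as a routine verification and the text moves directly on to Prp.~\ref{prpGalaevFunctor}. Your proposal is correct and supplies exactly the details the paper suppresses: pointwise containment via the identifications of Lem.~\ref{lemGLS}, and compatibility of transition morphisms via naturality of pullback under $\mathrm{id}_S\times\varphi$. The one place where you lean on a nontrivial fact---that a sub--Harish-Chandra pair yields inclusions $\Hol_x^{\mathrm{Gal}}(T)\subseteq GL_{\rk\mE}^S(T)$ for all $T$---is the same fact the paper later invokes (with a reference to Prp.~8.4.7 of \cite{CCF11}) in the proof of Thm.~\ref{thmTwofold}, so you are on solid ground there as well.
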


As the first corollary of the Comparison Theorem, Thm. \ref{thmGalaevFunctor},
we establish important inclusions.

\begin{Prp}
\label{prpGalaevFunctor}
For every $T=\bigwedge\bR^L$, there are canonical inclusions
\begin{align*}
\hol_x(T)\subseteq
(\hol_x^{\mathrm{Gal}}\otimes\mO_T)_{\overline{0}}=\hol_x^{\mathrm{Gal}}(T)
\;,\quad
\Hol_x(T)\subseteq\Hol_x^{\mathrm{Gal}}(T)
\end{align*}
of Lie algebras and Lie groups, respectively, which are functorial in $T$
and thus induce natural transformations.
\end{Prp}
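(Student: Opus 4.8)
The plan is to deduce Proposition~\ref{prpGalaevFunctor} from the Comparison Theorem (Thm.~\ref{thmGalaevFunctor}) together with Lemma~\ref{lemGLS} and the behaviour of the coefficient construction under change of auxiliary Graßmann generators. First I would treat the Lie algebra inclusion. By Lemma~\ref{lemGLS}, $\hol_x^{\mathrm{Gal}}(T)=(\hol_x^{\mathrm{Gal}}\otimes\mO_T)_{\overline 0}$, and by the Comparison Theorem $\hol_x^{\mathrm{Gal}}=\hol_x^{\mathrm C}$, so it suffices to show $\hol_x(T)\subseteq(\hol_x^{\mathrm C}\otimes\mO_T)_{\overline 0}$. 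Fix $T=\bigwedge\bR^{L'}$ and an element $A=\sum_I A^I\eta^I\in\hol_x(T)$. Since $A$ is even as an element of $\fgl_{\rk\mE}(\mO_{S\times T})$, each coefficient $A^I$ has parity $\overline{\abs I}$, and by the very Definition~\ref{defCoefficientHolonomy} of the coefficient holonomy algebra we have $A^I\in(\hol_x^{\mathrm C})_{\overline{\abs I}}$. Hence $A=\sum_I A^I\otimes\eta^I$ lies in $(\hol_x^{\mathrm C}\otimes\mO_T)_{\overline 0}$, which gives the first inclusion; the identification of multiindices $\eta^i_{L'}=\eta^i_{L''}$ explained before Definition~\ref{defCoefficientHolonomy} is exactly what makes this assignment canonical (independent of the ambient $L''$).

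Next I would check functoriality in $T$. A morphism $\varphi:T_1\to T_2$ in $\mP$ is a homomorphism of Graßmann algebras $\varphi^*:\mO_{T_2}\to\mO_{T_1}$; both $\hol_x$ and $\hol_x^{\mathrm{Gal}}$, being subfunctors of $\fgl_{\rk\mE}^S$ by Lemma~\ref{lemSubfunctors}, act on elements by applying $\mathrm{id}\otimes\varphi^*$ to the coefficients. Since the inclusion just constructed is nothing but the identity map on the underlying matrix with entries in $\mO_{S\times T}$, it visibly commutes with $\fgl_{\rk\mE}^S(\varphi)=\mathrm{id}\otimes\varphi^*$; so the family of inclusions is a natural transformation $\hol_x\Rightarrow\hol_x^{\mathrm{Gal}}$.

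For the group-level inclusion $\Hol_x(T)\subseteq\Hol_x^{\mathrm{Gal}}(T)$, I would argue at the level of Harish-Chandra pairs and one-parameter subgroups. The functor of points of $\Hol_x^{\mathrm{Gal}}$ is the group functor of the super Lie group attached to the pair $((\Hol_x^{\mathrm{Gal}})_{\overline 0},\hol_x^{\mathrm{Gal}})$. First, $(\Hol_x)_{\overline 0}=\Hol_x(\bR^{0|0})$ is by construction one of the generators of $(\Hol_x^{\mathrm{Gal}})_{\overline 0}$, so the classical part is handled. For general $T$, an element of $\Hol_x(T)$ is a product $P_{\gamma}$ with $\gamma$ an $S\times T$-loop; writing $P_\gamma$ in exponential coordinates and expanding in the odd generators $\eta^i$ of $\mO_T$, its even-degree part lies in $(\Hol_x)_{\overline 0}$ and the nilpotent correction terms are exponentials of elements of $(\hol_x(T))\subseteq(\hol_x^{\mathrm{Gal}}\otimes\mO_T)_{\overline 0}$ by the Lie algebra inclusion already proved. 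Comparing this with the standard description (via Lemma~\ref{lemGLS}, $GL_{\rk\mE}^S(T)\cong GL_{\rk\mE}(\mO_{S\times T})$) of the $T$-points of the super Lie group generated by $(\Hol_x^{\mathrm{Gal}})_{\overline 0}$ and $\exp(\hol_x^{\mathrm{Gal}}\otimes\mO_T)_{\overline 0}$ yields the inclusion, and functoriality follows as above since everything is expressed through $\mathrm{id}\otimes\varphi^*$.

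The main obstacle I anticipate is the group-level step: making precise that every $T$-point of $\Hol_x$ decomposes into a ``body'' part in $(\Hol_x)_{\overline 0}$ times an exponential of an element of $\hol_x(T)$, and then matching this against the functor of points of the super Lie group produced from the Harish-Chandra pair. This requires knowing that $\Hol_x(T)$, as a subgroup of $GL_{\rk\mE}(\mO_{S\times T})$, is generated in the expected way by its identity component and its classical reduction, and it uses that the odd generators of $T$ are nilpotent so that the correction factors are genuine exponentials. The Lie algebra inclusion, by contrast, is essentially immediate once the Comparison Theorem is in hand.
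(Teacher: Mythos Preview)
Your Lie algebra argument is correct and is exactly what the paper does (the paper compresses it to the single sentence ``direct corollary of Thm.~\ref{thmGalaevFunctor}'', but your unpacking via Definition~\ref{defCoefficientHolonomy} is the intended content). Functoriality is likewise handled the same way.

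The group-level step, however, has a genuine gap, and you have correctly flagged it in your last paragraph without resolving it. Your claim that ``the nilpotent correction terms are exponentials of elements of $\hol_x(T)$'' is precisely what needs proof: given $P_\gamma\in\Hol_x(T)$ with body $P_{\tilde\gamma}\in(\Hol_x)_{\overline 0}$, the element $P_\gamma\cdot P_{\tilde\gamma}^{-1}$ certainly has the form $1+N$ with $N$ nilpotent, so $\log(1+N)$ exists in $\fgl_{\rk\mE}(\mO_{S\times T})_{\overline 0}$, but nothing in your argument forces this logarithm to lie in $\hol_x(T)$ rather than just in the ambient $\fgl$. Equivalently, you need $P_\gamma\cdot P_{\tilde\gamma}^{-1}$ to lie in the identity component $\Hol_x(T)^0$, and expanding in odd generators does not give you that for free.

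The paper closes this gap geometrically rather than algebraically. Let $\tilde\gamma$ be the $S$-loop obtained from $\gamma$ by composing with the canonical inclusion $S\hookrightarrow S\times T$. There is an obvious homotopy of $S\times T$-loops from $\tilde\gamma$ (regarded as an $S\times T$-loop) to $\gamma$, and parallel transport along this family yields a continuous path in $\Hol_x(T)$ from $P_{\tilde\gamma}$ to $P_\gamma$. Hence $P_\gamma\cdot P_{\tilde\gamma}^{-1}\in\Hol_x(T)^0$. Now the Lie algebra inclusion $\hol_x(T)\subseteq\hol_x^{\mathrm{Gal}}(T)$ and standard Lie theory for connected subgroups of $GL_{\rk\mE}(\mO_{S\times T})$ give $\Hol_x(T)^0\subseteq\Hol_x^{\mathrm{Gal}}(T)^0$, while $P_{\tilde\gamma}\in(\Hol_x)_{\overline 0}\subseteq(\Hol_x^{\mathrm{Gal}})_{\overline 0}\subseteq\Hol_x^{\mathrm{Gal}}(T)$, and the product lands where it should. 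This homotopy is the missing ingredient in your proposal.
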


\begin{proof}
The first inclusion is a direct corollary of Thm. \ref{thmGalaevFunctor}.

$\hol_x(T)$ and $\hol_x^{\mathrm{Gal}}(T)$ are both Lie subalgebras
of $\fgl_{\rk\mE}(\mO_{S\times T})$.
By standard Lie group theory (cf. Chp. 2 of \cite{GOV97}),
the connected components of the corresponding holonomy groups thus satisfy
\begin{align*}
\Hol_x(T)^0\subseteq\Hol_x^{\mathrm{Gal}}(T)^0\subseteq
GL_{\rk\mE}(\mO_{S\times T})
\end{align*}
Every element $P_{\gamma}\in\Hol_x(T)$ comes with an $S\times T$-path
$\gamma:S\times T\times[0,1]\rightarrow M$, which may be homotoped to
the underlying $S$-path $\tilde{\gamma}$ obtained from $\gamma$ by
the standard inclusion $S\rightarrow S\times T$.
This homotopy gives rise to a path $P_{\tilde{\gamma}}\rightarrow P_{\gamma}$.
In particular, the composition
$P:=P_{\gamma}\cdot P_{\tilde{\gamma}}^{-1}$ is connected to the identity
and, therefore, contained in $\Hol_x^{\mathrm{Gal}}(T)^0$.
Since, moreover,
$P_{\tilde{\gamma}}\in(\Hol_x)_{\overline{0}}\subseteq\Hol_x^{\mathrm{Gal}}(T)$,
it follows that $P_{\gamma}=P\cdot P_{\tilde{\gamma}}\in\Hol_x^{\mathrm{Gal}}(T)$,
which proves the second inclusion.

The various pullbacks with respect to a morphism $\varphi:T'\rightarrow T$
are all standard, which shows functoriality.
\end{proof}

The inclusion of Prp. \ref{prpGalaevFunctor} is, in general, proper,
as can be seen e.g. in Exp. 4 of \cite{Gro14}. Nevertheless, the holonomy
group functor $\Hol_x(T)$ still contains all relevant information,
to be detailed henceforth.

Reconsider Def. \ref{defCoefficientHolonomy} of $\hol_x^{\mathrm{C}}$
as the set of coefficients with respect to arbitrary $T=\bR^{0|L}$:
From the form (\ref{eqnHolonomyAlgebra}), it is clear that small
$T$ are irrelevant in the sense that $A\in\hol_x(T)$ implies
$A\in\hol_x(T')$ for $T'=\bR^{0|L'}$ with $L'\geq L$.
Therefore, coefficients with respect to $T$ are coefficients with respect
to $T'$. Since $\fgl(n|m,\mO_S)$ is finite-dimensional,
this inclusion stabilises.
For any fixed sufficiently large $T$, we thus obtain
\begin{align}
\label{eqnSufficientlyLarge}
(\hol^{\mathrm{C}}_x)_{\overline{i}}:=\{h\setsep
\exists A=\sum_IA^I\cdot\eta^I\in\hol_x(T)\,,\;
\exists I\,:\;\abs{I}=\overline{i}\,,\;h=A^I\}
\end{align}
Similarly, it is clear that $A\cdot\eta^J\in\hol_x(T)$ for $A\in\hol_x(T)$
and an even monomial $\eta^J\in\mO_T$. It follows that all elements
of $\hol^{\mathrm{C}}_x$ occur as coefficients of monomials of sufficiently
large degree. This observation can be formulated in the following form.

\begin{Prp}
\label{prpSufficientlyLarge}
Writing $\mO_T$ as a direct sum
$\mO_T=(\mO_T)_{\mathrm{deg}\leq N}\oplus(\mO_T)_{\mathrm{deg}>N}$
for some $N<L$, we obtain
\begin{align*}
\hol_x(T)=(\hol_x(T))_{\mathrm{deg}\leq N}\oplus(\hol_x^{\mathrm{Gal}}\otimes(\mO_T)_{\mathrm{deg}>N})_{\overline{0}}\subseteq(\hol_x^{\mathrm{Gal}}\otimes\mO_T)_{\overline{0}}
\end{align*}
provided that both $N$ and $L$ are sufficiently large.
\end{Prp}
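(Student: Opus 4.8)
The plan is to reduce the statement to the single inclusion
$V_{>N}:=\left(\hol^{\mathrm{Gal}}_x\otimes(\mO_T)_{\mathrm{deg}>N}\right)_{\overline 0}\subseteq\hol_x(T)$,
and to prove the latter. By Prp.~\ref{prpGalaevFunctor} we already have
$\hol_x(T)\subseteq\left(\hol^{\mathrm{Gal}}_x\otimes\mO_T\right)_{\overline 0}=V_{\leq N}\oplus V_{>N}$,
and $\hol_x(T)$ is an $\bR$-subspace of the right hand side closed under multiplication
by even monomials of $\mO_T$. Granting the inclusion, any $A\in\hol_x(T)$ splits as
$A=A'+A''$ with $A'\in V_{\leq N}$ and $A''\in V_{>N}\subseteq\hol_x(T)$, so
$A'=A-A''\in\hol_x(T)\cap V_{\leq N}=:(\hol_x(T))_{\mathrm{deg}\leq N}$; the asserted direct sum
decomposition follows at once, the summand $(\hol_x(T))_{\mathrm{deg}\leq N}$ being unambiguous
since $V_{>N}\subseteq\hol_x(T)$ forces its description as an intersection and as a projection to
agree. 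As $\hol^{\mathrm{Gal}}_x$ is finite-dimensional over $\bR$, fixing an $\bR$-basis reduces
the inclusion $V_{>N}\subseteq\hol_x(T)$ further to showing that
$h\otimes\eta^K\in\hol_x(T)$ for every basis vector $h$ and every monomial $\eta^K$ with
$|K|>N$ and $|K|\equiv|h|$.

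First I would filter $\hol^{\mathrm{Gal}}_x$ by \emph{realisation degree}: for $k\geq 0$, let
$\mathcal{F}_k$ consist of those $g\in\hol^{\mathrm{Gal}}_x$ that arise as a coefficient $g=A^I$
of some $A\in\hol_x(T')$ ($T'$ auxiliary, as large as needed) with $|I|\leq k$. By the Comparison
Theorem (Thm.~\ref{thmGalaevFunctor}), $\hol^{\mathrm{Gal}}_x=\hol^{\mathrm{C}}_x$ is precisely the
union of the $\mathcal{F}_k$, so every basis vector lies in some $\mathcal{F}_k$; fix $k_0$ large
enough that all of them lie in $\mathcal{F}_{k_0}$. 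Two features of this filtration drive the
argument. First, $\mathcal{F}_0=\hol_x(\bR^{0|0})$, which embeds into $\hol_x(T)$ as the constant
elements by functoriality along the structure morphism $T\to\bR^{0|0}$. Second, if $h=C^{L_0}$ is
written as a coefficient of \emph{minimal} degree $k=|L_0|$, then every \emph{other} nonzero
coefficient $C^{I'}$ of $C$ has $|I'|<k$ and hence lies in $\mathcal{F}_{k-1}$.

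I would then prove, by induction on $k$, the statement $Q(k)$: for every $h\in\mathcal{F}_k$ and
every monomial $\eta^K$ with $|K|>\tau_k:=N-(k_0-k)k_0$ and $|K|\equiv|h|$, one has
$h\otimes\eta^K\in\hol_x(T)$. For $k=0$ this holds for all such $\eta^K$, since then
$h\otimes\eta^K=h\cdot\eta^K$ is an even-monomial multiple of the constant element
$h\in\hol_x(T)$. For the inductive step, take $h\in\mathcal{F}_k\setminus\mathcal{F}_{k-1}$,
realise $h=C^{L_0}$ at degree $k$, and put $B:=C\,\eta^{m_1}\eta^{m_2}\in\hol_x(T')$ for two fresh
generators, so that every monomial in the support of $B$ contains $\eta^{m_1}$ and $\eta^{m_2}$,
and $B^{I_0}=\pm h$ for $I_0:=L_0\cup\{m_1,m_2\}$. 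Pulling $B$ back along the superpoint morphism
$\varphi\colon T\to T'$ whose comorphism sends $\eta^{m_1}$ to the product of a block of
$|K|-|I_0|+1$ of the generators appearing in $\eta^K$, sends the remaining $|I_0|-1$ generators of
$I_0$ to the remaining factors of $\eta^K$, and sends every other generator to $0$, produces an
element $\hol_x(\varphi)(B)\in\hol_x(T)$ equal to $\pm h\otimes\eta^K$ plus a sum of ``junk''
terms. Since $m_1\in I_0$, only monomials containing $\eta^{m_1}$ survive, so each junk term has
the shape $\pm C^{I'}\otimes\eta^{K_\beta}$ with $I'\subsetneq L_0$ — hence
$C^{I'}\in\mathcal{F}_{k-1}$ — and $|K_\beta|\in\{|K|-k,\dots,|K|-1\}$. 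As $\tau_k-k\geq\tau_{k-1}$
and $|K|>\tau_k$, the hypothesis $Q(k-1)$ puts every junk term into $\hol_x(T)$, whence
$h\otimes\eta^K\in\hol_x(T)$. Applying $Q(k_0)$, whose threshold is $\tau_{k_0}=N$, completes the
reduction of the first paragraph, provided that $N$ — and hence $L>N$ — is large enough for
$|K|-|I_0|+1$ to remain positive throughout the induction, e.g.\ $N\geq k_0^2$.

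The step I expect to be the main obstacle, and the one where the hypothesis that $N$ and $L$ be
sufficiently large is actually used, is the control of the junk terms
$C^{I'}\otimes\eta^{K_\beta}$: one must simultaneously make their coefficients drop exactly one
level of the filtration — which is why $h$ is realised at minimal degree, forcing $|I'|<k$ — and
keep their $\eta$-degrees above the threshold of the inductive hypothesis — which is why one first
multiplies by $\eta^{m_1}\eta^{m_2}$, forcing $|K_\beta|\geq|K|-k$, and why the thresholds $\tau_k$
are arranged to decrease by the fixed amount $k_0$ at each step. Checking compatibility of these
two demands comes down to the elementary inequalities $\tau_k-k\geq\tau_{k-1}$ and
$|K|-|I_0|+1\geq 1$, which hold once $N$ (hence $L$) is chosen large relative to $k_0$; the
remaining ingredients — functoriality of the pullback along $\varphi$, closure of $\hol_x(T)$ under
even-monomial multiplication, and the identification $\mathcal{F}_0=\hol_x(\bR^{0|0})$ — are routine
from the results already established.
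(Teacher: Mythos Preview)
Your argument is correct and runs on exactly the two ingredients the paper isolates in the paragraph immediately preceding the proposition: closure of $\hol_x(T)$ under multiplication by even monomials of $\mO_T$, and functoriality along superpoint morphisms. Where the paper simply asserts that ``all elements of $\hol_x^{\mathrm{C}}$ occur as coefficients of monomials of sufficiently large degree'' and treats the direct-sum statement as an immediate reformulation, you supply the missing step---namely the inclusion $(\hol_x^{\mathrm{Gal}}\otimes(\mO_T)_{\deg>N})_{\overline 0}\subseteq\hol_x(T)$---via an explicit induction on the realisation degree together with a carefully designed pullback $\varphi$; this is a genuine addition of detail rather than a different route.

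One small inaccuracy in your exposition: the claim that for a minimal-degree realisation $h=C^{L_0}$ ``every other nonzero coefficient $C^{I'}$ of $C$ has $|I'|<k$'' is false as stated---nothing prevents $C$ from carrying coefficients at multi-indices of degree $\geq k$ lying outside $L_0$. This does not damage your proof, because your pullback $\varphi$ sends every generator outside $I_0$ to zero and hence kills every monomial of $B$ not supported on $I_0$; the surviving junk terms therefore satisfy $I'\subsetneq L_0$ automatically, whence $|I'|<k$ and $C^{I'}\in\mathcal F_{k-1}$, with no hypothesis on the remaining coefficients of $C$. (If you prefer the claim to be literally true, first pull $C$ back along a morphism that relabels $L_0$ as $\{1,\dots,k\}$ and projects to $\bR^{0|k}$; then $C$ is supported on subsets of $L_0$.) Either way, the minimality of $k$ is not actually used---any realisation of $h$ at degree $k$ works.
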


By Prp. \ref{prpGalaevFunctor}, the functor $\Hol_x^{\mathrm{Gal}}$ contains
$\Hol_x$ as a subfunctor. Moreover it is, by definition, representable.
By the following result, it is the smallest one with these properties.

\begin{Prp}
\label{prpSmallestRepresentable}
$\Hol_x^{\mathrm{Gal}}$ is the smallest representable group functor
which contains $\Hol_x$ as a subfunctor.
\end{Prp}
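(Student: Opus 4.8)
The plan is to show that any representable group functor $G$ containing $\Hol_x$ as a subfunctor must contain $\Hol_x^{\mathrm{Gal}}$. Since $\Hol_x^{\mathrm{Gal}}$ is itself representable (being the super Lie group attached to a super Harish-Chandra pair) and contains $\Hol_x$ by Prp. \ref{prpGalaevFunctor}, minimality is the only thing to establish. So let $G$ be a representable group functor, say $G=\underline{\mathcal G}$ for a Lie supergroup $\mathcal G$ over $S$, with $\Hol_x(T)\subseteq G(T)$ naturally in $T$. Because all functors in sight are subfunctors of $GL_{\rk\mE}^S$ (Lem. \ref{lemSubfunctors}), I would arrange that this inclusion of functors is compatible with the inclusion $\Hol_x\hookrightarrow GL_{\rk\mE}^S$, so that $\mathcal G$ may be regarded as a sub-Lie-supergroup of $GL_{\rk\mE}^S$; the key point is then purely infinitesimal and component-wise.

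First I would pass to Lie algebras. A representable group functor has a Lie algebra functor of the form $T\mapsto(\mathfrak g\otimes\mO_{S\times T})_{\overline 0}$ for a super Lie algebra $\mathfrak g\subseteq\fgl_{\rk\mE}(\mO_S)$ (cf. Lem. \ref{lemGLS}), and the inclusion $\Hol_x\subseteq G$ differentiates to $\hol_x(T)\subseteq(\mathfrak g\otimes\mO_{S\times T})_{\overline 0}$ for all $T$. Taking $T=\bigwedge\bR^{L'}$ and extracting coefficients of the generators $\eta^I$ exactly as in Def. \ref{defCoefficientHolonomy}, every coefficient matrix of an element of $\hol_x(T)$ lies in $\mathfrak g$; hence $\hol_x^{\mathrm{C}}\subseteq\mathfrak g$. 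By the Comparison Theorem (Thm. \ref{thmGalaevFunctor}), $\hol_x^{\mathrm{C}}=\hol_x^{\mathrm{Gal}}$, so $\hol_x^{\mathrm{Gal}}\subseteq\mathfrak g$. In particular the even parts satisfy $(\hol_x^{\mathrm{Gal}})_{\overline 0}\subseteq\mathfrak g_{\overline 0}$, and by uniqueness of the connected immersed subgroup integrating a given subalgebra (Chp. 2 of \cite{GOV97}) this gives $(\Hol_x^{\mathrm{Gal}})_{\overline 0}^0\subseteq\mathcal G_{\overline 0}$.

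Next I would handle the disconnected part. By definition $(\Hol_x^{\mathrm{Gal}})_{\overline 0}$ is generated by $(\Hol_x^{\mathrm{Gal}})_{\overline 0}^0$ together with $(\Hol_x)_{\overline 0}=\Hol_x(\bR^{0|0})$; the former sits inside $\mathcal G_{\overline 0}$ by the previous step, and the latter sits inside $G(\bR^{0|0})=\mathcal G_{\overline 0}$ by hypothesis. Since $\mathcal G_{\overline 0}$ is a group, the subgroup they generate, namely $(\Hol_x^{\mathrm{Gal}})_{\overline 0}$, is contained in $\mathcal G_{\overline 0}$. Thus the super Harish-Chandra pair $((\Hol_x^{\mathrm{Gal}})_{\overline 0},\hol_x^{\mathrm{Gal}})$ includes into the super Harish-Chandra pair of $\mathcal G$, compatibly with the adjoint actions (both being restrictions of the adjoint action of $GL_{\rk\mE}^S$). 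Functoriality of the equivalence between super Lie groups and super Harish-Chandra pairs then yields a morphism $\Hol_x^{\mathrm{Gal}}\hookrightarrow\mathcal G=G$ of group functors extending the inclusion $\Hol_x\subseteq G$, which is what we wanted.

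The main obstacle I anticipate is the bookkeeping needed to make "representable group functor containing $\Hol_x$" precise enough that one may legitimately view its representing object as a sub-Lie-supergroup of $GL_{\rk\mE}^S$ and differentiate the inclusion; once that reduction is in place, the argument is essentially the Comparison Theorem plus the Harish-Chandra dictionary. A secondary subtlety is checking that the adjoint action on $\hol_x^{\mathrm{Gal}}$ built into Galaev's Harish-Chandra pair really does agree with the one inherited from $\mathcal G$, which follows because both are restrictions of conjugation in $GL_{\rk\mE}(\mO_S)$, but deserves an explicit remark.
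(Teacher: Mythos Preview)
Your argument is correct and follows essentially the same mechanism as the paper's proof: pass to Lie algebras, use the Comparison Theorem to obtain $\hol_x^{\mathrm{Gal}}\subseteq\mathfrak g$, integrate to get the connected component, then pick up $(\Hol_x)_{\overline 0}$ from the value at $T=\bR^{0|0}$, and conclude via Harish--Chandra pairs.

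Two small differences are worth recording. First, the paper does not invoke coefficient extraction directly but instead cites Prp.~\ref{prpSufficientlyLarge}, which is a repackaged form of the same step; your route is marginally more direct. Second, and more interestingly, the paper only proves \emph{minimality}: it assumes from the outset a sandwich $\Hol_x(T)\subseteq\widetilde{\Hol_x}(T)\subseteq\Hol_x^{\mathrm{Gal}}(T)$ and shows the middle term must equal the right one. You prove the stronger \emph{minimum} statement, namely $\Hol_x^{\mathrm{Gal}}\subseteq G$ for every representable $G$ containing $\Hol_x$, without assuming $G\subseteq\Hol_x^{\mathrm{Gal}}$. The price you pay is exactly the bookkeeping you flag at the end---justifying that $G$ may be treated as a subsupergroup of $GL_{\rk\mE}^S$ so that the inclusion differentiates---whereas the paper's sandwich hypothesis sidesteps this by placing everything inside $GL_{\rk\mE}^S$ from the start via Lem.~\ref{lemSubfunctors}.
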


\begin{proof}
Assume that $\widetilde{\Hol_x}$ is a super Lie group with super
Lie algebra $\widetilde{\hol_x}$ such that
\begin{align*}
\Hol_x(T)\subseteq\widetilde{\Hol_x}(T)\subseteq\Hol_x^{\mathrm{Gal}}(T)
\end{align*}
for every $T$, in a functorial way.
Considering Harish-Chandra pairs we will show that,
in this case, the super Lie groups $\widetilde{\Hol_x}=\Hol_x^{\mathrm{Gal}}$
agree.

By the corresponding inclusions for the Lie algebra valued
functors and Prp. \ref{prpSufficientlyLarge}, we deduce
\begin{align*}
(\hol_x^{\mathrm{Gal}}\otimes(\mO_T)_{\deg>N})_{\overline{0}}
\subseteq(\widetilde{\hol_x}\otimes(\mO_T)_{\deg>N})_{\overline{0}}
\subseteq(\hol_x^{\mathrm{Gal}}\otimes(\mO_T)_{\deg>N})_{\overline{0}}
\end{align*}
It follows that $\hol_x^{\mathrm{Gal}}=\widetilde{\hol_x}$ and, moreover,
the connected Lie groups of the respective even parts coincide.
In particular, $(\widetilde{\Hol_x})_{\overline{0}}$ contains
$(\Hol_x^{\mathrm{Gal}})_{\overline{0}}^0$.
By assumption, it also contains $(\Hol_x)_{\overline{0}}$.
The inclusion $(\Hol_x^{\mathrm{Gal}})_{\overline{0}}\subseteq(\widetilde{\Hol_x})_{\overline{0}}$ becomes immediate. Again by assumption, this is, in fact,
an equality.
\end{proof}

The non-trivial relation between the functors of both holonomy theories
is revealed in Prp. \ref{prpGalaevFunctor}, Prp. \ref{prpSufficientlyLarge}
and Prp. \ref{prpSmallestRepresentable}. We will further show that
both group functor are sheaves in the fppf topology, see Prp. \ref{prpHolSheaves}
below. In particular, $\Hol_x^{\mathrm{Gal}}$ is not the sheafification
of $\Hol_x$, as one might conjecture in light of the previous results.
When it comes to applications, however, both holonomy theories turn out
to be equivalent. This is detailed in the Twofold Theorem, our main result
in this regard.

\begin{Thm}[Twofold Theorem]
\label{thmTwofold}
Let $x:S\rightarrow M$ be an $S$-point and set $\mE_x=x^*\mE$.
Let $T$ be fixed sufficiently large as in (\ref{eqnSufficientlyLarge}).
\begin{enumerate}
\renewcommand{\labelenumi}{(\roman{enumi})}
\item Let $X_x\in\mE_x$. Then $(\Hol_x)_{\overline{0}}\cdot X_x=X_x$
and $\hol_x^{\mathrm{Gal}}\cdot X_x=0$ if and only if
$\Hol_x(T)\cdot X_x=X_x$.
\item Let $\mF_x\subseteq\mE_x$ be a free $\mO_S$-submodule. Then
$(\Hol_x)_{\overline{0}}\cdot\mF_x\subseteq\mF_x$ and
$\hol_x^{\mathrm{Gal}}\cdot\mF_x\subseteq\mF_x$ if and only if
$\Hol_x(T)\cdot\mF_x\subseteq\mF_x$.
\end{enumerate}
In both parts, the respective second condition for one (large) $T$
may be equivalently replaced by the corresponding condition for
\emph{every} $T$.
\end{Thm}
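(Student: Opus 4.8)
The plan is to deduce the Twofold Theorem from the Comparison Theorem (Thm.~\ref{thmGalaevFunctor}) together with Prp.~\ref{prpGalaevFunctor} and Prp.~\ref{prpSufficientlyLarge}. The underlying principle is that invariance of $X_x$ (resp. $\mF_x$) under a super Lie group is equivalent to invariance of $X_x$ under the even part of the group \emph{plus} annihilation of $X_x$ (resp. stability of $\mF_x$) under the super Lie algebra; this is the standard Harish-Chandra dictionary, which applies both to $\Hol_x^{\mathrm{Gal}}$ and, via the functor-of-points description, to $\Hol_x(T)$ acting on $\mE_x\otimes_{\mO_S}\mO_T$.

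First I would spell out the right-hand conditions functorially. For $(\Hol_x^{\mathrm{Gal}})$: an element $X_x\in\mE_x$ is invariant under the supergroup iff it is fixed by $(\Hol_x^{\mathrm{Gal}})_{\overline 0}$ and annihilated by $\hol_x^{\mathrm{Gal}}$; since $(\Hol_x^{\mathrm{Gal}})_{\overline 0}$ is generated by the connected group $(\Hol_x^{\mathrm{Gal}})_{\overline 0}^0$, whose Lie algebra is $(\hol_x^{\mathrm{Gal}})_{\overline 0}$, and by $(\Hol_x)_{\overline 0}$, the condition ``$(\Hol_x^{\mathrm{Gal}})_{\overline 0}\cdot X_x=X_x$'' is equivalent to ``$(\Hol_x)_{\overline 0}\cdot X_x=X_x$ and $(\hol_x^{\mathrm{Gal}})_{\overline 0}\cdot X_x=0$''. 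The latter algebra condition is subsumed in $\hol_x^{\mathrm{Gal}}\cdot X_x=0$. So the left-hand side of (i) is exactly the statement that $X_x\otimes 1$ is $\Hol_x^{\mathrm{Gal}}(T)$-invariant for every $T$. The same reasoning, applied to the adjoint action on submodules, handles (ii): $(\hol_x^{\mathrm{Gal}})_{\overline 0}\cdot\mF_x\subseteq\mF_x$ is implied by $\hol_x^{\mathrm{Gal}}\cdot\mF_x\subseteq\mF_x$, and the connected group generated by this even subalgebra then stabilises $\mF_x$ as well.

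Next, the direction ``left $\Rightarrow$ right'': by Prp.~\ref{prpGalaevFunctor} we have $\Hol_x(T)\subseteq\Hol_x^{\mathrm{Gal}}(T)$, so $\Hol_x^{\mathrm{Gal}}(T)$-invariance of $X_x\otimes 1$ immediately forces $\Hol_x(T)\cdot X_x=X_x$, and likewise for $\mF_x$. For the converse ``right $\Rightarrow$ left'', fix $T$ sufficiently large. The key input is Prp.~\ref{prpSufficientlyLarge}: $\hol_x(T)$ contains $(\hol_x^{\mathrm{Gal}}\otimes(\mO_T)_{\deg>N})_{\overline 0}$. Hence, if $\Hol_x(T)$ fixes $X_x\otimes1$, differentiating along one-parameter subgroups shows $\hol_x(T)\cdot(X_x\otimes1)=0$, so in particular $(h\otimes\eta^J)\cdot(X_x\otimes1)=0=(h\cdot X_x)\otimes\eta^J$ for every $h\in\hol_x^{\mathrm{Gal}}$ and every monomial $\eta^J$ of degree $>N$; choosing a nonzero such $\eta^J$ and using that $\mO_T$ is a free $\mO_S$-module in the relevant degrees gives $h\cdot X_x=0$, i.e. $\hol_x^{\mathrm{Gal}}\cdot X_x=0$. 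The condition $(\Hol_x)_{\overline 0}\cdot X_x=X_x$ is recovered from $\Hol_x(T)\cdot X_x=X_x$ by pulling back along the standard inclusion $S\to S\times T$ (equivalently, $(\Hol_x)_{\overline 0}=\Hol_x(\mathbb R^{0|0})$ is obtained from $\Hol_x(T)$ by the coefficient-of-$1$ projection). For (ii) one argues identically with the adjoint action on $\End_{\mO_S}(\mE_x)$, observing that ``$(h\otimes\eta^J)$ maps $\mF_x\otimes\mO_T$ into itself'' for a monomial $\eta^J$ of top-ish degree forces $h\cdot\mF_x\subseteq\mF_x$ (again by freeness of $\mO_T$ and the fact that multiplication by $\eta^J$ is injective on a complementary-degree piece). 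Finally, the last sentence of the theorem: ``for every $T$'' trivially implies ``for one $T$'', and ``for one large $T$'' implies ``for every $T$'' because for $T'\supseteq T$ the invariance propagates by functoriality (Prp.~\ref{prpGalaevFunctor}) and by the stabilisation of coefficients discussed before~(\ref{eqnSufficientlyLarge}), while for smaller $T'$ it is obtained by pullback.

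The main obstacle I expect is the careful bookkeeping in the passage from ``$\Hol_x(T)$ fixes $X_x$'' to ``$\hol_x^{\mathrm{Gal}}$ annihilates $X_x$'': one must be precise that the coefficient $h=A^I$ of an element $A\in\hol_x(T)$ genuinely acts on $X_x$ the way the tensor $h\otimes\eta^I$ acts on $X_x\otimes1$, handle the $\mathbb Z_2$-grading signs when $|I|$ is odd, and ensure that the freeness of $\mO_T$ in high degrees is used correctly so that a relation $h\cdot X_x\otimes\eta^J=0$ with $\eta^J\neq0$ really yields $h\cdot X_x=0$ over $\mO_S$ (this is where the hypothesis ``$T$ sufficiently large'' and the direct-sum decomposition of $\mO_T$ are essential). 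Everything else is a routine application of the Harish-Chandra correspondence and the already-established structural results.
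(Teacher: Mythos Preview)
Your proposal is correct and follows essentially the same route as the paper: for ``left $\Rightarrow$ right'' both you and the paper pass through the Harish--Chandra pair of $\Hol_x^{\mathrm{Gal}}$ (the paper phrases this as showing the pair is a subpair of the stabiliser supergroup $G_x\subseteq GL_{\rk\mE}^S$, citing Prp.~8.4.7 of \cite{CCF11}) and then invoke Prp.~\ref{prpGalaevFunctor}; for ``right $\Rightarrow$ left'' the paper extracts coefficients via (\ref{eqnSufficientlyLarge}) and applies the Comparison Theorem directly, while you use the equivalent repackaging Prp.~\ref{prpSufficientlyLarge}. One small slip: where you write ``$\mO_T$ is a free $\mO_S$-module'' you mean that $\mO_T$ is free over $\bR$ (equivalently $\mO_{S\times T}$ is free over $\mO_S$), so that a nonzero monomial $\eta^J$ is part of a basis and $(h\cdot X_x)\otimes\eta^J=0$ forces $h\cdot X_x=0$.
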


\begin{proof}
We show part (i) of the theorem and omit the analogous proof of (ii).

Starting with the second implication, assume that $X_x$ satisfies
$\Hol_x(T)\cdot X_x=X_x$.
Since $(\Hol_x)_{\overline{0}}\subseteq\Hol_x(T)$, it follows that
$X_x$ is preserved by this group. Moreover, passing to the Lie algebra
$\hol_x(T)$, it also follows that $\hol_x(T)\cdot X_x=0$.
Considering coefficients (\ref{eqnSufficientlyLarge}), we obtain
$\hol_x^C\cdot X_x=0$, provided that $T$ is sufficiently large.
The implication is then immediate thanks to Thm. \ref{thmGalaevFunctor}.

Conversely, assume that $(\Hol_x)_0\cdot X_x=X_x$
and $\hol_x^{\mathrm{Gal}}\cdot X_x=0$. Being connected, the Lie
group $(\Hol_x^{\mathrm{Gal}})_{\overline{0}}^0$ is generated by
$\exp(\hol_x^{\mathrm{Gal}})$ and thus preserves $X_x$. It follows
that $(\Hol_x^{\mathrm{Gal}})_{\overline{0}}\cdot X_x=X_x$.
Consider the super Lie group $G:=GL_{\rk\mE}^S$ acting naturally on $\mE_x$.
Denoting by $G_x$ the stabiliser super Lie subgroup of $G$ with repect to $X_x$,
the assumptions imply that the super Harish-Chandra pair
$((\Hol^{\mathrm{Gal}}_x)_{\overline{0}},\hol_x^{\mathrm{Gal}})$ of
$\Hol_x^{\mathrm{Gal}}$ is a subpair of the one corresponding to $G_x$.
Therefore, for each $T$,
$\Hol_x^{\mathrm{Gal}}(T)\subseteq G_x(T)$ (cf. Prp. 8.4.7 of
\cite{CCF11}), i.e. $\Hol_x^{\mathrm{Gal}}(T)\cdot X_x=X_x$.
By the inclusion of Prp. \ref{prpGalaevFunctor}, it then follows that
$\Hol_x(T)\cdot X_x=X_x$.
\end{proof}

\section{The Holonomy Functor and its Algebra of Coefficients}
\label{secCoefficientAlgebra}

This section contains a thorough anaysis of the holonomy functor
and forms the technical core of the present article.
We provide a proof of the Comparison Theorem, Thm. \ref{thmIntroComparison} and
Thm. \ref{thmGalaevFunctor} above,
according to the strategy mentioned in the context of its formulation.
Finally, we establish Prp. \ref{prpIntroFPPF} as
Prp. \ref{prpHolSheaves}, stating that all functors
occuring are sheaves in the fppf topology. The proof turns out to be
quite simple, provided that a suitable characterisation of the topology
is taken into account.

\subsection{Proof of the Comparison Theorem}

We begin with the following two lemmas, which relate the pullback of
higher covariant derivatives of a tensor $F\in\Hom_{\mO_{S\times M}}(\mT M_S\otimes_{\mO_{S\times M}}\mT M_S\otimes_{\mO_{S\times M}}\mE_S,\mE_S)_{\overline{0}}$
of curvature type as in Def. \ref{defHigherCovariantDerivatives}
to covariant pullback derivatives of the pullback
$F_y=y^*F$ with respect to a point $y:S\times T\rightarrow M$.
It will be sufficient to consider consecutive applications of
first-order covariant pullback derivatives rather than higher-order ones.
For convenience, we recall the definition (Def. 14 of \cite{Gro14}),
with $u,v\in y^*\mT M$ and $X\in\mT(S\times T)$.
\begin{align*}
(x^*\nabla)_X\scal[F_x]{u}{v}&:=(x^*\nabla)_X\circ\scal[F_x]{u}{v}-\scal[F_x]{(x^*\nabla)_X(u)}{v}\\
&\qquad-(-1)^{\abs{X}\abs{u}}\scal[F_x]{u}{(x^*\nabla)_X(v)}
-(-1)^{\abs{X}(\abs{u}+\abs{v})}\scal[F_x]{u}{v}\circ(x^*\nabla)_X
\end{align*}
The reader should note that this definition,
like Def. \ref{defHigherCovariantDerivatives}, depends
on the choice of an auxiliary connection on $\mT M$.
It will be used with $X=\dd{\eta^i}$ where, in all of the following,
the $\eta^i$ will denote the standard odd generators of $\mO_T$ (not of $\mO_S$),
as in the beginning of this section.

\begin{Lem}
\label{lemHolk}
Let $y:S\times T\rightarrow M$ be an $S\times T$-point.
Then every operator of the form
\begin{align*}
(y^*\nabla)_{\partial_{\eta^{i_k}}}\circ\ldots\circ(y^*\nabla)_{\partial_{\eta^{i_1}}}F_y
\end{align*}
can be written as an $\mO_{S\times T}$-linear combination of operators of the form
\begin{align*}
\left(\nabla^l_{\partial_{\xi^{j_l}},\ldots,\partial_{\xi^{j_1}}}F\right)_y
\end{align*}
with $0\leq l\leq k$ and $\xi=(x,\theta)$ coordinates around $y\in M_0$.
\end{Lem}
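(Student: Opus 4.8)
The plan is to proceed by induction on $k$, the number of covariant pullback derivatives applied. The base case $k=0$ is trivial, since $F_y$ itself is of the desired form with $l=0$. For the inductive step, I would assume that
\begin{align*}
(y^*\nabla)_{\partial_{\eta^{i_{k-1}}}}\circ\ldots\circ(y^*\nabla)_{\partial_{\eta^{i_1}}}F_y
=\sum_{l=0}^{k-1}\sum_{(j)}c^{(j)}_l\cdot\left(\nabla^l_{\partial_{\xi^{j_l}},\ldots,\partial_{\xi^{j_1}}}F\right)_y
\end{align*}
with coefficients $c^{(j)}_l\in\mO_{S\times T}$, and apply one more derivative $(y^*\nabla)_{\partial_{\eta^{i_k}}}$. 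The operator $(y^*\nabla)_{\partial_{\eta^{i_k}}}$ is, by the recalled definition (Def. 14 of \cite{Gro14}), a derivation on such tensorial expressions in a precise sense: it distributes over the product structure of $\scal[F_x]{u}{v}$ with the appropriate Koszul signs. So I must analyse two effects of applying it to each summand.

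The first effect is that $\partial_{\eta^{i_k}}$ hits the scalar coefficient $c^{(j)}_l\in\mO_{S\times T}$, producing $(\partial_{\eta^{i_k}}c^{(j)}_l)\cdot(\nabla^l F)_y$, which stays in the span with $l$ unchanged; no commutation relations intervene here because $\partial_{\eta^{i_k}}c^{(j)}_l$ is again in $\mO_{S\times T}$ and the leftover term $\pm c^{(j)}_l\cdot(y^*\nabla)_{\partial_{\eta^{i_k}}}(\nabla^l F)_y$ is dealt with next. The second effect is that $(y^*\nabla)_{\partial_{\eta^{i_k}}}$ acts on the pulled-back covariant derivative $(\nabla^l F)_y$. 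Here the key point is that the pullback of a tensor commutes suitably with covariant differentiation once one pushes forward the vector field: writing $y_*\partial_{\eta^{i_k}}$ in terms of the coordinate frame $\partial_{\xi^{j}}$ around $y$, one has a chain-rule identity expressing $(y^*\nabla)_{\partial_{\eta^{i_k}}}(\nabla^l F)_y$ as $\sum_j (\partial_{\eta^{i_k}}y^{\xi^j})\cdot(\nabla^{l+1}_{\partial_{\xi^j},\partial_{\xi^{j_l}},\ldots,\partial_{\xi^{j_1}}}F)_y$ plus correction terms coming from the discrepancy between $y^*\nabla$ and the connection used to define $\nabla^{l+1}F$ — and those correction terms are $\mO_{S\times T}$-linear in lower-order $(\nabla^{l'}F)_y$ with $l'\leq l$, by the definition of the higher covariant derivative as an iterated construction. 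Collecting, the new expression is again an $\mO_{S\times T}$-linear combination of $(\nabla^{l}_{\ldots}F)_y$ with $0\leq l\leq k$, which closes the induction.

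The main obstacle, as I see it, is bookkeeping the mismatch between the two connections in play: the $S$-connection $\nabla$ on $\mE_S$ used to pull back and the auxiliary connection on $\mT M$ entering Def.~\ref{defHigherCovariantDerivatives}. The cleanest route is to fix, once and for all, coordinates $\xi=(x,\theta)$ around $y\in M_0$ and to express everything in the associated coordinate frame, so that both $(y^*\nabla)_{\partial_{\eta^{i_k}}}$ and $\nabla^{l+1}F$ are written via Christoffel-type symbols; the chain rule $\partial_{\eta^{i_k}}(F_y)=\sum_j(\partial_{\eta^{i_k}}y^{\xi^j})\,(\partial_{\xi^j}F)_y$ then produces exactly the leading term, while the difference $(y^*\nabla)_{\partial_{\eta^{i_k}}}-\sum_j(\partial_{\eta^{i_k}}y^{\xi^j})\,(\nabla_{\partial_{\xi^j}})\circ y^*$ is tensorial in the arguments $u,v$ and in $F_y$ and hence lands among the lower-order terms with coefficients in $\mO_{S\times T}$. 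Once this single identity is in place, the induction is formal. I would also note that the parity signs never obstruct anything: they only affect the coefficients, which are allowed to be arbitrary elements of $\mO_{S\times T}$, so it suffices to track them schematically.
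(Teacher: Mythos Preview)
Your proposal is correct and follows essentially the same inductive scheme as the paper: trivial base case, Leibniz on the coefficients, and the key identity $(y^*\nabla)_{\partial_{\eta^{i_k}}}(\nabla^l F)_y=\partial_{\eta^{i_k}}(y^*\xi^m)\cdot(\nabla_{\partial_{\xi^m}}\nabla^l F)_y$, after which the recursive Definition~\ref{defHigherCovariantDerivatives} rewrites $\nabla_{\partial_{\xi^m}}\nabla^l F$ as $\nabla^{l+1}F$ plus order-$l$ terms whose Christoffel symbols pull out as $\mO_{S\times T}$-coefficients. One small point: your worry about a ``mismatch between the two connections'' is a red herring---there is no discrepancy to correct, since $y^*\nabla$ is built from the same $\nabla$; the lower-order terms you anticipate arise purely from the recursive definition of $\nabla^{l+1}$, not from any comparison of connections.
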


\begin{proof}
The base case $k=0$ is trivial. We assume, by induction, that the assumption
holds for $k$. To establish the case $k+1$, we unwind the definitions
to calculate
\begin{align*}
&(y^*\nabla)_{\partial_{\eta^{i_{k+1}}}}
\left(\nabla^l_{\partial_{\xi^{j_l}},\ldots,\partial_{\xi^{j_1}}}F\right)_y\\
&\qquad=\partial_{\eta^{i_{k+1}}}(y^*(\xi^m))\cdot
\left(\nabla_{\partial_{\xi^m}}
\nabla^l_{\partial_{\xi^{j_l}},\ldots,\partial_{\xi^{j_1}}}F\right)_y\\
&\qquad=\partial_{\eta^{i_{k+1}}}(y^*(\xi^m))\cdot
\left(\nabla^{l+1}_{\partial_{\xi^m},\partial_{\xi^{j_l}},\ldots,\partial_{\xi^{j_1}}}F\right.\\
&\qquad\qquad\qquad\qquad\qquad\qquad\left.+\sum_n(\pm 1)\cdot\nabla^l_{\partial_{\xi^{j_l}},\ldots,\partial_{\xi^{j_{n+1}}},\nabla_{\partial_{\xi^m}}\partial_{\xi^{j_n}},\partial_{\xi^{j_{n-1}}},\ldots,\partial_{\xi^{j_1}}}F\right)_y
\end{align*}
Pulling the Christoffel symbol in $\nabla_{\partial_{\xi^m}}\partial_{\xi^{j_n}}=\Gamma_{mj_n}^h\partial_{\xi^h}$ in front of the $l$-order covariant derivative, we end up with
an $\mO_{S\times T}$-linear combination of operators of the form
stated, with $0\leq l+1\leq k+1$ provided that $0\leq l\leq k$.
The case $k+1$ then follows immediately with the Leibniz rule.
\end{proof}

While Lem. \ref{lemHolk} holds in general, the following one will be
established for a suitable generalisation of the
special points (20) of \cite{Gro14} to be introduced next.
Given an $S$-point $q:S\rightarrow M$, consider the
following $S\times T$-point $y$, which is defined with respect to some
choice of coordinates $\xi=(x,\theta)$ of $M$ around $q_0\in M_{\overline{0}}$.
\begin{align}
\label{eqnGalaevYGeneralised}
y^{\sharp}(\theta^i):&=\eta^i+q^{\sharp}(\theta^i)&\in\,&(\mO_T)_{\overline{1}}+(\mO_S)_{\overline{1}}\\
y^{\sharp}(x^j):&=\sum_{n=0}^{k-1}\eta^{d_1+n(2d_0)+2j-1}\eta^{d_1+n(2d_0)+2j}+q^{\sharp}(x^j)&\in\,&(\mO_T)_{\overline{0}}+(\mO_S)_{\overline{0}}\nonumber
\end{align}
Here, $k\in\bN$ is a fixed number, and the indices $j$ and $i$
run through $1\leq j\leq d_0$ and $1\leq i\leq d_1$,
respectively, where we abbreviate $d_i:=(\dim M)_{\overline{i}}$
for the even ($i=0$) and odd ($i=1$) dimensions of $M$.
For this definition to make sense, it is implicitly understood that $T$
be sufficiently large.
The point $y$ is constructed such as to satisfy
\begin{align*}
y^*(\nabla_{\partial_{\theta^i}}Z)=(y^*\nabla)_{\partial_{\eta^i}}(y^*Z)\,,\quad
\eta^{d_1+n(2d_0)+2j}\cdot y^*(\nabla_{\partial_{x^j}}Z)=(y^*\nabla)_{\partial_{\eta^{d_1+n(2d_0)+2j-1}}}(y^*Z)
\end{align*}
and, similarly,
\begin{align}
\label{eqnGalaevTermsGeneralised}
y^*\left((\nabla_{\partial_{\theta^i}}F)(\partial_{\xi^a},\partial_{\xi^b})\right)
&=\left((y^*\nabla)_{\partial_{\eta^i}}F_y\right)(y^*\partial_{\xi^a},y^*\partial_{\xi^b})\\
\eta^{d_1+n(2d_0)+2j}\cdot y^*\left((\nabla_{\partial_{x^j}}F)(\partial_{\xi^a},\partial_{\xi^b})\right)
&=\left((y^*\nabla)_{\partial_{\eta^{d_1+n(2d_0)+2j-1}}}F_y\right)(y^*\partial_{\xi^a},y^*\partial_{\xi^b})\nonumber
\end{align}
thus generalising (21) of \cite{Gro14}.

The polynomial of odd generators in the second line of
(\ref{eqnGalaevYGeneralised}) can be understood as follows.
A product of two $\eta$'s, both different from those occurring in the first line,
is necessary for having a formula like (\ref{eqnGalaevTermsGeneralised}).
In the proof of Lem. \ref{lemHolk2} to be stated next, we need to express
a variant of the left hand side of this equation
containing concatenations of covariant derivatives in the shape of the right
hand side.
The product of two $\eta$'s corresponding to different even coordinates $x^j$ must,
therefore, have nonvanishing product.
In addition, in order to differentiate
with respect to the same direction $\partial_{x^j}$ up to $k$ times,
each $y^{\sharp}(x^j)$ is defined as the sum of $k$ such pairs.

\begin{Lem}
\label{lemHolk2}
Let $q:S\rightarrow M$ be an $S$-point, $k\in\bN$ and $y:S\times T\rightarrow M$
be over $q$ as defined in (\ref{eqnGalaevYGeneralised}).
Then every operator of the form
\begin{align*}
\left(\prod_{n=0}^{k-1}\prod_{j=1}^{d_0}\eta^{d_1+n(2d_0)+2j}\right)\cdot
(\nabla^k_{\partial_{\xi^{j_k}},\ldots,\partial_{\xi^{j_1}}}F)_y
\end{align*}
can be written as an $\mO_{S\times T}$-linar combination of operators of the form
\begin{align*}
(y^*\nabla)_{\partial_{\eta^{i_l}}}\circ\ldots\circ(y^*\nabla)_{\partial_{\eta^{i_1}}}F_y
\end{align*}
for $l\leq k$.
\end{Lem}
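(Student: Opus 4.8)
The plan is to run an induction on $k$ that mirrors the structure of Lem.~\ref{lemHolk}, but in the reverse direction: instead of converting concatenated pullback derivatives into a linear combination of pulled-back higher covariant derivatives, I want to invert the relation. The base case $k=0$ is trivial. For the inductive step, I would start from the operator $(\nabla^k_{\partial_{\xi^{j_k}},\ldots,\partial_{\xi^{j_1}}}F)_y$ and single out the outermost derivative direction $\partial_{\xi^{j_k}}$, which is either an odd coordinate $\theta^i$ or an even coordinate $x^j$. In the odd case, the first formula in (\ref{eqnGalaevTermsGeneralised}) shows directly that $(y^*\nabla)_{\partial_{\eta^i}}$ applied to the pullback of $\nabla^{k-1}_{\ldots}F$ produces $y^*$ of $\nabla_{\partial_{\theta^i}}(\nabla^{k-1}_{\ldots}F)$ up to the recursive correction terms from Def.~\ref{defHigherCovariantDerivatives}; in the even case one picks up the extra factor $\eta^{d_1+n(2d_0)+2j}$ for an unused index $n$, which is exactly why the product $\prod_{n,j}\eta^{d_1+n(2d_0)+2j}$ appears in the statement — each of the $k$ layers of differentiation consumes one of these pairs in each even direction.

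Concretely, I would rewrite
\begin{align*}
\nabla^k_{\partial_{\xi^{j_k}},\ldots,\partial_{\xi^{j_1}}}F
=\nabla_{\partial_{\xi^{j_k}}}\bigl(\nabla^{k-1}_{\partial_{\xi^{j_{k-1}}},\ldots,\partial_{\xi^{j_1}}}F\bigr)
-\sum_n(\pm 1)\,\nabla^{k-1}_{\ldots,\nabla_{\partial_{\xi^{j_k}}}\partial_{\xi^{j_{n}}},\ldots}F
\end{align*}
exactly as in Def.~\ref{defHigherCovariantDerivatives}. Pulling back along $y$ and multiplying by the appropriate product of $\eta$'s, the first term on the right becomes — via (\ref{eqnGalaevTermsGeneralised}) — an application of $(y^*\nabla)_{\partial_{\eta^{i}}}$ (for the suitable $i$ depending on whether $\xi^{j_k}$ is even or odd, and on the layer index) to $\bigl(\prod\eta\bigr)\cdot(\nabla^{k-1}_{\ldots}F)_y$, to which the inductive hypothesis applies, yielding the desired form with $l\leq k$. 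The correction terms involving $\nabla_{\partial_{\xi^{j_k}}}\partial_{\xi^{j_n}}=\Gamma^h_{j_kj_n}\partial_{\xi^h}$ are themselves $(k-1)$-st covariant derivatives of $F$ with a Christoffel symbol coefficient, so after pulling the coefficient $y^*(\Gamma^h_{j_kj_n})\in\mO_{S\times T}$ out front they are handled directly by the inductive hypothesis at level $k-1$, hence also at level $\leq k$.

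The main obstacle will be bookkeeping the $\eta$-factors correctly: one must check that when the outermost derivative is in an even direction $x^j$, there is always a fresh layer index $n\in\{0,\ldots,k-1\}$ whose generator $\eta^{d_1+n(2d_0)+2j}$ has not yet been used up by the $k-1$ inner differentiations, so that the product on the left of the claimed identity supplies exactly the factor needed to invoke the second line of (\ref{eqnGalaevTermsGeneralised}) without the expression collapsing to zero. This is precisely the point for which $y$ was defined in (\ref{eqnGalaevYGeneralised}) with $k$ independent pairs per even coordinate, and the combinatorial matching of "which layer handles which occurrence of $\partial_{x^j}$" is where care is required; since $\mO_{S\times T}$-linear combinations are allowed in the conclusion, one has freedom to reorganise and the nilpotency of the $\eta$'s only forces us to avoid reusing a generator, never to track signs beyond those already implicit in the definitions. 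Once this matching is set up, the Leibniz rule and the recursion of Def.~\ref{defHigherCovariantDerivatives} close the induction, and combining Lem.~\ref{lemHolk} with Lem.~\ref{lemHolk2} yields that the two families of operators generate the same $\mO_{S\times T}$-module, which, after conjugating by $P_\gamma$ and extracting $T$-coefficients, gives the equality $\hol_x^{\mathrm{C}}=\hol_x^{\mathrm{Gal}}$ of Thm.~\ref{thmGalaevFunctor}.
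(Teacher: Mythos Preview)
Your proposal is correct and follows essentially the same approach as the paper: induction on $k$, with the inductive step carried out by applying the recursive formula of Def.~\ref{defHigherCovariantDerivatives}, invoking (\ref{eqnGalaevTermsGeneralised}) with $F$ replaced by $\nabla^{k-1}_{\ldots}F$ for the leading term, and using the Leibniz rule; the paper likewise remarks that each conversion along an even direction $\partial_{x^j}$ consumes one of the $k$ available $\eta$-generators attached to $x^j$. Your write-up is more explicit about the Christoffel correction terms and the layer bookkeeping, but the strategy is the same (the final paragraph about Thm.~\ref{thmGalaevFunctor} goes beyond the lemma itself and belongs elsewhere).
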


\begin{proof}
As in the proof of Lem. \ref{lemHolk}, we proceed by induction.
Again, the base case $k=0$ is trivial. The inductive step follows
from applying (\ref{eqnGalaevTermsGeneralised}), with $F$ replaced
by $\nabla^k_{\partial_{\xi^{j_k}},\ldots,\partial_{\xi^{j_1}}}F$,
and Leibniz' rule to the right hand side of the recursive definition of
Def. \ref{defHigherCovariantDerivatives}.
The construction is such that
conversion of a covariant derivative along an even direction $\partial_{x^j}$
via (\ref{eqnGalaevTermsGeneralised}) swallows one of $k$ available
$\mO_T$-generators associated to $x^j$.
\end{proof}

Prp. 2 of \cite{Gro14} provides a formula for the derivative of
the parallel transport operator by an even homotopy variable $s$.
There, it was assumed that the homotopy is proper, i.e. the boundary
points do not depend on $s$. In general, one would get boundary terms.
The following proposition is the corresponding statement for a single path
$\gamma$ interpreted as a homotopy with respect to one of the odd variables.

\begin{Prp}
\label{prpEtaPgamma}
Let $U=\bR^{0|M}=(*,\langle\eta^1,\ldots,\eta^M\rangle)$ be
a superpoint, $x,y:U\rightarrow M$ be $U$-points and $\gamma:x\rightarrow y$ 
be a $U$-path. Then
\begin{align*}
\partial_{\eta^i}P_{\gamma}[Z_x]
=\left(\int_0^1R_t[\partial_{\eta^i}]\right)P_{\gamma}
&+P_{\gamma}\,\partial_{\eta^i}(x^*(\xi^l))x^*(\nabla_{\partial_{\xi^l}}T^k)\cdot Z_x^k\\
\qquad&-\partial_{\eta^i}(y^*(\xi^l))y^*(\nabla_{\partial_{\xi^l}}T^k)\cdot(P_{\gamma}[Z_x])^k
\end{align*}
for every $Z_x\in x^*\mE$,
where $R_t[\partial_{\eta^i}]:=P_t\,\ev|_t\scal[R_{\gamma}]{d\gamma[\partial_t]}{d\gamma[\partial_{\eta^i}]}P_t^{-1}$
and $P_t:=P_{{\gamma}|_{U\times[t,1]}}$.
\end{Prp}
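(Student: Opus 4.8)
The plan is to mimic the proof of Prp.~2 of \cite{Gro14}, replacing the even homotopy variable $s$ by the odd variable $\eta^i$ and carefully tracking the boundary contributions. First I would set up the mixed partial derivative: parallel transport $P_t$ along $\gamma|_{U\times[t,1]}$ is characterised by the transport equation $(\gamma^*\nabla)_{\partial_t}(P_t[Z_x])=0$ together with the initial condition $P_1=\mathrm{id}$, where $Z_x$ is extended $t$-independently. Differentiating this defining ODE with respect to $\eta^i$ and commuting $\partial_{\eta^i}$ past $\partial_t$ (which is legitimate since $[\partial_t,\partial_{\eta^i}]=0$ on $U\times[0,1]$, modulo the Christoffel terms that are organised into the curvature) produces an inhomogeneous linear ODE in $t$ for the quantity $\partial_{\eta^i}P_t$, whose inhomogeneity is exactly the curvature expression $R_t[\partial_{\eta^i}]=P_t\,\ev|_t\scal[R_{\gamma}]{d\gamma[\partial_t]}{d\gamma[\partial_{\eta^i}]}P_t^{-1}$. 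Integrating this ODE over $[0,1]$ via the variation-of-constants formula gives the integral term $\bigl(\int_0^1 R_t[\partial_{\eta^i}]\bigr)P_\gamma$.

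The new feature compared with the proper-homotopy case is that here the endpoints $x=\gamma|_{t=0}$ and $y=\gamma|_{t=1}$ genuinely depend on $\eta^i$, so $\partial_{\eta^i}$ acting on $Z_x\in x^*\mE$ and on $P_\gamma[Z_x]\in y^*\mE$ contributes. Concretely, when one pulls the transport equation back along $\gamma$ and uses the auxiliary connection on $\mT M$ to express $\partial_{\eta^i}$ of a pulled-back section, one picks up the terms $\partial_{\eta^i}(x^*(\xi^l))\,x^*(\nabla_{\partial_{\xi^l}}T^k)$ at the initial endpoint and $\partial_{\eta^i}(y^*(\xi^l))\,y^*(\nabla_{\partial_{\xi^l}}T^k)$ at the final endpoint. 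I would handle this by writing $Z_x$ in the local frame determined by $\xi=(x,\theta)$ around the (common, since $\eta^i$ is nilpotent) underlying point, so that differentiating the coordinate expression of $Z_x$ produces precisely the Christoffel-type correction $\partial_{\eta^i}(x^*(\xi^l))x^*(\nabla_{\partial_{\xi^l}}T^k)\cdot Z_x^k$; the sign and the placement of $P_\gamma$ follow from whether the correction sits at $t=0$ (hence gets transported, yielding $+P_\gamma(\ldots)$) or at $t=1$ (hence appears already in $y^*\mE$, yielding $-(\ldots)(P_\gamma[Z_x])^k$). This is essentially the same bookkeeping as in \cite{Gro14} applied twice, once at each end.

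The main obstacle I anticipate is the careful treatment of signs and of the order of odd factors: since $\eta^i$ is odd, $\partial_{\eta^i}$ is an odd derivation, and commuting it past $P_\gamma$, past $\partial_t$, and past the even frame fields $\partial_{\xi^l}$ must be done with Koszul signs, while the curvature insertion $\scal[R_\gamma]{d\gamma[\partial_t]}{d\gamma[\partial_{\eta^i}]}$ involves the odd tangent vector $d\gamma[\partial_{\eta^i}]$. I would pin down these signs by first doing the computation for a proper homotopy (recovering Prp.~2 of \cite{Gro14} with $s$ replaced by $\eta^i$, which fixes the curvature term and its sign unambiguously) and only then relaxing properness, so that the boundary terms are isolated as the \emph{difference} between the general and the proper case. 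The remaining verification — that this difference is exactly the two Christoffel-type endpoint terms displayed — is then a routine, if slightly tedious, application of the Leibniz rule for $(y^*\nabla)_{\partial_{\eta^i}}$ recalled just before Lem.~\ref{lemHolk}.
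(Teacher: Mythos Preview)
Your proposal is correct and follows precisely the approach the paper itself takes: the paper's proof consists of the single sentence ``This is shown along the lines of the proof of Prp.~2 of \cite{Gro14},'' and your outline is a faithful elaboration of what that entails, including the identification of the curvature term via the variation-of-constants argument and the appearance of the two endpoint Christoffel corrections from the non-proper dependence of $x,y$ on $\eta^i$. There is nothing to add.
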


The formulation of Prp. \ref{prpEtaPgamma} contains some heavy abuse of notation
that needs to be explained. Recall that parallel transport $P_{\gamma}$
is an operator $x^*\mE\rightarrow y^*\mE$. The derivative with respect to
$\eta^i$ thus depends on the choice of trivialisations on either side.
To be precise, we let $(T^k)$ denote a local $\mE$-basis in a neighbourhood
of $x_0\in M_{\overline{0}}$ and expand $Z_x=:(x^*T^k)\cdot Z_x^k$,
and analogous with respect to $y$. The reader should note that we use the
same symbol $(T^k)$ for two different local $\mE$-bases.

\begin{proof}
This is shown along the lines of the proof of Prp. 2 of \cite{Gro14}.
\end{proof}

We will use Prp. \ref{prpEtaPgamma} for $U=S\times T$ and $x$
replaced by $x_T:S\times T\rightarrow M$. In the following, the symbols $\eta^i$
will again be used to denote odd generators of $\mO_T$, as in the beginning of
this section.
By the following corollary, an $\eta_i$-derivative of an Ambrose-Singer operator
yields a covariant derivative together with some lower order term.
It is the key formula for the proof of Thm. \ref{thmGalaevFunctor} below.

\begin{Cor}
\label{corPartialEtaCurvature}
Let $x:S\rightarrow M$ be an $S$-point, considered as an $S\times T$-point
$x=x_T:S\times T\rightarrow M$, $y$ be an $S\times T$-point and
$\gamma:x\rightarrow y$.
Let $F$ be a tensor of curvature type as above and $u,v\in y^*\mT M$. Then
\begin{align*}
&\partial_{\eta^i}(P_{\gamma}^{-1}\circ F_y(u,v)\circ P_{\gamma})\\
&\qquad=P_{\gamma}^{-1}\circ((y^*\nabla)_{\partial_{\eta^i}}F_y)(u,v)\circ P_{\gamma}
+P_{\gamma}^{-1}F_y((y^*\nabla)_{\partial_{\eta^i}}u,v)P_{\gamma}\\
&\qquad\qquad+(-1)^{\abs{u}}P_{\gamma}^{-1}F_y(u,(y^*\nabla)_{\partial_{\eta^i}}v)P_{\gamma}
-\scal[[]{\int_0^1dt\,P_{\gamma}^{-1}R_t[\partial_{\eta^i}]P_{\gamma}}{P_{\gamma}^{-1}F_y(u,v)P_{\gamma}}
\end{align*}
\end{Cor}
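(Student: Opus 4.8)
\textbf{Proof plan for Cor. \ref{corPartialEtaCurvature}.}
The plan is to apply Prp. \ref{prpEtaPgamma} to the conjugated operator $P_{\gamma}^{-1}\circ F_y(u,v)\circ P_{\gamma}$, unwinding the $\eta^i$-derivative by the Leibniz rule across the three factors. First I would differentiate $P_{\gamma}^{-1}$ using the identity $\partial_{\eta^i}(P_{\gamma}^{-1})=-P_{\gamma}^{-1}\circ(\partial_{\eta^i}P_{\gamma})\circ P_{\gamma}^{-1}$, then substitute the formula of Prp. \ref{prpEtaPgamma} for both $\partial_{\eta^i}P_{\gamma}$ and $\partial_{\eta^i}(P_{\gamma}^{-1})$. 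Since $x=x_T$ is a constant $S\times T$-point in the $T$-directions (it arises by base change of an $S$-point along $S\times T\to S$), the boundary term at $x$ in Prp. \ref{prpEtaPgamma}, namely $P_{\gamma}\,\partial_{\eta^i}(x^*(\xi^l))\,x^*(\nabla_{\partial_{\xi^l}}T^k)\cdot Z_x^k$, vanishes because $\partial_{\eta^i}(x^*(\xi^l))=0$. This is the key simplification that removes the $x$-boundary contribution.

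Next I would organise the remaining terms. Differentiating the middle factor $F_y(u,v)$ produces, by the very definition of the covariant pullback derivative $(y^*\nabla)_{\partial_{\eta^i}}$ recalled just before Lem. \ref{lemHolk}, the three desired terms $((y^*\nabla)_{\partial_{\eta^i}}F_y)(u,v)$, $F_y((y^*\nabla)_{\partial_{\eta^i}}u,v)$, and $(-1)^{\abs{u}}F_y(u,(y^*\nabla)_{\partial_{\eta^i}}v)$ — after conjugation by $P_{\gamma}^{-1}(\cdot)P_{\gamma}$ — together with an extra term $-(-1)^{\abs u+\abs v}F_y(u,v)\circ(y^*\nabla)_{\partial_{\eta^i}}$ coming from the last summand of that definition; this extra term combines with the $y$-boundary terms of Prp. \ref{prpEtaPgamma} to reproduce the connection-derivative pieces of $(y^*\nabla)_{\partial_{\eta^i}}$ acting from the right. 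The essential point is that the $y$-boundary term $-\partial_{\eta^i}(y^*(\xi^l))\,y^*(\nabla_{\partial_{\xi^l}}T^k)$ is precisely the local coordinate expression of the difference between the ``naive'' derivative $\partial_{\eta^i}$ on $y^*\mE$ and the covariant one $(y^*\nabla)_{\partial_{\eta^i}}$, so when assembled with the three factors it exactly converts each plain derivative into the covariant pullback derivative.

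What then remains are the curvature terms coming from the $\bigl(\int_0^1 R_t[\partial_{\eta^i}]\bigr)P_{\gamma}$ summand of Prp. \ref{prpEtaPgamma} appearing once with a plus sign (from $\partial_{\eta^i}P_{\gamma}$ on the right) and once with a minus sign (from $\partial_{\eta^i}(P_{\gamma}^{-1})$ on the left). After conjugating by $P_{\gamma}^{-1}(\cdot)P_{\gamma}$, writing $C:=\int_0^1 dt\,P_{\gamma}^{-1}R_t[\partial_{\eta^i}]P_{\gamma}$ and $D:=P_{\gamma}^{-1}F_y(u,v)P_{\gamma}$, these two contributions are $-C\circ D+D\circ C=-\scal[[]{C}{D}$, which is the asserted commutator term. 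I would close by checking that the parity conventions in the Leibniz rule and in the definition of $(y^*\nabla)_{\partial_{\eta^i}}$ match, so that the sign $(-1)^{\abs u}$ in front of the third term and the absence of signs elsewhere come out correctly.

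\textbf{Main obstacle.} The calculation itself is routine once Prp. \ref{prpEtaPgamma} is in hand; the only delicate point is the careful bookkeeping of Koszul signs, in particular verifying that the ``right-action'' term $-(-1)^{\abs u+\abs v}F_y(u,v)\circ(y^*\nabla)_{\partial_{\eta^i}}$ in the definition of the covariant pullback derivative is exactly cancelled/produced by the $y$-boundary term of Prp. \ref{prpEtaPgamma} together with the $\partial_{\eta^i}(P_{\gamma}^{-1})$ contribution, with the correct sign — this is where an error would most easily creep in, and it is worth spelling out the local-frame computation $\partial_{\eta^i}(P_\gamma[Z_x])^k$ versus $(y^*\nabla)_{\partial_{\eta^i}}(P_\gamma[Z_x])$ explicitly.
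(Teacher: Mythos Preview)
Your proposal is correct and follows essentially the same route as the paper's proof: the product rule together with $\partial_{\eta^i}P_{\gamma}^{-1}=-P_{\gamma}^{-1}(\partial_{\eta^i}P_{\gamma})P_{\gamma}^{-1}$, followed by Prp.~\ref{prpEtaPgamma} with the simplification $\partial_{\eta^i}x^*=0$. The only organisational difference is that the paper first collects the three Leibniz terms into the shape $-\scal[[]{P_{\gamma}^{-1}\partial_{\eta^i}P_{\gamma}}{P_{\gamma}^{-1}F_y(u,v)P_{\gamma}}+P_{\gamma}^{-1}\scal[[]{\partial_{\eta^i}}{F_y(u,v)}\circ P_{\gamma}$ before invoking Prp.~\ref{prpEtaPgamma}, so that the commutator and the covariant-derivative pieces emerge in one stroke rather than after the term-matching you describe.
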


\begin{proof}
By the product rule and the standard formula
\begin{align*}
\partial_{\eta^i}P_{\gamma}^{-1}=-P_{\gamma}^{-1}(\partial_{\eta^i}P_{\gamma})P_{\gamma}^{-1}\,,
\end{align*}
the left hand side of the equation can be expressed in the form
\begin{align*}
\partial_{\eta^i}(P_{\gamma}^{-1}\circ F_y(u,v)\circ P_{\gamma})
=-\scal[[]{P_{\gamma}^{-1}\partial_{\eta^i}P_{\gamma}}{P_{\gamma}^{-1}F_y(u,v)P_{\gamma}}+P_{\gamma}^{-1}\scal[[]{\partial_{\eta^i}}{F_y(u,v)}\circ P_{\gamma}
\end{align*}
Application of Prp. \ref{prpEtaPgamma} with the assumption
$\partial_{\eta^i}x^*=0$ then leads to the right hand side.
\end{proof}

To rephrase the previous result in a more convenient form,
we introduce the following definition.

\begin{Def}
\label{defHolk}
For $k\geq 0$, define $\hol_x(T)^{(k)}$ to be the Lie algebra generated by operators
\begin{align*}
P_{\gamma}^{-1}\circ\left((y^*\nabla)_{\partial_{\eta^{i_l}}}\circ\ldots\circ(y^*\nabla)_{\partial_{\eta^{i_1}}}R_y\right)(u,v)\circ P_{\gamma}
\end{align*}
with $\gamma:x\rightarrow y$, $0\leq l\leq k$ and $u,v\in y^*\mT M$.
\end{Def}

This is such that $\hol_x(T)^{(k)}\subseteq\hol_x(T)^{(k+1)}$.
Note that, in general, $\hol_x(T)^{(0)}\neq\hol_x(T)$ since the latter
allows only even $u$ and $v$.

\begin{Cor}
\label{corEtaDerivatives}
Let $x:S\rightarrow M$ be an $S$-point, $y$ be an $S\times T$-point,
$\gamma:x\rightarrow y$ and $u,v\in y^*\mT M$. Then
\begin{align*}
\partial_{\eta^{i_k}}\ldots\partial_{\eta^{i_1}}(P_{\gamma}^{-1}\circ R_y(u,v)\circ P_{\gamma})
=P_{\gamma}^{-1}\circ((y^*\nabla)_{\partial_{\eta^{i_k}}}\ldots(y^*\nabla)_{\partial_{\eta^{i_1}}}R_y)(u,v)\circ P_{\gamma}
+R^{k-1}
\end{align*}
with $R^{k-1}\in\hol_x(T)^{(k-1)}$.
Moreover, $\hol_x(T)^{(m)}$ is generated by operators of the form of the
left hand side of this equation with $0\leq k\leq m$.
\end{Cor}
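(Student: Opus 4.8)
The plan is to prove the displayed identity by induction on $k$, using Corollary \ref{corPartialEtaCurvature} as the single-step formula, and then to deduce the generation statement for $\hol_x(T)^{(m)}$ by a downward bookkeeping argument. For the base case $k=1$, the identity is precisely Corollary \ref{corPartialEtaCurvature} applied to $F=R$ and $u,v\in y^*\mT M$: the first term on its right-hand side is the covariant-derivative term, while the three remaining terms — namely $P_{\gamma}^{-1}R_y((y^*\nabla)_{\partial_{\eta^i}}u,v)P_{\gamma}$, its $v$-counterpart, and the commutator with $\int_0^1 P_{\gamma}^{-1}R_t[\partial_{\eta^i}]P_{\gamma}\,dt$ — all lie in $\hol_x(T)^{(0)}$; for the first two this is by Definition \ref{defHolk} (reabsorbing the vector field $(y^*\nabla)_{\partial_{\eta^i}}u\in y^*\mT M$ into the arguments allowed there), and for the last because $\int_0^1 P_{\gamma}^{-1}R_t[\partial_{\eta^i}]P_{\gamma}\,dt$ is itself a limit of products of Ambrose–Singer operators $P_{\sigma}^{-1}R_z(\cdot,\cdot)P_{\sigma}$ with even arguments, hence lies in $\hol_x(T)\subseteq\hol_x(T)^{(0)}$, and $\hol_x(T)^{(0)}$ is closed under the bracket.

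For the inductive step, I would apply $\partial_{\eta^{i_{k+1}}}$ to the identity for $k$. On the main term $P_{\gamma}^{-1}\circ((y^*\nabla)_{\partial_{\eta^{i_k}}}\cdots(y^*\nabla)_{\partial_{\eta^{i_1}}}R_y)(u,v)\circ P_{\gamma}$, Corollary \ref{corPartialEtaCurvature} applies verbatim with $F$ replaced by the tensor $(y^*\nabla)_{\partial_{\eta^{i_k}}}\cdots(y^*\nabla)_{\partial_{\eta^{i_1}}}R$ (which is again of curvature type), producing the desired order-$(k+1)$ covariant term plus a correction built from order-$\le k$ covariant derivatives of that tensor conjugated by $P_{\gamma}$ — i.e., an element of $\hol_x(T)^{(k)}$ — together with the commutator term, which again lies in $\hol_x(T)^{(k)}$ by the same argument as above. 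On the remainder $R^{k-1}\in\hol_x(T)^{(k-1)}$, one observes that $\partial_{\eta^{i_{k+1}}}$ sends a product of Ambrose–Singer-type generators of $\hol_x(T)^{(k-1)}$ to a sum of such products in which at most one factor has had an extra $\partial_{\eta}$-derivative applied, and by the $k=1$ case that differentiated factor lands in $\hol_x(T)^{(k)}$; since $\hol_x(T)^{(k)}$ is a Lie algebra containing $\hol_x(T)^{(k-1)}$, the whole expression $\partial_{\eta^{i_{k+1}}}R^{k-1}$ lies in $\hol_x(T)^{(k)}$. Collecting, $\partial_{\eta^{i_{k+1}}}\cdots\partial_{\eta^{i_1}}(P_{\gamma}^{-1}\circ R_y(u,v)\circ P_{\gamma})$ equals the order-$(k+1)$ covariant term conjugated by $P_{\gamma}$ plus an element of $\hol_x(T)^{(k)}$, which is the claim for $k+1$.

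For the final sentence, let $\hol_x(T)^{\langle m\rangle}$ denote the Lie algebra generated by all operators of the form of the left-hand side with $0\le k\le m$. The inclusion $\hol_x(T)^{\langle m\rangle}\subseteq\hol_x(T)^{(m)}$ is immediate from the identity just proved, reading it as expressing each such left-hand side as a generator of $\hol_x(T)^{(m)}$ plus lower-order junk in $\hol_x(T)^{(m-1)}\subseteq\hol_x(T)^{(m)}$. The reverse inclusion goes by induction on $m$: for $m=0$ it is definitional, and for the step one rewrites the defining generator $P_{\gamma}^{-1}\circ((y^*\nabla)_{\partial_{\eta^{i_k}}}\cdots(y^*\nabla)_{\partial_{\eta^{i_1}}}R_y)(u,v)\circ P_{\gamma}$ of $\hol_x(T)^{(k)}$, $k\le m$, by solving the identity for the covariant term, obtaining it as $\partial_{\eta^{i_k}}\cdots\partial_{\eta^{i_1}}(P_{\gamma}^{-1}\circ R_y(u,v)\circ P_{\gamma})$ minus an element $R^{k-1}\in\hol_x(T)^{(k-1)}$, which by the inductive hypothesis lies in $\hol_x(T)^{\langle m-1\rangle}\subseteq\hol_x(T)^{\langle m\rangle}$; hence the generator lies in $\hol_x(T)^{\langle m\rangle}$, and therefore $\hol_x(T)^{(m)}\subseteq\hol_x(T)^{\langle m\rangle}$. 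The main obstacle is purely bookkeeping: tracking that no step of differentiation raises the covariant order by more than one and that every auxiliary term (the mixed $(y^*\nabla)_{\partial_{\eta}}u$ arguments and the integral commutator term) genuinely lands in the asserted filtration level; once Corollary \ref{corPartialEtaCurvature} is in hand and one is careful that $(y^*\nabla)_{\partial_{\eta}}$ of a curvature-type tensor is again curvature-type, there is no analytic difficulty.
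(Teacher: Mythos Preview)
Your approach is the same as the paper's: both argue by induction on $k$, with Cor.~\ref{corPartialEtaCurvature} supplying the base case and the inductive step, and then read the resulting identity in both directions to get the generation statement. Your writeup simply fills in the bookkeeping that the paper leaves implicit.

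One correction is needed in the base case. You justify that $\int_0^1 P_{\gamma}^{-1}R_t[\partial_{\eta^i}]P_{\gamma}\,dt$ lies in $\hol_x(T)$ by calling it a limit of Ambrose--Singer operators ``with even arguments''. This is not right: unwinding $R_t[\partial_{\eta^i}]$ from Prp.~\ref{prpEtaPgamma}, the second curvature argument is $d\gamma[\partial_{\eta^i}]$, which is odd, so the integrand is \emph{not} of the form (\ref{eqnHolonomyAlgebra}) and need not lie in $\hol_x(T)$. What saves you is that Def.~\ref{defHolk} with $l=0$ allows arbitrary $u,v\in y^*\mT M$, so each integrand \emph{is} a generator of $\hol_x(T)^{(0)}$; since this is a linear subspace of the finite-dimensional $\fgl_{\rk\mE}(\mO_{S\times T})$ it is closed, and the Riemann-sum argument then places the integral in $\hol_x(T)^{(0)}$ directly. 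With that fix the commutator term is handled, and the rest of your induction and the double-inclusion argument for the second sentence go through as written.
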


\begin{proof}
The first part of the statement is shown by induction, with both the
base case $k=1$ and the inductive step provided by
Cor. \ref{corPartialEtaCurvature}.
The second part is a direct corollary of the first.
\end{proof}

\begin{proof}[Proof of Thm. \ref{thmGalaevFunctor}]
$\hol_x^{\mathrm{C}}$ is generated by operators $X$ of the form
\begin{align}
\label{eqnHolC}
X=Y|_{\eta=0}\;,\qquad Y=\partial_{\eta^{i_k}}\ldots\partial_{\eta^{i_1}}(P_{\gamma}^{-1}\circ R_y(u,v)\circ P_{\gamma})
\end{align}
with $u,v\in(y^*\mT M)_{\overline{0}}$.
By Cor. \ref{corEtaDerivatives}, $Y$ is contained in $\hol_x(T)^{(k)}$.
As such, by Lem. \ref{lemHolk} it has an expression as a linear combination of
(possibly nested) commutators involving operators of the form
\begin{align*}
P_{\gamma}^{-1}\circ\left(\nabla^l_{\partial_{\xi^{j_l}},\ldots,\partial_{\xi^{j_1}}}R\right)_y(u,v)\circ P_{\gamma}
\end{align*}
with $0\leq l\leq k$ and $u,v\in y^*\mT M$.
It follows that $X$ has a corresponding expression with the occurrences of
$y$, $u$, $v$ and $\gamma$ replaced by $y_0:=y|_{\eta=0}:S\rightarrow M$
and analogous for $u_0$, $v_0$ and $\gamma_0$, respectively.
By definition (Def. \ref{defGalaevAlgebra}), this is clearly contained in
$\hol_x^{\mathrm{Gal}}$. Galaev's holonomy algebra thus contains all generators
of the holonomy algebra of coefficients, thus establishing the inclusion
$\hol_x^{\mathrm{C}}\subseteq\hol_x^{\mathrm{Gal}}$.

Conversely, consider a generator $X$ of $\hol_x^{\mathrm{Gal}}$ of the form
\begin{align*}
X=P_{\gamma_0}^{-1}\circ\left(\nabla^k_{\partial_{\xi^{i_k}},\ldots,\partial_{\xi^{i_1}}}R\right)_{y_0}(u_0,v_0)\circ P_{\gamma_0}
\end{align*}
with $y_0:S\rightarrow M$ and $\gamma_0:x\rightarrow y_0$ and $u_0,v_0\in y_0^*\mT M$.
Let $y:S\times T\rightarrow M$ (with $T$ sufficiently large) be as
in (\ref{eqnGalaevYGeneralised}),
and $u,v\in y^*\mT M$ such that $u_0=u|_{\eta=0}$ and $v_0=v|_{\eta=0}$. Then
\begin{align*}
X=Y|_{\eta=0}\;,\qquad Y:=P_{\gamma}^{-1}\circ\left(\nabla^k_{\partial_{\xi^{i_k}},\ldots,\partial_{\xi^{i_1}}}R\right)_y(u,v)\circ P_{\gamma}
\end{align*}
In order to apply Lem. \ref{lemHolk2}, we denote the product of $\eta$'s occurring
in that lemma by $\eta^{(d,k)}$ and rewrite the previous equation in the form
\begin{align*}
X=\hat{Y}|_{\eta=0}\;,\qquad\hat{Y}=\partial_{\eta^{(d,k)}}\left(\eta^{(d,k)}Y\right)
\end{align*}
By Lem. \ref{lemHolk2}, the term in parentheses is an $\mO_{S\times T}$-linear
combination of operators as in Def. \ref{defHolk} ($\bR$-linear upon
redefining the vectors $u$ or $v$), and thus contained in $\hol_x(T)^{(k)}$.
By Cor. \ref{corEtaDerivatives}, it is a linear combination
of arbitrary commutators of operators as on the left hand side of the
corollary, and the same follows for $\hat{Y}$.
Therefore, $X$ has a corresponding expression in terms of
operators of the form (\ref{eqnHolC}).
Consider one such operator. We may assume, without loss of generality, that
$u$ (and similarly $v$) is even, for if not, we enlarge $T$ by an additional
odd generator $\eta^a$, replace (the odd part of) $u$ by
$\eta^a\cdot u$ and put a derivative
$\partial_{\eta^a}$ in front.
We conclude that $X$ is contained in $\hol_x^{\mathrm{C}}$ which,
therefore, contains all generators of $\hol_x^{\mathrm{Gal}}$.
The theorem is proved.
\end{proof}

\subsection{Sheaf Properties}
\label{subsecSheafProperties}

Graßmann algebras are special instances of supercommutative superalgebras.
For the study of functors in the latter category, it is natural to consider
the fppf Grothendieck topology which, by definition, is the collection of
finitely many morphisms $R\rightarrow R_i$ such that $R_i$ is finitely
presented and $R_1\times\ldots\times R_n$ is a faithfully flat $R$-module.
An important application concerns the quotient of an algebraic supergroup
by a closed subsupergroup.
The quotient functor, defined by the quotients of the functors of points,
does not come with good properties. Only its fppf-sheafification is
represented by a superscheme as shown in \cite{MZ11}, cf. also Sec. 5.2
in \cite{Jan87}, and \cite{Vis05} for a general treatment on Grothendieck
topologies and related concepts.

By analogy and in view of the results of the previous subsection,
one might conjecture that (the functor of points of)
$\Hol_x^{\mathrm{Gal}}$ should be the sheafification of the functor
$\Hol_x$.
However, this turns out not to be the case. To the contrary,
we shall prove that both functors are in fact sheaves
with respect to the fppf topology (see Prp. \ref{prpHolSheaves} below).
We consider the category $\mP$ of superpoints $\bR^{0|L}$.

\begin{Def}[fppf]
\label{defFPPF}
The fppf topology, denoted $T_{fppf}$, on $\mP$ is defined as collection
of finite sets $\{P_i\rightarrow P\}_{i\in I}$ (for $P,P_i\in\mP$) such that
each morphism $P_i\rightarrow P$ is a submersion.
\end{Def}

Def. \ref{defFPPF} is convenient for the present purposes.
In App. \ref{secFPPF}, we will prove its equivalence with the one
used in \cite{MZ11}, upon restriction to Graßmann algebras,
a result of independent interest.
From that equivalence, one can deduce that, indeed, $T_{fppf}$ satisfies the
axioms of a Grothendieck topology.
We shall now sketch a proof of this result in the current setting, not only
for having a self-contained exposition but, more importantly, to collect
some notation and facts for reference in the proof of Prp. \ref{prpHolSheaves}
below.

Consider morphisms $\varphi_i:\bR^{0|L_i}\rightarrow\bR^{0|L}$ ($i=1,2$).
The fibred product, if it exists, is defined as the object which makes
the following type of diagrams commute and is universal in this respect
(see Chp. III.4 of \cite{ML98}).
\begin{align*}
\begin{xy}
\xymatrix{
\ar@{}[drr]|{\circlearrowleft}\bR^{0|L_1}\times_{\bR^{0|L}}\bR^{0|L_2}
\ar[d]_{\pr_2}\ar[rr]^{\pr_1}&&\bR^{0|L_1}\ar[d]^{\varphi_1}\\
\bR^{0|L_2}\ar[rr]_{\varphi_2}&&\bR^{0|L}}
\end{xy}
\end{align*}
We need the existence of the fibred product in case $\varphi_1$ is a submersion.
This is indeed the case, for a submersion is
transversal to any morphism $\varphi_2$, see Prp. 2.9 of \cite{BBHP98}.
Moreover, the fibred product has the form
\begin{align*}
\bR^{0|L_1}\times_{\bR^{0|L}}\bR^{0|L_2}
=\left(*,\bigwedge\bR^{L_1+L_2}/\left\langle(\varphi_1\times 0-0\times\varphi_2)^*(\bigwedge\bR^L)\right\rangle\right)
\end{align*}
and the maps $\pr_i$ are indeed projections.

Coordinates of this space can be found as follows. As $\varphi_1$ is
assumed to be a submersion, there exist coordinates
$(\eta^1,\ldots,\eta^L,\eta^{L+1},\ldots,\eta^{L_1})$ of
$\bR^{0|L_1}$ such that $\varphi_1$ identifies $\eta^1,\ldots,\eta^L$
with coordinates of $\bR^{0|L}$. Let, moreover,
$(\theta^1,\ldots,\theta^{L_2})$ be any coordinates of $\bR^{0|L_2}$.
It follows that
\begin{align*}
\langle\theta^1,\ldots,\theta^{L_2},\eta^{L+1},\ldots,\eta^{L_1}\rangle
\cong\bigwedge\bR^{L_1+L_2}/\left\langle(\varphi_1\times 0-0\times\varphi_2)^*(\bigwedge\bR^L)\right\rangle
\end{align*}
It follows that the projection $\pr_2$ is a submersion, and we have established
the following lemma.

\begin{Lem}
\label{lemFPPF}
$T_{fppf}$ satisfies the axioms of a Grothendieck
topology, such that $(\mP,T_{fppf})$ constitutes a site.
\end{Lem}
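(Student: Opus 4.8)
The plan is to verify the standard axioms for the covering families of Def. \ref{defFPPF}, the bulk of the substantial work having already been carried out in the discussion preceding the statement. Throughout, observe that a finite family of finite families is again finite and that finiteness is preserved under all the operations below, so the finiteness requirement built into Def. \ref{defFPPF} is automatically respected; likewise, note that the pretopology axioms only require fibred products to exist against \emph{covering} morphisms, so the fact that $\mP$ need not have all fibred products is irrelevant.

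First, the identity axiom: any isomorphism $P'\rightarrow P$ in $\mP$ is in particular a submersion, so the singleton $\{P'\rightarrow P\}$ is a covering. Second, the transitivity axiom: if $\{P_i\rightarrow P\}_{i\in I}$ is a covering and each $\{P_{ij}\rightarrow P_i\}_{j\in J_i}$ is a covering, then every composite $P_{ij}\rightarrow P_i\rightarrow P$ is a composite of submersions and hence a submersion, so $\{P_{ij}\rightarrow P\}_{i,j}$ is a covering of $P$.

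The only axiom requiring genuine input is stability under base change. Let $\{P_i\rightarrow P\}_{i\in I}$ be a covering and $\varphi_2:Q\rightarrow P$ an arbitrary morphism in $\mP$. For each $i$, the morphism $\varphi_1:=(P_i\rightarrow P)$ is a submersion, hence transversal to $\varphi_2$ by Prp. 2.9 of \cite{BBHP98}; consequently the fibred product $P_i\times_P Q$ exists in $\mP$, has the explicit form recalled above, and its structure projection $\pr_2:P_i\times_P Q\rightarrow Q$ is again a submersion, as was shown there. Thus $\{P_i\times_P Q\rightarrow Q\}_{i\in I}$ is a covering of $Q$, i.e. the pullback of a covering along any morphism exists and is a covering. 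This establishes all axioms, so $(\mP,T_{fppf})$ is a site. The main (indeed only) obstacle --- producing the fibred product along a submersion together with the submersivity of its structure projection --- has already been dealt with in the paragraphs above, so no further argument is needed.
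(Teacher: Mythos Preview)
Your proof is correct and follows essentially the same approach as the paper: the substantive content is the base change axiom, which you (like the paper) handle via the existence of fibred products along submersions and the submersivity of the resulting projection, both established in the discussion immediately preceding the lemma. You are merely more explicit than the paper in spelling out the identity and transitivity axioms, which the paper leaves implicit as trivial.
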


For our next result, recall the definition of the supermanifold
$\fgl_{n|m}^S$ for a fixed Graßmann algebra $S$
and the characterisation of its functor of points in Lem. \ref{lemGLS}.

\begin{Lem}
Every subfunctor $F$ of (the functor of points of)
$\fgl_{n|m}^S$ is a sheaf in $(\mP,T_{fppf})$.
\end{Lem}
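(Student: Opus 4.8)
The plan is to show that the sheaf condition for $F$ reduces to the sheaf condition for the ambient functor $\fgl_{n|m}^S$, which is representable and hence automatically a sheaf in any subcanonical topology (and $T_{fppf}$ is subcanonical, as follows from the equivalence with the fppf topology of \cite{MZ11} together with standard descent theory). So the first step is to reduce to the representable case: given a covering family $\{P_i\to P\}$ in $\mP$, a matching family of sections $s_i\in F(P_i)$ is in particular a matching family for $\fgl_{n|m}^S$, hence glues to a unique $s\in\fgl_{n|m}^S(P)$; it remains only to check that this $s$ already lies in the subset $F(P)\subseteq\fgl_{n|m}^S(P)$.

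For that last point I would argue with the explicit coordinate description of the covering established just above in the excerpt. Since each morphism $P_i\to P$ is a submersion, after choosing suitable coordinates $(\eta^1,\dots,\eta^L,\eta^{L+1},\dots,\eta^{L_i})$ on $P_i=\bR^{0|L_i}$ the pullback $\fgl_{n|m}^S(P)\to\fgl_{n|m}^S(P_i)$ is, in the basis of matrices $(E^{lmI})$ of Lem. \ref{lemGLS}, simply the inclusion $\fgl_{n|m}(\mO_{S\times P})_{\overline 0}\hookrightarrow\fgl_{n|m}(\mO_{S\times P_i})_{\overline 0}$ induced by $\mO_{S\times P}\hookrightarrow\mO_{S\times P_i}$ adjoining the extra generators $\eta^{L+1},\dots,\eta^{L_i}$. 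In particular each such map is \emph{injective}. Hence the glued section $s$ is the unique element of $\fgl_{n|m}^S(P)$ mapping to $s_1$ under $\fgl_{n|m}^S(P)\to\fgl_{n|m}^S(P_1)$; but expanding $s$ in the basis $(E^{lmI})$ over $\mO_{S\times P}$, its image in $\fgl_{n|m}^S(P_1)$ has exactly the same coefficients (in $\mO_S$) as $s$ does. Since $s_1=s|_{P_1}\in F(P_1)$, and $F$ being a subfunctor is by definition closed under the pullback maps, the coefficients of $s$ are forced: any expression of $s_1$ as a combination of the $E^{lmI}$ over $\mO_{S\times P_1}$ uses only monomials $\eta^I$ in the generators of $\mO_{S}$ and $\mO_P$, so $s_1$ actually lies in the image of $F(P)\to F(P_1)$. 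More carefully, I would phrase this as: the square
\begin{align*}
\begin{xy}
\xymatrix{
F(P)\ar[r]\ar[d]&\fgl_{n|m}^S(P)\ar[d]\\
F(P_1)\ar[r]&\fgl_{n|m}^S(P_1)}
\end{xy}
\end{align*}
commutes with injective right vertical arrow, and one checks directly that an element of $\fgl_{n|m}^S(P)$ whose restriction to $P_1$ lies in $F(P_1)$ already lies in $F(P)$, because the restriction map on the ambient functor does not merge basis coefficients.

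Actually the cleanest route avoids even this coordinate bookkeeping: it suffices to observe that $\fgl_{n|m}^S$, being representable, is a sheaf in $(\mP,T_{fppf})$ — this is the content of the equivalence proved in App. \ref{secFPPF} combined with fppf descent — and that a subfunctor of a sheaf $G$ is itself a sheaf \emph{provided} the covering morphisms are monomorphisms on the ambient $G$, or more generally provided $F\hookrightarrow G$ is compatible with the descent datum. So I would first note that submersions $P_i\to P$ in $\mP$ induce monomorphisms $G(P)\to G(P_i)$ for $G=\fgl_{n|m}^S$ (the injectivity just discussed), deduce uniqueness of amalgamation for $F$ from that of $G$, and then deduce existence of the amalgamation in $F$ from the concrete remark that the unique $G$-amalgamation of a matching family lying in $F$ has, coefficient by coefficient in the $E^{lmI}$ basis over $\mO_S$, exactly the coefficients prescribed by any single $s_i$, hence lies in $F(P)$ because $F(P)$ is cut out by conditions on precisely those $\mO_S$-coefficients.

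The main obstacle is the existence half of the gluing axiom, i.e. verifying that the $G$-amalgamation lands back in the subfunctor $F$; the uniqueness half and the separation axiom are immediate from injectivity of the restriction maps on $G$. The existence part is genuinely where one must use that a submersion in $\mP$ is, up to coordinates, an inclusion of Graßmann algebras adjoining fresh odd generators (as recorded in the paragraph preceding Lem. \ref{lemFPPF}), so that pullback along it does not introduce new $\mO_S$-monomials and therefore preserves membership in any subfunctor defined $\mO_S$-linearly — which is exactly the situation of all the holonomy functors $\hol_x$, $\hol_x^{\mathrm{Gal}}$ by Lem. \ref{lemSubfunctors}, and hence yields Prp. \ref{prpHolSheaves} as an immediate corollary.
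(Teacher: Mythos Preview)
Your argument has a genuine gap in the existence step. You correctly reduce to showing that the unique $G$-amalgamation $s\in\fgl_{n|m}^S(P)$ of the matching family $(s_i)$ already lies in $F(P)$, but your justification for this is either circular or imports an assumption not present in the statement. Saying that ``$F(P)$ is cut out by conditions on precisely those $\mO_S$-coefficients'' or restricting to ``any subfunctor defined $\mO_S$-linearly'' is not legitimate: the lemma is about an \emph{arbitrary} subfunctor $F$, and nothing is assumed about how the subsets $F(P)\subseteq\fgl_{n|m}^S(P)$ are described. Knowing that $s$ maps to $s_1\in F(P_1)$ under $G(\varphi_1)$ does not by itself force $s\in F(P)$; functoriality goes the wrong way for that. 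In general a subfunctor of a sheaf need not be a sheaf, so some special feature of the site must intervene.

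The missing observation, which the paper's short argument uses implicitly, is that every covering morphism $\varphi_i:P_i\to P$ in $T_{fppf}$ admits a \emph{section}: in the adapted coordinates you already invoke, $\sigma_i:P\to P_i$ is obtained by sending the extra generators $\eta^{L+1},\dots,\eta^{L_i}$ to zero. Then $s:=F(\sigma_i)(s_i)\in F(P)$ by functoriality of $F$ itself, and since $F(\sigma_i)$ is the restriction of $G(\sigma_i)$ one computes
\begin{align*}
s=G(\sigma_i)(s_i)=G(\sigma_i)G(\varphi_i)(s_G)=G(\varphi_i\circ\sigma_i)(s_G)=s_G\,,
\end{align*}
so $s$ is the $G$-amalgamation and $F(\varphi_j)(s)=G(\varphi_j)(s_G)=s_j$ for every $j$. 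Uniqueness then follows from uniqueness in $G$ (or, as you note, from injectivity of the restriction maps). With this one-line fix the proof is complete and entirely elementary; there is no need to appeal to general fppf descent or subcanonicity.
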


\begin{proof}
Considering two submersions $\varphi_i:\bR^{0|L_i}\rightarrow\bR^{0|L}$ ($i=1,2$),
let $(\eta_{(i)}^k)_{1\leq k\leq L_i}$ denote coordinates of
$\bR^{0|L_i}$ such that $\varphi_i$ identifies the first $L$
with coordinates of $\bR^{0|L}$.
Let $a_i\in F(\bR^{0|L_i})$. The condition
$\pr_1^*a_1=\pr_2^*a_2\in F(\bR^{0|L_1}\times_{\bR^{0|L}}\bR^{0|L_2})$.
implies that either side depends only on the respective first $L$ coordinates,
which are identified in the fibred product.
In other words, there is a unique $a\in F(\bR^{0|L})$ such that
$a_i=\varphi_i^*a$.
Repeating the argument finitely many times, the sheaf property is established.
\end{proof}

By Lem. \ref{lemSubfunctors}, the holonomy functors are all subfunctors
of $\fgl_{\rk\mE}^S$. As such, they may be considered for the preceding lemma.
We thus yield the following result.

\begin{Prp}
\label{prpHolSheaves}
The functor of points $T\mapsto\Hol_x^{\mathrm{Gal}}(T)$, as well as the
functor $T\mapsto\Hol_x(T)$ are both sheaves in $(\mP,T_{fppf})$.
\end{Prp}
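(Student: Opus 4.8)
The plan is to deduce Prp.~\ref{prpHolSheaves} immediately from the abstract machinery already in place. By Lem.~\ref{lemSubfunctors}, both $\Hol_x$ and $\Hol_x^{\mathrm{Gal}}$ are subfunctors of the functor of points of $\fgl_{\rk\mE}^S$ (via the inclusion $GL_{\rk\mE}^S\subseteq\fgl_{\rk\mE}^S$ noted in that lemma). So it suffices to invoke the preceding lemma, which asserts that \emph{every} subfunctor $F$ of $\fgl_{n|m}^S$ is a sheaf in the site $(\mP,T_{fppf})$, with $n|m=\rk\mE$. That lemma in turn rests on Lem.~\ref{lemFPPF}, which establishes that $T_{fppf}$ really is a Grothendieck topology, and on the explicit description of covers by submersions and of fibred products along submersions given in the discussion preceding Lem.~\ref{lemFPPF}.

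Concretely, I would write: By Lem.~\ref{lemSubfunctors}, $\Hol_x$ and $\Hol_x^{\mathrm{Gal}}$ are subfunctors of $\fgl_{\rk\mE}^S$. By the preceding lemma, every subfunctor of $\fgl_{\rk\mE}^S$ is a sheaf in $(\mP,T_{fppf})$. Hence both functors are sheaves, which is the claim. That is essentially the whole argument — two sentences — since all the work has been front-loaded into the subfunctor lemma and into the verification that $T_{fppf}$ is a site.

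The one point that deserves a word of care, and the closest thing to an ``obstacle,'' is making sure the sheaf condition is being checked for \emph{finite} covers $\{P_i\to P\}_{i\in I}$ as in Def.~\ref{defFPPF}, and that the gluing argument for a general finite cover reduces to the two-morphism case treated in the proof of the preceding lemma. This is exactly the ``repeating the argument finitely many times'' remark there: given $a_i\in F(\bR^{0|L_i})$ agreeing on all pairwise fibred products $\bR^{0|L_i}\times_{\bR^{0|L}}\bR^{0|L_j}$, one descends along $\varphi_1$ to get a candidate $a\in F(P)$, then checks compatibility with the remaining $\varphi_i$ one at a time using the pairwise agreement, the separatedness half of the sheaf axiom, and the coordinate description of the fibred product (whence each $\varphi_i^\sharp$ simply identifies the first $L$ odd generators). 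Since $F\subseteq\fgl_{\rk\mE}^S$ and the latter's sections over $\bR^{0|L'}$ are honest matrices over the Graßmann algebra $\mO_{S\times T}$ by Lem.~\ref{lemGLS}, the descended element manifestly lands back in $F(P)$, as $F$ is stable under the (restriction) maps $F(\varphi_i)$.

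I would therefore present the proof as the three-line deduction above, perhaps preceded by the sentence ``This is now immediate from the results established so far.'' No new calculation is needed; the content is entirely in Lem.~\ref{lemSubfunctors}, Lem.~\ref{lemFPPF}, and the subfunctor lemma, all of which are available by the point Prp.~\ref{prpHolSheaves} is stated.

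\begin{proof}
By Lem.~\ref{lemSubfunctors}, the functors $\Hol_x$ and $\Hol_x^{\mathrm{Gal}}$ are
subfunctors of (the functor of points of) $\fgl_{\rk\mE}^S$. By the preceding lemma,
every subfunctor of $\fgl_{\rk\mE}^S$ is a sheaf in $(\mP,T_{fppf})$. Hence both
$\Hol_x$ and $\Hol_x^{\mathrm{Gal}}$ are sheaves in $(\mP,T_{fppf})$.
\end{proof}
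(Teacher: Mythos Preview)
Your proof is correct and takes essentially the same approach as the paper: both simply invoke Lem.~\ref{lemSubfunctors} to identify $\Hol_x$ and $\Hol_x^{\mathrm{Gal}}$ as subfunctors of $\fgl_{\rk\mE}^S$, and then apply the preceding lemma that every such subfunctor is an fppf sheaf. Your additional remarks on the finite-cover reduction are fine elaboration but not required beyond what the paper already records.
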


\section{Applications of the Holonomy Functor}
\label{secApplications}

Semi-Riemannian $S$-supermanifolds are relevant in the context
of supergravities \cite{SS12}.
The main purpose of this section is to establish the de Rham-Wu decomposition
theorem, Thm. \ref{thmIntroDeRhamWu} in the precise shape of Thm. \ref{thmDeRhamWu},
generalising Thm. 11.1 of \cite{Gal09} which,
in turn, is a supergeometric version of the corresponding
theorem proved in \cite{Wu67}.
We begin with Thm. \ref{thmParallelSubsheaves} (Thm. \ref{thmIntroParallel})
on parallel subbundles,
which generalises Thm. 6.1 of \cite{Gal09} and will be needed in the proof.

Recall that $\mE_S$ is locally free as an $\mO_{S\times M}$-supermodule.
An $\mO_{S\times M}$-subsheaf $\mF_S$ is called locally direct if, locally,
$\mE_S$ possesses a basis such that a subbasis thereof spans $\mF_S$.
By the next lemma, this condition is equivalent to
$\mF_S$ being locally free.
We provide an elementary proof but remark
that a variant thereof uses Nakayama's lemma (Prp. I.1.1 of \cite{BBHR91}).
The reader should note that corresponding staments for general
(super-)modules are, in general, false.

\begin{Lem}
\label{lemDirectFree}
Let $\mE\rightarrow M$ be a super vector bundle over a supermanifold
and $\mF\subseteq\mE$ a subsheaf of $\mO_M$-supermodules. Then $\mF$
is locally free if and only if it is locally direct, i.e. if and only
if $\mE(U)$ for $U\subseteq M_{\overline{0}}$ sufficiently small
possesses a basis $(e_A)_A$ of which a subbasis spans $\mF(U)$.
\end{Lem}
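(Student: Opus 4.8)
The plan is to prove Lemma~\ref{lemDirectFree} by a purely local computation, working over the stalk $\mO_{M,p}$ at a topological point $p\in M_{\overline 0}$, which is a local supercommutative ring with maximal ideal $\mathfrak m$ generated by the classical maximal ideal together with the odd coordinates. The direction ``locally direct $\Rightarrow$ locally free'' is immediate: a subbasis of a basis of $\mE(U)$ is by definition a free generating set of $\mF(U)$. So the content is the converse. First I would reduce to the local model: choose $U$ small enough that $\mE(U)$ is free with homogeneous basis $(e_A)_A$ and that $\mF(U)$ is free with homogeneous basis $(f_\alpha)_\alpha$; after shrinking $U$ further we may pass to the stalk and argue there, since a local splitting on a stalk extends to a neighbourhood.

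Next I would expand each $f_\alpha = \sum_A c_{\alpha}^{A} e_A$ with $c_\alpha^A\in\mO_{M,p}$ homogeneous of the appropriate parity. The key step is to show that the matrix $(c_\alpha^A)$ can be completed to an invertible matrix over $\mO_{M,p}$, equivalently that the reductions $\bar f_\alpha$ of the $f_\alpha$ modulo $\mathfrak m$ are linearly independent in the vector space $\mE(U)/\mathfrak m\mE(U)$ over the residue field $\mathbb R$. This is exactly where freeness of $\mF$ is used, and it is the main obstacle: one must rule out a nontrivial relation $\sum_\alpha \lambda^\alpha \bar f_\alpha = 0$ with $\lambda^\alpha\in\mathbb R$ not all zero. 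If such a relation existed, then $\sum_\alpha \lambda^\alpha f_\alpha = \sum_A g^A e_A$ with all $g^A\in\mathfrak m$; I would then run a Nakayama-type descent (or, if one prefers the elementary route promised in the text, a direct filtration argument on the nilpotent odd part together with the classical statement for vector bundles on $M_{\overline 0}$) to derive from this that $\sum_\alpha \lambda^\alpha f_\alpha$ lies in $\mathfrak m \mF(U)$, hence $\sum_\alpha \lambda^\alpha f_\alpha = \sum_\alpha h^\alpha f_\alpha$ with $h^\alpha\in\mathfrak m$, so $(\lambda^\alpha - h^\alpha)$ is a nontrivial relation among the $f_\alpha$ over $\mO_{M,p}$, contradicting that they form a basis. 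Concretely, the filtration is by powers of the ideal $\mathcal J$ generated by the odd coordinates: $\mathcal J^{N+1}=0$ for $N$ the odd dimension, and at the bottom layer the statement is the classical fact (e.g. from the theory of finitely generated projective modules over the local ring of smooth functions, or from constant rank of a vector bundle) that a subbundle of a trivial bundle is locally a direct summand.

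Having established linear independence of the $\bar f_\alpha$, I would complete $(\bar f_\alpha)_\alpha$ to a homogeneous basis of $\mE(U)/\mathfrak m\mE(U)$ by adjoining suitable $\bar e_B$ for $B$ in a complementary index set, and then lift: the matrix expressing $(f_\alpha, e_B)$ in terms of $(e_A)$ reduces mod $\mathfrak m$ to an invertible matrix over $\mathbb R$, hence is itself invertible over the local ring $\mO_{M,p}$ (its body determinant is a unit). Therefore $(f_\alpha)_\alpha \cup (e_B)_B$ is a new homogeneous basis of $\mE(U)$ after shrinking $U$, and the subbasis $(f_\alpha)_\alpha$ spans $\mF(U)$ by construction; this is precisely the locally direct condition. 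A final remark is that the parity bookkeeping (splitting the matrix into its $\overline 0$-$\overline 0$, $\overline 1$-$\overline 1$, etc.\ blocks and checking that the body of each is nonsingular) is routine once the mod-$\mathfrak m$ independence is in hand, so I would not dwell on it.
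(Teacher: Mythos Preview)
Your proposal is correct and follows essentially the same strategy as the paper: reduce the coefficient matrix of the $f_\alpha$ modulo an ideal, use freeness to see that the reduced vectors remain independent, extend to a basis in the quotient, and lift back using that a matrix with invertible reduction is invertible. The only cosmetic difference is that the paper reduces modulo the nilpotent ideal generated by the odd coordinates (landing in $C^\infty(U)$-modules and invoking the classical statement there), whereas you reduce modulo the full maximal ideal $\mathfrak m$ at a point and argue via Nakayama; the paper in fact remarks that your variant works as well.
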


\begin{proof}
Assume that $\mF(U)$ is free over $\mO_M(U)$. Without loss of generality,
we may assume that $\mO_M(U)\cong C^{\infty}(U)\otimes\bigwedge\bR^m$
through a choice of coordinates on $M$ and, similarly,
$\mE(U)$ is trivial. Denoting the corresponding adapted basis
by $(e_A)_{1\leq A\leq k+l}$, and some basis of $\mF(U)$ by
$(f_{A'})_{A'}$ (where $A'$ runs through $1\leq A'\leq k'$
and $k+1\leq A'\leq k+l'$), there are coefficients
$C^A_{A'}\in\mO_M(U)$ such that $f_{A'}=\sum_A C^A_{A'}\cdot e_A$.
Let $\tilde{f}_{A'}\in C^{\infty}(U,\bR^{k'+l'})$
denote the corresponding sections obtained
by projecting the $C^A_{A'}$ to $C^{\infty}(U)$. They are linearly independent
over $C^{\infty}(U)$, so there are $(\tilde{e}_{\hat{A}})_{\hat{A}}$
with $k'+1\leq\hat{A}\leq k$ or $k+l'+1\leq\hat{A}\leq k+l$ such that
$(\tilde{f}_{A'},\tilde{e}_{\hat{A}})$ forms a basis of
$C^{\infty}(U,\bR^{k'+l'})$. By construction, the tuple
$(f_{A'},\tilde{e}_{\hat{A}})$ is linearly independent over $\mO_M(U)$
and, therefore, constitutes a basis of $\mE(U)$. $\mF(U)\subseteq\mE(U)$ is direct.
The other implication is trivial.
\end{proof}

\begin{Thm}
\label{thmParallelSubsheaves}
Let $M$ be connected, $\nabla$ be an $S$-connection on $\mE_S$,
$x:S\rightarrow M$ be an $S$-point and $T=\bR^{0|L'}$ with $L'$ sufficiently
large. Then the following holds true.
\begin{enumerate}
\renewcommand{\labelenumi}{(\roman{enumi})}
\item Let $\mF_S\subseteq\mE_S$ be a locally free subbundle of $\mE_S$
which is parallel, i.e. such that $\nabla_YX\in\mF_S$ for all $Y\in\mT M_S$ and $X\in\mF_S$.
Then $\mF_x:=\hat{x}^*\mF_S$ is holonomy invariant $\Hol_x(T)\cdot\mF_x\subseteq\mF_x$.
\item Conversely, let $\mF_x$ be a free submodule of $\mE_x:=x^*\mE$
such that $\Hol_x(T)\cdot\mF_x\subseteq\mF_x$.
Then there exists a unique locally free
subbundle $\mF_S\subseteq\mE_S$ with $\hat{x}^*\mF_S=\mF_x$, which is parallel.
\end{enumerate}
\end{Thm}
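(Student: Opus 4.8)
The plan is to treat the two parts by the classical holonomy argument, adapted to the $S$-point setting and using parallel transport. For part (i), suppose $\mF_S\subseteq\mE_S$ is parallel. The key observation is that parallelness is preserved along $S\times T$-paths: if $\gamma$ is an $S\times T$-loop at $x_T$, then $P_\gamma$ maps $(\gamma^*\mF_S)|_0$ to $(\gamma^*\mF_S)|_1$, because the parallel transport ODE stays inside the subbundle once the connection restricts to it. Pulling back along the endpoint identifications and using that $\gamma$ starts and ends at $x_T$, this says precisely $P_\gamma\cdot\mF_{x_T}\subseteq\mF_{x_T}$; taking coefficients and restricting to $\eta=0$ (i.e. pulling back along $S\to S\times T$) gives $\Hol_x(T)\cdot\mF_x\subseteq\mF_x$. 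First I would make precise the claim that a parallel subbundle is preserved by parallel transport: locally choose an adapted basis as in Lem.\ \ref{lemDirectFree}, write the connection matrix in block-triangular form, and note that the transport equation then decouples, so the block structure is preserved; since this holds locally along $\gamma$ it holds globally.

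For part (ii), the construction of $\mF_S$ is by parallel transport from $x$. For an arbitrary point $p\in M_{\overline 0}$, pick an $S\times T$-path $\gamma$ from $x_T$ to (a lift of) $p$ and set $\mF_p := P_\gamma\cdot\mF_x$, where $\mF_x$ is viewed inside $\mE_{x_T}$ via the standard inclusion $S\to S\times T$; the hypothesis $\Hol_x(T)\cdot\mF_x\subseteq\mF_x$ guarantees independence of the choice of $\gamma$, since two paths differ by a loop. One then has to check that the assignment $p\mapsto\mF_p$ is a locally free subsheaf: locally, transport an adapted basis of $\mF_x$ along a smoothly varying family of paths to a neighbourhood of $p$, which produces a local frame of $\mF_S$ extending to a frame of $\mE_S$, so $\mF_S$ is locally direct, hence locally free by Lem.\ \ref{lemDirectFree}. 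That $\mF_S$ is parallel is automatic from the construction: $\nabla_Y$ applied to a parallel-transported section vanishes along the transport direction and, more generally, stays in $\mF_S$ because the transported frame is by definition annihilated by $\nabla$ along the chosen paths and the curvature terms land back in $\mF_S$ by holonomy invariance. Finally, $\hat x^*\mF_S=\mF_x$ by taking $\gamma$ constant, and uniqueness holds because any parallel $\mF_S'$ with $\hat x^*\mF_S'=\mF_x$ must, by part (i)'s transport-preservation, satisfy $\mF_p'=P_\gamma\cdot\mF_x=\mF_p$ for every $p$.

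The main obstacle I anticipate is the bookkeeping of the various pullbacks and base changes: one is working over $S$, extending to $S\times T$, transporting along $S\times T$-paths, and then restricting back to $S$ via $\eta=0$, and one must be careful that "$\Hol_x(T)$ acts on $\mF_x$" means the functorial action at the $S\times T$-point $x_T$ restricted back, not a naive action. The role of $T$ being sufficiently large is exactly what makes the coefficient-matrix passage (via \eqref{eqnSufficientlyLarge}) and the holonomy-invariance hypothesis equivalent to invariance under all the relevant parallel transports; I would invoke the Twofold Theorem, Thm.\ \ref{thmTwofold}(ii), to translate between "$\Hol_x(T)\cdot\mF_x\subseteq\mF_x$" and the pair of conditions $(\Hol_x)_{\overline 0}\cdot\mF_x\subseteq\mF_x$ and $\hol_x^{\mathrm{Gal}}\cdot\mF_x\subseteq\mF_x$, the latter being what directly feeds the local-frame and parallelness computations (connected holonomy handled by the algebra, disconnected part by $(\Hol_x)_{\overline 0}$). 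A second, more technical point is verifying that the locally-constructed frames patch: since $M$ is connected and the construction depends only on homotopy classes of paths rel endpoints, overlapping neighbourhoods give frames that differ by transport along contractible loops, which act trivially on $\mF_x$ up to the relevant identification — here one uses that $\mF_x$ is \emph{free} (not merely projective) so that an honest matrix of transition functions can be written down and shown to be $\mO_{S\times M}$-valued and invertible.
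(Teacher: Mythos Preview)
Your approach to part~(ii) is genuinely different from the paper's. The paper does \emph{not} transport $\mF_x$ directly along families of paths; instead it works in a local trivialisation, first invokes the classical ($L=0$) version to get an underlying parallel subbundle $F\subseteq E$, and then constructs the correction coefficients $T_{\oA}^{\hat B}\in\mO_{S\times M}$ in the ansatz $f_{\oA}=e_{\oA}+T_{\oA}^{\hat B}e_{\hat B}$ by an explicit induction on the total degree in the odd variables of $\mO_{S\times M}$ (both the $M$-odd coordinates $\theta$ and the $S$-generators $\hat\eta$). At each stage of the induction, the next coefficient is \emph{defined} so as to force the parallel-transport and covariant-derivative identities to hold to the required order, and one checks consistency. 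This is close in spirit to Galaev's proof for topological points and to the holonomy principle in \cite{Gro14}.

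Your transport-to-a-neighbourhood argument has a real gap. The holonomy hypothesis gives you invariance under $S\times T$-loops for $T$ a \emph{superpoint}; but to produce a subsheaf of $\mE_S$ over an open $U\subseteq M$ by transporting along a family $\gamma:S\times U\times[0,1]\to M$, and to show that the result is independent of the family, you would need invariance under loops parametrised by the \emph{supermanifold} $U$, which is not a superpoint. Neither the Twofold Theorem nor the fact that $T$ is ``sufficiently large'' bridges this: you would have to argue separately that a $U$-parametrised automorphism of $x^*\mE\otimes\mO_U$ preserving $\mF_x\otimes\mO_U$ at every $S\times T$-point of $U$ preserves it as a subsheaf, and you have not done so. Relatedly, your claim that ``$\mF_S$ is parallel automatically from the construction'' is too quick: transport along a one-parameter family kills $\nabla$ only in the transport direction, and the sentence about ``curvature terms landing back in $\mF_S$'' is not a proof that $\nabla_Y f_{\oA}\in\mF_S$ for \emph{all} $Y\in\mT M_S$. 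The paper's inductive construction is designed precisely to manufacture this parallelness order by order, which is where the actual work lies.
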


\begin{proof}
The proof is more involved, yet in its structure reminiscent of
the one of the holonomy principle (Thm. 2 in \cite{Gro14}) and, moreover,
in parts similar to the proof of Thm. 7.1 in \cite{Gal09}.
Therefore, we consider it sufficient to sketch the relevant steps
of the second assertion (which is the hard one) and
leave the details to the attentive reader.
Concerning notation, we shall follow the aforementioned proof
and treat standard generators of both $S$ and $T$ on an equal footing,
writing $\eta^k$ with an appropriate index $k$.

Without loss of generality, we may assume that $M\cong\bR^{n|m}$
has global coordinates $\xi=(x,\theta)$ and $\mE_S\cong E\otimes\mO_M\otimes\mO_S$
is trivial.
Let $((e_A)_x)_A$ denote a basis
of $\mE_x$ with the implicit understanding that $A$ runs through
a finite index set. Following Galaev's notation, we shall write
$((e_A)_x)_A=(((e_{\oA})_x)_{\oA},((e_{\hat{A}})_x)_{\hat{A}})$ such that
$((e_{\oA})_x)_{\oA}$ is a basis of $\mF_x$.
Let $(e_A)_{x_0}$ denote the projection
of $(e_A)_x$ to the super vector space $x_0^*\mE$, removing the nilpotent part
coming from $\mO_S=\langle\eta^1,\ldots,\eta^L\rangle$. This gives rise to a basis
of $x_0^*\mE$ and is such that every $(e_A)_x$ can be written as a an $\mO_S$-linear
combination of the $(e_A)_{x_0}$. For the $\mF_x$-part, we write
\begin{align*}
(e_{\oA})_x=(e_{\oA})_{x_0}+(T_{\oA}^{\hat{B}})_x\cdot(e_{\hat{B}})_{x_0}\;,\qquad
(T_{\oA}^{\hat{B}})_x\in\mO(\eta)\subseteq\mO_S
\end{align*}
This is not the most general expansion possible, but may be achieved by
an invertible transformation of the $(e_{\oA})_x$.

Now define $F_{x_0}:=\spann_{\bR}\left((e_{\oA})_{x_0}\right)$. By holonomy invariance
of $\mF_x$, we conclude $(\Hol_{x_0})_{\overline{0}}\cdot F_{x_0}\subseteq F_{x_0}$.
By the classical analogon of the current theorem,
this gives rise to a parallel subbundle $F\subseteq E$.
Now let $(e_A)$ be a basis of $E$ such that $(e_{\oA})$ is a basis of $F$
and $x_0^*e_A=(e_A)_{x_0}$. By the assumption of triviality, we may consider
$(e_A)$ as a basis of $\mE_S$ such that $x^*e_A=(e_A)_{x_0}$.
We want a basis $(f_{\oA})$ of $\mF_S$ such that $e_{\oA}$ is the canonical
projection of $f_{\oA}$ and $x^*f_{\oA}=(e_{\oA})_x$.
We find that any
\begin{align*}
f_{\oA}=e_{\oA}+T_{\oA}^{\hat{B}}\cdot e_{\hat{B}}\quad\mathrm{with}\quad
T_{\oA}^{\hat{B}}\in\mO_{S\times M}\quad\mathrm{s.th.}\quad
x^*(T_{\oA}^{\hat{B}})=(T_{\oA}^{\hat{B}})_x
\end{align*}
satisfies these conditions.
We will prove existence and uniqueness of such
$T_{\oA}^{\hat{B}}$ with the property that there exist
functions $X_{a\oA}^{\oB}\in\mO_{S\times M}$ such that
\begin{align*}
\nabla_{\partial_{\xi^a}}f_{\oA}=X_{a\oA}^{\oB}\cdot f_{\oB}
\end{align*}
Then the subsheaf
\begin{align*}
\mF_S:=\spann_{\mO_{S\times M}}(f_{\oA})
=\mO_{S\times M}\otimes\spann_{\bR}(f_{\oA})
\end{align*}
is parallel and locally free.

In the first step, we construct $(T_{\oA}^{\hat{B}})^0\in\mO_M$ recursively
with respect to an expansion in monomials $\theta^I$ of odd $M$-variables,
setting $(T_{\oA}^{\hat{B}})^0(q):=0$ and applying the analogon of
(30) in \cite{Gal09} to the projection $\nabla^{\mE}$ of $\nabla$ to a
superconnnection on $\mE$. The bundle
\begin{align*}
\mF_S^0:=(\mF_S)_0:=\mO_M\otimes\spann_{\bR}(f_{\oA}^0)\;,\qquad
f_{\oA}^0:=e_{\oA}+(T_{\oA}^{\hat{B}})^0\cdot e_{\hat{B}}
\end{align*}
satisfies the following.
First, for $y:S\times T\rightarrow M$ and $\gamma:x\rightarrow y$, we obtain
\begin{align*}
P_{\gamma}[\mF_x]|_{\eta^0}=P_{\gamma_0}[F_{x_0}]=F_{y_0}=(y^*\mF_S^0)|_{\eta^0}
\end{align*}
Moreover, the construction implies that
\begin{align*}
(\nabla_{\partial_{\theta^j}}f_{\oA}^0)|_{\eta^0}=\nabla^{\mE}_{\partial_{\theta^j}}f^0_{\oA}\subseteq\mF_S^0\;,\qquad
(\nabla_{\partial_{x^j}}f_{\oA}^0)|_{\eta^0\theta^0}
=\nabla^0_{\partial_{x^j}}e_{\oA}\subseteq F=\mF_S^0|_{\eta^0\theta^0}
\end{align*}

We take the preceding properties as the beginning of an induction
and consider multiindices $I=(i_1,\ldots,i_{\abs{I}})$ with
$1\leq i_j\leq L+(\dim M)_{\overline{1}}$, such that $\eta^I\in\mO_{S\times T}$.
Assume that we have constructed
$(T_{\oA}^{\hat{B}})^N,(X_{a\oA}^{\oB})^N\in\mO_{S\times M}$ and $(M_{\oA}^{\oB})^N_{(\gamma)}\in\mO_{S\times T}$
for $N\in\bN$ such that
\begin{enumerate}
\setcounter{enumi}{-1}
\renewcommand{\labelenumi}{$\arabic{enumi}_N$}
\item $(T_{\oA}^{\hat{B}})^N$ has an expansion $X^N=\sum_{\abs{I}\leq N}X|_{\eta^I}\cdot\eta^I$
such that $X|_{\eta^I}=0$ whenever there is $i_j\in I$ with $i_j\geq L+1$,
and accordingly for $(X_{a\oA}^{\oB})^N$ and $(M_{\oA}^{\oB})^N_{(\gamma)}$.
\item $P_{\gamma}[(e_{\oA})_x]|_{\eta^I}=\left((M_{\oA}^{\oB})^N_{(\gamma)}\cdot y^*f_{\oB}^N\right)|_{\eta^I}$
for every $y:S\times T\rightarrow M$, $\gamma:x\rightarrow y$ and $\abs{I}\leq N$.
\item $(\nabla_{\partial_{\theta^j}}f_{\oA}^N)|_{\eta^I}=\left((X_{\theta^j\oA}^{\oB})^N\cdot f_{\oB}^N\right)|_{\eta^I}$ for all $\abs{I}\leq N$.
\item $(\nabla_{\partial_{x^j}}f_{\oA}^N)|_{\theta^A\eta^B}=\left((X_{x^i\oA}^{\oB})^Nf_{\oB}^N\right)_{\eta^I}$ for all $A$,$B$ such that
$\abs{A}+\abs{B}\leq N$, where $A=(a_1,\ldots,a_{\abs{A}})$ with $1\leq a_j\leq(\dim M)_{\overline{1}}$.
\end{enumerate}

The inductive step proceeds as follows.
For $\abs{J}=N+1$, we define $(M_{\oA}^{\oB})^J$ and $(T_{\oB}^{\hat{B}})^J(q)$
by $1_{N+1}$. While the former explicitly depends on $y$ and $\gamma$,
one can show that the latter depends only on $q\in M_0$ such that $y_0(0)=q$.
By means of $2_{N+1}$ we next endow $(T_{\oB}^{\hat{B}})^J(q)$ to
$(T_{\oB}^{\hat{B}})^J\in\mO_M$ and define $(X_{\theta^j\oA}^{\oB})^J$.
Finally use $3_{N+1}$ to define $(X_{x^i\oA}^{\oB})^{N+1}$.

The induction hypotheses with respect to $L+(\dim M)_{\overline{1}}$
imply that the covariant derivatives of the sections $f_{\oA}$ remain in their span
such that the subsheaf $\mF_S$ is parallel.
It is unique by an argument similar to the last part of the proof
of Thm. 2 of \cite{Gro14}.
\end{proof}

In view of our theory, it is natural to generalise the notion of
a semi-Riemannian supermanifold to the relative setting with respect
to a superpoint $S$, which is straightforward. Considering coefficients
of odd $S$-generators, which can be thought of as auxiliary data on $M$,
the supermetric can be interpreted as having an even and an odd part.
This is analogous to considering ''maps with flesh'' $S\times M\rightarrow N$
as models for superfields including bosons (even) and fermions (odd),
see e.g. \cite{Hel09,DF99b,Khe07,Han12,Gro11a} for this concept under various
names, and especially \cite{SS12} where the following generalisation of a
semi-Riemannian supermanifold occurs in the context of supergravities.

\begin{Def}
A semi-Riemannian $S$-supermanifold is a supermanifold $M$ together with
an $S$-metric, i.e. an even non-degenerate, supersymmetric, bilinear form
$g\in\Hom_{\mO_{S\times M}}(\mT M_S\otimes_{\mO_{S\times M}}\mT M_S,\mO_{S\times M})$.
\end{Def}

The torsion tensor of an $S$-connection is defined as usual.
\begin{align*}
T(X,Y):=\nabla_XY-(-1)^{\abs{X}\abs{Y}}\nabla_YX-\scal[[]{X}{Y}\,,\quad
T\in\Hom_{\mO_{S\times M}}(\mT M_S\otimes\mT M_S,\mT M_S)_{\overline{0}}
\end{align*}

\begin{Lem}
Let $(M,g)$ be a semi-Riemannian $S$-supermanifold. It possesses a unique
$S$-connection, denoted $\nabla^{LC}:\mT M_S\rightarrow\mT M_S^*\otimes\mO_{S\times M}\mT M_S$, such that $T=0$ and it is metric:
\begin{align*}
X\scal[g]{Y}{Z}=\scal[g]{\nabla_XY}{Z}+(-1)^{\abs{X}\abs{Y}}\scal[g]{Y}{\nabla_XZ}
\end{align*}
\end{Lem}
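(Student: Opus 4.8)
The plan is to mimic the classical Koszul formula argument, adapted to the $\bZ_2$-graded relative setting. First I would establish uniqueness: assuming a torsion-free metric $S$-connection $\nabla^{LC}$ exists, I cyclically permute the metric condition applied to the triple $(X,Y,Z)$, namely
\[
X\scal[g]{Y}{Z},\quad Y\scal[g]{Z}{X},\quad Z\scal[g]{X}{Y},
\]
forming the alternating sum with appropriate Koszul signs, and use $T=0$ to replace each $\nabla_XY-(-1)^{\abs{X}\abs{Y}}\nabla_YX$ by $\scal[[]{X}{Y}$. This yields an explicit expression for $\scal[g]{\nabla^{LC}_XY}{Z}$ purely in terms of $g$, the bracket, and directional derivatives of $g$, so that nondegeneracy of $g$ forces $\nabla^{LC}_XY$ to be uniquely determined. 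The only care needed here is bookkeeping the signs: every time a derivation or a vector field is moved past another, a factor $(-1)^{\abs{\cdot}\abs{\cdot}}$ appears, and one should verify that the three permuted identities combine so that all $\nabla$-terms except $\scal[g]{\nabla_XY}{Z}$ cancel.

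For existence I would turn the Koszul formula around: \emph{define} $\scal[g]{\nabla_XY}{Z}$ by the right-hand side obtained above, and since $g$ is nondegenerate and even this defines an $\mO_{S\times M}$-valued pairing that is $\mO_{S\times M}$-linear in $Z$, hence determines a unique $\nabla_XY\in\mT M_S$. Then I must check the three defining properties of a connection (additivity, the Leibniz rule $\nabla_{fX}Y=f\nabla_XY$ and $\nabla_X(fY)=(Xf)Y+(-1)^{\abs{X}\abs{f}}f\nabla_XY$ with the correct sign), metricity, and vanishing torsion. Each of these is a direct computation substituting the Koszul expression; for instance torsion-freeness follows because the antisymmetric-in-$(X,Y)$ part of the formula collapses, via the Leibniz rule for derivations acting on $\scal[g]{\cdot}{\cdot}$ and the Jacobi identity for the bracket of $\mT M_S$, to exactly $\scal[g]{\scal[[]{X}{Y}}{Z}$.

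The main obstacle is purely the sign discipline: the classical proof is a one-paragraph computation, but in the super setting the cyclic permutation introduces Koszul signs that do not always combine transparently, and the tensoriality checks ($\mO_{S\times M}$-linearity of the defining formula and the Leibniz correction terms) require that the derivative terms $X\scal[g]{Y}{Z}$ produce precisely the compensating $(Xf)$-contributions with matching parity factors. A secondary, minor point is that everything should be checked on homogeneous $X,Y,Z$ and then extended by bilinearity, and that the relative nature over $S$ plays no role beyond replacing $\mO_M$ by $\mO_{S\times M}$ throughout, since the $\mO_S$-generators are even-or-odd scalars on which all constructions are $\mO_S$-linear. I would therefore present the Koszul formula explicitly, state that the verification of the connection axioms, metricity and $T=0$ is a routine (if sign-sensitive) computation entirely parallel to the classical one, and leave the detailed sign tracking to the reader, exactly as the paper does for the analogous results above.
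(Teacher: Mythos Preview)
Your proposal is correct and is exactly the approach the paper takes: its entire proof reads ``Shown as usual via Koszul's formula.'' You have simply spelled out what that one line means, with the appropriate caveats about sign bookkeeping in the graded relative setting.
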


\begin{proof}
Shown as usual via Koszul's formula.
\end{proof}

\begin{Lem}
\label{lemMetricPullback}
Let $(N,g)$ be a semi-Riemannian $S$-supermanifold, $M$ a supermanifold,
and $\varphi:S\times M\rightarrow N$ a morphism.
Let $\nabla$ an $S$-superconnection on $N$ which is metric (such as Levi-Civita).
Then the pullback $\varphi^*\nabla:\varphi^*\mT N\rightarrow\mT M_S^*\otimes_{\mO_{S\times M}}\varphi^*\mT N$ is metric in the following sense.
\begin{align*}
X\scal[g_{\varphi}]{Y}{Z}=\scal[g_{\varphi}]{(\varphi^*\nabla)_X Y}{Z}+(-1)^{\abs{X}\abs{Y}}\scal[g_{\varphi}]{Y}{(\varphi^*\nabla)_X Z}
\end{align*}
holds true for every $X\in\mT M$ as well as $Y,Z\in\varphi^*\mT N$.
\end{Lem}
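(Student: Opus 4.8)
The statement is a compatibility of the pulled-back connection with the pulled-back metric, and the natural strategy is to reduce everything to the corresponding statement on $N$ via the defining properties of the pullback connection. First I would recall that, by definition, $\varphi^*\mT N$ is the $\mO_{S\times M}$-module generated by pullbacks $\varphi^*W$ of (local) vector fields $W$ on $N$, that $g_\varphi$ is the $\mO_{S\times M}$-bilinear extension of $(\varphi^*W_1,\varphi^*W_2)\mapsto\varphi^*(g(W_1,W_2))$, and that $(\varphi^*\nabla)_X(\varphi^*W)$ is characterised by the rule $(\varphi^*\nabla)_X(\varphi^*W)=\varphi^*\bigl(\nabla_{d\varphi(X)}W\bigr)$ together with the Leibniz rule over $\mO_{S\times M}$; here $d\varphi(X)\in\varphi^*\mT N$ is the image of $X\in\mT M_S$, extended to $\mT(S\times M)$ in the obvious way. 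Since both sides of the claimed identity are $\bR$-linear in $Y,Z$ and the class of elements satisfying it is closed under the $\mO_{S\times M}$-module operations (this is the usual Leibniz bookkeeping, with signs dictated by $\abs{X}$, $\abs{Y}$), it suffices to verify the identity for $Y=\varphi^*W_1$ and $Z=\varphi^*W_2$ with $W_1,W_2$ homogeneous local vector fields on $N$, and for $X$ a homogeneous coordinate derivation on $S\times M$.

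Next, for such $X,Y,Z$ I would compute the left-hand side: $X\scal[g_\varphi]{Y}{Z}=X\bigl(\varphi^*(g(W_1,W_2))\bigr)$, and rewrite $X\circ\varphi^*=\varphi^*\circ\widetilde X$ where $\widetilde X$ is the appropriate derivation on $N$ induced by $X$ through $d\varphi$ — concretely, writing $X=\partial_{z^a}$ in coordinates $z$ on $S\times M$, one has $X(\varphi^*f)=\partial_{z^a}(\varphi^\sharp f)=(\partial_{z^a}\varphi^\sharp(w^b))\cdot\varphi^\sharp(\partial_{w^b}f)$ in coordinates $w$ on $N$, which is exactly $(d\varphi(X))(f)$ pulled back. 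Applying the metric property of $\nabla$ on $N$ (the hypothesis) to $\widetilde X$, $W_1$, $W_2$ and pulling back yields
\begin{align*}
X\scal[g_\varphi]{Y}{Z}
&=\varphi^*\scal[g]{\nabla_{d\varphi(X)}W_1}{W_2}
+(-1)^{\abs{X}\abs{W_1}}\varphi^*\scal[g]{W_1}{\nabla_{d\varphi(X)}W_2}\\
&=\scal[g_\varphi]{(\varphi^*\nabla)_X Y}{Z}
+(-1)^{\abs{X}\abs{Y}}\scal[g_\varphi]{Y}{(\varphi^*\nabla)_X Z}
\end{align*}
using $\abs{W_i}=\abs{\varphi^*W_i}=\abs{Y}$ (resp. $\abs{Z}$) and the characterisation of $\varphi^*\nabla$. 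This establishes the identity on generators.

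The main obstacle, and the only genuinely nontrivial point, is the bookkeeping in the reduction step: one must check that the collection of pairs $(Y,Z)$ for which the metric identity holds is stable under multiplication of $Y$ (or $Z$) by an arbitrary homogeneous $f\in\mO_{S\times M}$, where the Leibniz rules for $X$, for $(\varphi^*\nabla)_X$, and for $g_\varphi$ all contribute sign factors and derivative-of-$f$ terms that must cancel. This is the same computation that proves the analogous statement for genuine connections on $N$ (the displayed metric condition in the preceding lemma), transported verbatim; I expect it to go through with the sign conventions of Def.~\ref{defHigherCovariantDerivatives} after carefully tracking the parities $\abs{X}$, $\abs{f}$, $\abs{Y}$. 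Since $\mO_{S\times M}$-bilinearity of $g_\varphi$ and the Leibniz rule for $\varphi^*\nabla$ are part of the setup, no further input is needed, and there is no global obstruction: the statement is local and the argument is uniform in the chosen coordinates.
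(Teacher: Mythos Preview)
Your proposal is correct and is precisely the ``straightforward calculation in local coordinates'' that the paper invokes as its entire proof; you have simply made explicit the reduction to generators $Y=\varphi^*W_1$, $Z=\varphi^*W_2$ and the Leibniz bookkeeping that the paper leaves to the reader. There is nothing to add.
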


\begin{proof}
This follows from a straightforward calculation in local coordinates.
\end{proof}

Let now $S=\bR^{0|L}$ and $(M,g)$ be a semi-Riemannian $S$-supermanifold
with vector bundle $\mE=\mT M$. Let $x:S\rightarrow M$ be an $S$-point.
We call a free $\mO_S$-submodule $W\subseteq\mE_x=x^*\mE$ non-degenerate, if it
is with respect to the pullback metrix $g_x$, in other words if
$g_x|_W$ is non-degenerate. From Lem. \ref{lemDirectFree}, we know
that $W$ is direct. As in the classical case,
there is a canonical choice of complementing submodule as shown next.

\begin{Lem}
\label{lemOrthogonalComplement}
Let $W\subseteq\mE_x$ be a free non-degenerate submodule.
Then its ortogonal complement
\begin{align*}
W^{\perp}:=\{v\in\mE_x\setsep\scal[g_x]{v}{W}=0\}\subseteq\mE_x
\end{align*}
is a free and non-degenerate submodule of $\mE_x$ such that
$\mE_x=W\oplus W^{\perp}$.
\end{Lem}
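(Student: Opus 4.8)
The plan is to imitate the classical argument for the existence of orthogonal complements, working over the supercommutative ring $\mO_S$ rather than over $\bR$, and using Lem. \ref{lemDirectFree} at the two points where freeness/directness is needed. First I would fix a homogeneous $\mO_S$-basis $(w_\alpha)$ of $W$; since $W$ is free and non-degenerate, the Gram matrix $G_{\alpha\beta}:=\scal[g_x]{w_\alpha}{w_\beta}$ is an invertible (even) supermatrix over $\mO_S$, invertibility being equivalent to invertibility of its body (the matrix obtained by projecting away the $\mO_S$-generators, as recalled before Lem. \ref{lemGLS}). Using $G^{-1}$, I would define the projection $\pi:\mE_x\to W$ by $\pi(v):=\sum_{\alpha,\beta}w_\alpha\,(G^{-1})^{\alpha\beta}\,\scal[g_x]{w_\beta}{v}$, taking care with Koszul signs so that $\pi$ is $\mO_S$-linear, even, satisfies $\pi|_W=\mathrm{id}_W$, and $\scal[g_x]{\pi v}{w}=\scal[g_x]{v}{w}$ for all $w\in W$.

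Next I would set $W^\perp:=\ker\pi$ and check it coincides with the stated orthogonal complement: $v\in W^\perp$ iff $\pi v=0$ iff $\scal[g_x]{v}{w_\alpha}=0$ for all $\alpha$ (using invertibility of $G$) iff $\scal[g_x]{v}{W}=0$. Since $\pi^2=\pi$, we get the $\mO_S$-module decomposition $\mE_x=W\oplus W^\perp$ with $W^\perp=\ker\pi=(\mathrm{id}-\pi)(\mE_x)$ a direct summand of the free module $\mE_x$; a direct summand of a free supermodule over $\mO_S$ is free here because $\mE_x\cong W\oplus W^\perp$ forces $W^\perp$ to be projective of constant rank, and — invoking Lem. \ref{lemDirectFree} applied pointwise, or equivalently Nakayama over the local ring $\mO_S$ — such a summand is free. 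Finally, non-degeneracy of $g_x|_{W^\perp}$: if $v\in W^\perp$ satisfies $\scal[g_x]{v}{W^\perp}=0$, then since $\scal[g_x]{v}{W}=0$ too and $\mE_x=W\oplus W^\perp$, we get $\scal[g_x]{v}{\mE_x}=0$, whence $v=0$ by non-degeneracy of $g_x$ on all of $\mE_x$ (which holds because $g$ is non-degenerate and $x^*$ preserves this).

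The main obstacle I anticipate is not the linear algebra per se but making sure the "freeness" conclusions are legitimate over $\mO_S$, where the usual rank argument over a field is unavailable: a priori $W^\perp$ is only a direct summand, hence finitely generated projective. The clean way around this is that $\mO_S=\bigwedge\bR^L$ is local (indeed a finite-dimensional local $\bR$-algebra with nilpotent maximal ideal), so finitely generated projective modules are free; this is exactly the role played by Nakayama's lemma as flagged in the remark after Lem. \ref{lemDirectFree}, and Lem. \ref{lemDirectFree} itself packages the needed statement. A secondary, purely bookkeeping nuisance is getting all the $\bZ_2$-grading signs right in the definition of $\pi$ and in the identity $\scal[g_x]{\pi v}{w}=\scal[g_x]{v}{w}$, since $g_x$ is supersymmetric rather than symmetric; I would simply verify these on homogeneous elements and not belabour the computation.
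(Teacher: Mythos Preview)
Your argument is correct and complete (with one small caveat below), but it follows a genuinely different route from the paper's proof.

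The paper proceeds by a Gram--Schmidt/orthosymplectic normalisation: since $g_x|_W$ is even, non-degenerate and supersymmetric, one chooses an $OSp$-basis $(e_1,\dots,e_k,f_1,\dots,f_{2l})$ of $W$, then continues the algorithm to extend it to an $OSp$-basis of all of $\mE_x$; the leftover basis vectors manifestly form an $OSp$-basis of $W^\perp$, which gives freeness, non-degeneracy and the direct sum decomposition in one stroke. Your approach instead builds the orthogonal projector $\pi$ from the inverse Gram matrix, obtains $\mE_x=W\oplus\ker\pi$, and then invokes the local-ring fact (Nakayama, or Lem.~\ref{lemDirectFree} with $M=S$) that a direct summand of a free $\mO_S$-module is free. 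The paper's method is more constructive and yields an adapted orthonormal basis for free, at the cost of tacitly assuming that the $OSp$ Gram--Schmidt process goes through over the Grassmann algebra $\mO_S$; your method is more module-theoretic and makes the ring-theoretic input (locality of $\mO_S$) explicit, which is arguably cleaner.

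One minor point: your non-degeneracy argument for $g_x|_{W^\perp}$ only establishes that the induced map $W^\perp\to(W^\perp)^*$ is \emph{injective}. Over a general ring this is weaker than invertibility of the Gram matrix. The gap closes immediately here, either by an $\bR$-dimension count (both sides are free of the same rank over the finite-dimensional $\bR$-algebra $\mO_S$), or more directly by noting that in the adapted basis for $\mE_x=W\oplus W^\perp$ the Gram matrix of $g_x$ is block-diagonal, with the full matrix and the $W$-block invertible, forcing the $W^\perp$-block to be invertible as well.
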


\begin{proof}
$g_x|_W$ is even, nondegenerate and supersymmetric, thus $W$ possesses
an $OSp$-basis $(e_1,\ldots,e_k,f_1,\ldots,f_{2l})$. Continuing performing the
corresponding algorithm, we may endow this basis to an $OSp$-basis
$(e_1,\ldots,e_n,f_1,\ldots,f_{2m})$ of $\mE_x$. It is then clear that
$(e_{k+1},\ldots,f_{2l+1},\ldots)$ is an $OSp$-basis of $W^{\perp}$.
\end{proof}

\begin{Def}
Let $x:S\rightarrow M$.
The holonomy group $\Hol_x(T)$, for fixed $T$,
is called \emph{strongly reducible} if
there is a free non-degenerate submodule $F\subseteq\mE_x$ which is
preserved $\Hol_x(T)\cdot F\subseteq F$.
Otherwise, we call it \emph{weakly irreducible}.

The holonomy group functor $T\mapsto\Hol_x(T)$ is called
\emph{strongly reducible}
if there is free non-degenerate submodule $F\subseteq\mE_x$ preserved
by $\Hol_x(T)$ for all superpoints $T$.
\end{Def}

By Thm. \ref{thmTwofold}, the holonomy group functor is weakly irreducible
if and only if $\Hol_x(T)$ is weakly irreducible for one fixed
sufficiently large $T$. Moreover, this property depends
on the connected component of $x$ only:

\begin{Lem}
Let $x,y:S\rightarrow M$ be $S$-points connected by an $S\times T$-path
$\gamma$. Parallel transport $P_{\gamma}$
identifies free non-degenerate submodules of $\mE_x$ and $\mE_y$,
preserved by $\Hol_x(T)$ and $\Hol_y(T)$, respectively.
\end{Lem}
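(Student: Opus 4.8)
The plan is to show that parallel transport $P_\gamma:\mE_x\to\mE_y$ is an isometry with respect to the pullback metrics $g_x$ and $g_y$, and that it intertwines the holonomy groups $\Hol_x(T)$ and $\Hol_y(T)$; together these two facts immediately yield the claim, since an isometry carries free non-degenerate submodules to free non-degenerate submodules, and a conjugation-equivariant isometry carries invariant submodules to invariant submodules. First I would verify the isometry property: the metric $g$ on $M$ being compatible with the Levi-Civita connection $\nabla^{LC}$, Lem.~\ref{lemMetricPullback} gives that the pullback connection along any morphism is metric; applied to the $S\times T$-path $\gamma:S\times T\times[0,1]\to M$, this says that the pulled-back metric is parallel along $\gamma$, whence for any $Z,Z'\in\mE_x$ the quantity $\scal[g_\gamma]{P_t[Z]}{P_t[Z']}$ is constant in $t$ (here $P_t$ is parallel transport up to time $t$, as in Prp.~\ref{prpEtaPgamma}). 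Evaluating at $t=0$ and $t=1$ gives $\scal[g_y]{P_\gamma[Z]}{P_\gamma[Z']}=\scal[g_x]{Z}{Z'}$, so $P_\gamma$ is an $\mO_{S\times T}$-linear isometry, and in particular an isometry over $\mO_S$ after restriction along the standard inclusion (or one works directly over $S\times T$ throughout).

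Next I would record the intertwining of the holonomy groups. This is the standard conjugation relation: for a piecewise smooth $S\times T$-loop $\delta$ based at $x$, the composite $P_\gamma\circ P_\delta\circ P_\gamma^{-1}$ is parallel transport along the loop $\gamma\cdot\delta\cdot\gamma^{-1}$ based at $y$, hence lies in $\Hol_y(T)$; conversely every element of $\Hol_y(T)$ arises this way. Thus $\Hol_y(T)=P_\gamma\,\Hol_x(T)\,P_\gamma^{-1}$ as subgroups of $GL_{\rk\mE}$ acting on the respective pullback modules. Combining, if $F\subseteq\mE_x$ is free, non-degenerate and $\Hol_x(T)$-invariant, then $P_\gamma[F]\subseteq\mE_y$ is free (as the image of a free module under an isomorphism), non-degenerate (as $P_\gamma$ is an isometry), and invariant under $\Hol_y(T)=P_\gamma\,\Hol_x(T)\,P_\gamma^{-1}$; the inverse construction via $P_\gamma^{-1}$ gives the reverse correspondence, so $P_\gamma$ sets up the asserted bijection.

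The only mild subtlety—and the step I would treat most carefully—is bookkeeping of the superpoint over which one works. Since $\gamma$ is an $S\times T$-path, $P_\gamma$ is naturally an isomorphism of $\mO_{S\times T}$-modules $x_T^*\mE\to y_T^*\mE$ rather than of $\mO_S$-modules; one either phrases the whole lemma over $S\times T$, or uses that the submodules $F\subseteq\mE_x$ in question are by hypothesis free over $\mO_S$ and that $\Hol_x(T)$ acts through $GL_{\rk\mE}(\mO_{S\times T})$, so that $F_T:=F\otimes_{\mO_S}\mO_{S\times T}$ is the correct object to transport. With this identification fixed, everything else is the routine transcription of the classical argument, using only Lem.~\ref{lemMetricPullback} and the multiplicativity of parallel transport under concatenation of $S$-paths established in \cite{Gro14}.
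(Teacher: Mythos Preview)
Your proof is correct and follows essentially the same approach as the paper: the paper's proof consists precisely of the two observations that $\Hol_x(T)=P_{\gamma}^{-1}\circ\Hol_y(T)\circ P_{\gamma}$ and that parallel transport is an isometry by Lem.~\ref{lemMetricPullback} applied with $X=\partial_t$. Your added discussion of the $\mO_S$ versus $\mO_{S\times T}$ bookkeeping is a reasonable elaboration of a point the paper leaves implicit.
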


\begin{proof}
This follows from the following two important observations.
First, the holonomy groups are conjugated by parallel transport.
\begin{align*}
\Hol_x(T)=P_{\gamma}^{-1}\circ\Hol_y(T)\circ P_{\gamma}
\end{align*}
Second, by Lem. \ref{lemMetricPullback} applied to $X=\partial_t$, parallel
transport is an isometry.
\end{proof}

Let $x:S\rightarrow M$ and assume that $(M,g)\cong(M_1\times M_2,g_1+g_2)$
splits. We define $x_i:=\pr_i\circ x:S\rightarrow M_i$. It is clear that
$\mT M_x\cong\mT(M_1)_{x_1}\oplus\mT(M_2)_{x_2}$. Likewise, parallel transport
splits, and $\Hol^{\nabla^g}_x(T)\cong\Hol^{\nabla^{g_1}}_{x_1}(T)\times\Hol^{\nabla^{g_2}}_{x_2}(T)$.
We are now in a position to state a theorem of de Rham-Wu type for
the present case of an semi-Riemannian $S$-supermanifold.
Here, the subscript ''$0$'' does not refer to the underlying manifold
or a related notion, as should be clear from the context.

\begin{Thm}[De Rham-Wu]
\label{thmDeRhamWu}
Let $(M,g)$ be a semi-Riemannian $S$-supermanifold such that
the underlying classical semi-Riemannian manifold is simply connected
and geodesically complete.
Then there exist semi-Riemannian $S$-supermanifolds
$(M_i,g_i)$, $0\leq i\leq r$ with $r\in\bN$, such that
\begin{align*}
(M,g)=(M_0\times M_1\times\ldots\times M_r,g_0+g_1+\ldots+g_r)
\end{align*}
The supermanifold $(M_0,g_0)$ has vanishing curvature
(is flat), and the holonomy group functors
$T\mapsto\Hol^{\nabla^{g_i}}_{x_i}(T)$, $1\leq i\leq r$ are weakly irreducible.
In particular,
\begin{align*}
\Hol_x^{\nabla^g}(T)=\Hol_{x_0}^{\nabla^{g_0}}(T)\times\Hol_{x_1}^{\nabla^{g_1}}(T)\times\ldots\times\Hol_{x_r}^{\nabla^{g_r}}(T)
\end{align*}
for every $S$-point $x:S\rightarrow M$ and $T$ sufficiently large.
\end{Thm}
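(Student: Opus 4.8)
The plan is to run the classical de Rham-Wu argument — in the supergeometric form of Thm.~11.1 of \cite{Gal09} — but with Galaev's supergroup replaced throughout by the functorial holonomy, the passage between the two being provided by the Twofold Theorem (Thm.~\ref{thmTwofold}) and the correspondence between parallel subbundles and holonomy invariant submodules (Thm.~\ref{thmParallelSubsheaves}). Fix $T$ sufficiently large once and for all; by Thm.~\ref{thmTwofold} every invariance assertion below for this $T$ is equivalent to the one for all $T$, and to $\hol_x^{\mathrm{Gal}}$-invariance. First I would decompose $\mE_x=\mT M_x$ by induction on $\rk\mE$: if $\Hol_x(T)$ is weakly irreducible there is nothing to split; otherwise pick a proper free non-degenerate $\Hol_x(T)$-invariant submodule $F\subseteq\mE_x$ and use Lem.~\ref{lemOrthogonalComplement} to write $\mE_x=F\oplus F^{\perp}$, noting that $F^{\perp}$ is again free, non-degenerate, and $\Hol_x(T)$-invariant since parallel transport — hence $\Hol_x(T)$ — acts by isometries (Lem.~\ref{lemMetricPullback}). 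Recursing on both summands and collecting the resulting weakly irreducible pieces on which $\Hol_x(T)$ acts trivially into a single module $W_0$, I obtain an orthogonal decomposition
\begin{align*}
\mE_x=W_0\oplus W_1\oplus\cdots\oplus W_r
\end{align*}
into free non-degenerate $\Hol_x(T)$-invariant submodules, with $\Hol_x(T)$ trivial on $W_0$ and weakly irreducible on each $W_i$, $i\geq 1$.

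Next I would globalise this to a product of supermanifolds. The underlying manifold $M_{\overline{0}}$ is simply connected, hence $M$ is connected, so Thm.~\ref{thmParallelSubsheaves} applies: since each $W_i$ is free and $\Hol_x(T)$-invariant, Thm.~\ref{thmParallelSubsheaves}(ii) yields a unique parallel locally free subbundle $\mF_S^{(i)}\subseteq\mT M_S$ with $\hat{x}^*\mF_S^{(i)}=W_i$. Mutual orthogonality of the $W_i$ together with metricity of $\nabla=\nabla^{LC}$ (Lem.~\ref{lemMetricPullback}) propagates to the subbundles, so $\mT M_S=\bigoplus_i\mF_S^{(i)}$ orthogonally with each $\mF_S^{(i)}$ non-degenerate; being parallel and torsion-free, each $\mF_S^{(i)}$ is also involutive. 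As in the classical theorem and its supergeometric version, the simple connectedness and geodesic completeness of $M_{\overline{0}}$ force these complementary parallel distributions to integrate to a global product $M=M_0\times M_1\times\cdots\times M_r$, $g=g_0+\cdots+g_r$, with $g_i:=g|_{M_i}$ non-degenerate, $\nabla^{g_i}$ the restriction of $\nabla^{g}$, and $\mT M_i$ the $i$-th distribution along the corresponding slice. With $x_i:=\pr_i\circ x$, iterating the two-factor identity $\Hol^{\nabla^g}_x(T)\cong\Hol^{\nabla^{g_1}}_{x_1}(T)\times\Hol^{\nabla^{g_2}}_{x_2}(T)$ recorded above gives both the stated product formula for $\Hol_x^{\nabla^g}(T)$ and the identification of $\Hol^{\nabla^{g_i}}_{x_i}(T)$ acting on $\mE_{x_i}=W_i$ with the restriction of $\Hol_x(T)$; hence $\Hol^{\nabla^{g_i}}_{x_i}(T)$ is weakly irreducible for $i\geq 1$ by construction of $W_i$.

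It remains to see that $(M_0,g_0)$ is flat. By construction $\Hol^{\nabla^{g_0}}_{x_0}(T)$ acts trivially on $W_0$ for the chosen $T$, hence for every $T$ by Thm.~\ref{thmTwofold}, so that $\hol^{\mathrm{Gal}}_{x_0}=0$. The Ambrose-Singer description (\ref{eqnHolonomyAlgebra}), together with Def.~\ref{defGalaevAlgebra} and the standard device of enlarging $T$ by an auxiliary odd generator to reduce to even arguments, then forces every operator $P_{\gamma}^{-1}\circ R^{\nabla^{g_0}}_z(u,v)\circ P_{\gamma}$ to vanish; running $z$ over all of $M_0$ along paths from $x_0$ yields $R^{\nabla^{g_0}}\equiv 0$, i.e.\ $(M_0,g_0)$ is flat.

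The step I expect to be the main obstacle is the globalisation: integrating the mutually orthogonal parallel distributions $\mF_S^{(i)}$ to an honest product decomposition of the \emph{supermanifold} $M$ — not merely of its body $M_{\overline{0}}$ — requires a Frobenius-type argument handling the odd directions in tandem with the classical de Rham-Wu theorem on $M_{\overline{0}}$, and this is precisely where simple connectedness and geodesic completeness are used. A subsidiary point meriting care is the non-degeneracy of the flat summand $W_0$: this is why the induction proceeds by splitting off \emph{non-degenerate} invariant submodules via Lem.~\ref{lemOrthogonalComplement}, rather than taking the possibly degenerate submodule of $\Hol_x(T)$-fixed vectors directly.
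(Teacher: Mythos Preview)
Your proposal is correct and follows essentially the same approach as the paper's proof, which likewise reduces to Galaev's argument (Thm.~11.1 of \cite{Gal09}) via Thm.~\ref{thmParallelSubsheaves}: split off a non-degenerate $\Hol_x(T)$-invariant submodule and its orthogonal complement (Lem.~\ref{lemOrthogonalComplement}, isometry of parallel transport), globalise to parallel involutive subbundles using torsion-freeness, and invoke Frobenius together with the classical de Rham-Wu theorem. The one step the paper makes explicit that you only flag as an obstacle is the projection of the parallel subbundles $\mF_S^{(i)}\subseteq\mT M_S$ to honest distributions $\tilde{\mF}^{(i)}\subseteq\mT M$ (by setting the $\mO_S$-generators to zero) before applying Frobenius on the supermanifold $M$ itself.
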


\begin{proof}
With the theorem on parallel subbundles, Thm. \ref{thmParallelSubsheaves},
established, the proof is similar to Galaev's proof for the case $L=0$
(Thm. 11.1 in \cite{Gal09}).
Assume that $\Hol_x(T)$ is strongly reducible, then it preserves
a free non-degenerate submodule $F_1\subseteq\mE_x$, i.e.
$\Hol_x(T)\cdot F_1\subseteq F_1$.
Let $F_2:=F^{\perp}$ be its orthogonal complement.
By Lem. \ref{lemOrthogonalComplement}, this is a transversal
free non-degenerate submodule.
Parallel transports around loops are isometries, and it follows that
$\Hol_x(T)$ preserves the decomposition $\mE_x=F_1\oplus F_2$.
If $T$ is sufficiently large, we may use Thm. \ref{thmParallelSubsheaves}
to conclude existence and uniqueness of locally direct
parallel subbundles $\mF_1$, $\mF_2$ of $\mT M_S$.
Since the Levi-Civita connection is torsion-free
\begin{align*}
0=\scal[T]{X}{Y}=\nabla_XY-(-1)^{\abs{X}\abs{Y}}\nabla_YX-\scal[[]{X}{Y}
\end{align*}
and the subbundles are parallel, it holds that $\scal[[]{X}{Y}\in\mF_i$ for $X,Y\in\mF_i$
for $i=1,2$, such that the subbundles are involutive.
Let $\tilde{\mF_i}\subseteq\mT M$ denote the canonical projection of
$\mF_i$ to $\mT M$ by setting all generators of $\mO_S$ to zero.
It follows that $\tilde{\mF_i}$ is still
free, the decomposition $\mT M=\tilde{\mF_1}\oplus\tilde{\mF_2}$ is
direct, and the $\mF_i$ are involutive.
By Frobenius' theorem, there are maximal integral subsupermanifolds
$M_1$, $M_2$ with $M_i$ corresponding to $\tilde{\mF_i}$
such that $M$ is locally diffeomorphic to $M_1\times M_2$.
It is globally so by the classical de Rham-Wu theorem.
Moreover, the restrictions $g_i$ of $g$ to
$\mF_i$ are non-degenerate and depend on $M_i$ only.
It follows that $(M,g)\cong(M_1\times M_2,g_1+g_2)$.
\end{proof}

\section*{Acknowledgements}

I would like to thank Jan Hakenberg for lending his algorithmic expertise
and Alexander Alldridge and Dominik Ostermayr for making suggestions
which helped to improve a previous version of the article.

\appendix

\section{The fppf Topology on the Category of Superpoints}
\label{secFPPF}

For the purpose of studying the holonomy group functors it was natural to
define the $fppf$-topology on the category $\mP$ of superpoints $\bR^{0|L}$
in the form of Def. \ref{defFPPF}.
The objective of this appendix is to relate this definition to the
one which is well-known in algebraic geometry under the same name.
For concreteness, we consider the $\bZ_2$-graded version used
in \cite{Zub09, MZ11} of the classical notion (cf. Sec. 5.2 in \cite{Jan87}),
as follows.

Let $\mathrm{SAlg_K}$ denote the category of supercommutative superalgebras
over a field $K$ with characteristic different from $2$.
Recall that a supercommutative superalgebra $B$ over a superring $A$
is said to be finitely presented
if it is of the form $A[t_1,\ldots,t_n|\theta_1,\ldots,\theta_m]/I$
where $I$ is a finitely generated ideal.
An $A$-supermodule $Y$ is said to be faithfully flat if the following holds:
A sequence $X'\rightarrow X\rightarrow X''$ is exact if and only if
$X'\otimes_AY\rightarrow X\otimes_AY\rightarrow X''\otimes_AY$ is exact
(for all $A$-modules $X',X,X''$).

\begin{Def}[fppf]
\label{defFPPFOriginal}
The fppf topology on $\mathrm{SAlg_K}$ is defined as collection of finite
sets $\{R\rightarrow R_i\}_{i\in I}$ (for $R,R_i\in SAlg_K$) such that
the $R$-supermodule $\bigtimes_{i\in I}R_i$ is faithfully flat and
all $R_i$ are finitely presented $R$-superalgebras.
\end{Def}

$\mP$ is naturally equivalent to the category $\mathrm{Gr}$
of Graßmann algebras over $\bR$ and, as such, can be considered as a full
subcategory of $\mathrm{SAlg_{\bR}}$. It makes sense to consider
Def. \ref{defFPPFOriginal} restricted to $\mP$. Doing so results
in our previous notion of fppf topology as shown by the following
main result of this appendix.

\begin{Thm}
\label{thmFPPF}
The topology on $\mP$ as defined in Def. \ref{defFPPF} agrees with
the one induced by Def. \ref{defFPPFOriginal}.
\end{Thm}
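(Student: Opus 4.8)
The plan is to show that the two notions of covering coincide by checking that a finite family $\{P_i\to P\}_{i\in I}$ in $\mP$ consists of submersions (Def. \ref{defFPPF}) if and only if the corresponding family of Graßmann algebras $\{R\to R_i\}$ is a covering in the sense of Def. \ref{defFPPFOriginal}, i.e. each $R_i$ is finitely presented over $R$ and $\bigtimes_{i\in I}R_i$ is faithfully flat over $R$. Since both conditions are conditions on the individual morphisms together with a joint surjectivity-type condition, and since any Graßmann algebra is Noetherian so that finite presentation over another Graßmann algebra is automatic, the real content is the single-morphism statement: a morphism $\varphi:\bR^{0|L'}\to\bR^{0|L}$ of superpoints is a submersion if and only if the dual algebra map $\bigwedge\bR^L\to\bigwedge\bR^{L'}$ is flat, and the joint statement that faithful flatness of the product corresponds to the family being jointly a covering (which, for superpoints, reduces to at least one $\varphi_i$ being a submersion, equivalently $I\neq\emptyset$ together with each member being flat, since the topological space is a point). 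I would isolate this as the crux.

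First I would recall the explicit local form of a submersion of superpoints used already in Section \ref{subsecSheafProperties}: $\varphi$ is a submersion exactly when there are coordinates $(\eta^1,\dots,\eta^{L'})$ on $\bR^{0|L'}$ such that $\varphi^\sharp$ identifies $\bigwedge\bR^L$ with $\bigwedge\langle\eta^1,\dots,\eta^L\rangle\subseteq\bigwedge\bR^{L'}$; equivalently, $\varphi^\sharp$ is injective and $\bigwedge\bR^{L'}$ is a free $\bigwedge\bR^L$-module on the monomials in $\eta^{L+1},\dots,\eta^{L'}$. In particular a submersion gives a faithfully flat, finitely presented algebra map, so the "submersion $\Rightarrow$ fppf-covering'' direction is immediate once one also observes that the product over a nonempty index set of faithfully flat modules is faithfully flat. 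For the converse, I would invoke the result of Esin and Ko\c{c} in \cite{EK07} (cited in the introduction precisely for this purpose): a morphism of Graßmann algebras $\bigwedge\bR^L\to\bigwedge\bR^{L'}$ is flat if and only if it is, up to change of generators, the standard inclusion followed by a free extension — i.e. precisely the algebraic shadow of a submersion. This is the step I expect to carry the weight of the argument and to be the main obstacle, since it is where one must actually control flatness over a non-reduced, non-regular base ring; everything else is bookkeeping. (The alternative route via Schmitt \cite{Sch89} is mentioned but I would prefer the more hands-on \cite{EK07}.)

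Granting the single-morphism equivalence, I would finish as follows. Given an fppf-covering $\{R\to R_i\}_{i\in I}$ in the sense of Def. \ref{defFPPFOriginal}: faithful flatness of $\bigtimes_i R_i$ forces each $R_i$ to be flat (as a direct summand) and the index set to be nonempty, and by the Esin--Ko\c{c} characterisation each $R\to R_i$ is then (up to coordinates) a standard free extension, so each $\varphi_i:P_i\to P$ is a submersion — which is exactly Def. \ref{defFPPF}. Conversely, a family of submersions with $I\neq\emptyset$ gives, by the easy direction above, finitely presented flat $R_i$ with $\bigtimes_i R_i$ faithfully flat; and a family of submersions in $\mP$ automatically has nonempty index set in the covering sense because a submersion of superpoints is surjective on the (one-point) underlying space, and faithful flatness only needs surjectivity of $\mathrm{Spec}$, which holds as soon as one flat member is present. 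Hence the two collections of covering families coincide, proving Thm. \ref{thmFPPF}. I would remark that, combined with Lem. \ref{lemFPPF}, this also re-derives the Grothendieck-topology axioms for $T_{fppf}$ from the standard fppf axioms, as announced.
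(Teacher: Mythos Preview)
Your overall architecture matches the paper's: reduce to the single-morphism statement that, for $\varphi:\bR^{0|L'}\to\bR^{0|L}$, the dual $\bigwedge\bR^L\to\bigwedge\bR^{L'}$ is (faithfully) flat if and only if $\varphi$ is a submersion, after noting that finite presentation is automatic and that, over the local ring $\bigwedge\bR^L$, flat and faithfully flat coincide for nonzero modules. The easy direction (submersion $\Rightarrow$ free $\Rightarrow$ faithfully flat) is handled exactly as the paper does.

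The gap is in the hard direction. You write that you would ``invoke the result of Esin and Ko\c{c} in \cite{EK07}: a morphism of Gra\ss mann algebras $\bigwedge\bR^L\to\bigwedge\bR^{L'}$ is flat if and only if it is, up to change of generators, the standard inclusion followed by a free extension.'' That is not what \cite{EK07} proves. Esin--Ko\c{c} compute the dimension of a principal ideal $(\mu)\subseteq\bigwedge\bR^{L'}$ generated by an odd element whose monomials have pairwise disjoint supports; they say nothing about flatness of algebra maps. The paper uses this single formula only at the very end of the base case $n=1$, after a substantial reduction: one first shows flat $\Leftrightarrow$ free (local ring), then runs an induction on the number $n$ of generators of the source (Prp.~\ref{prpInductiveStep}), and for $n=1$ translates freeness into $\dim_{\bR}(\mu)\ge 2^{L'-1}$ for $\mu=\varphi^\sharp(\theta^1)$ (Lem.~\ref{lemFreenessBasis}, Lem.~\ref{lemFreeness}), then applies a replacement algorithm (Prp.~\ref{prpReplacementAlgorithm}, supported by Lem.~\ref{lemAutomorphismDimension} and Lem.~\ref{lemIdealReplacement}) to pass to an element $\mu'$ with disjoint monomials, where the Esin--Ko\c{c} dimension formula finally forces a degree-one term and hence a submersion. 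None of this is in \cite{EK07}.

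So as written your proposal black-boxes precisely the content of the appendix. If you want a one-line citation for ``flat $\Rightarrow$ submersion'', the correct reference is Schmitt \cite[Prp.~3.6.1(ii)]{Sch89}, which the paper also records; if you want the ``more hands-on'' route via \cite{EK07}, you must supply the induction and the reduction algorithm yourself.
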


\begin{proof}
We note first that being finitely presented is no condition in
the case of Graßmann algebras.
Consider a covering consisting of morphisms $\varphi_i:P_i\rightarrow P$
as in Def. \ref{defFPPF}. By the following proposition,
Prp. \ref{prpFlatSubmersion}, $\varphi_i$ is a submersion
if and only if $R_i=\mO_{P_i}$ is flat as an $R=\mO_P$-module
with respect to $\varphi^*$ (a morphism in $\mathrm{Gr}$).
All $R_i$ being flat, in turn, is equivalent to the condition
of Def. \ref{defFPPFOriginal} (by Lem. I.2.2 of \cite{Bou72}
and the implication $(ii)\implies(i)$ in Prp. \ref{prpFlatSubmersion}).
\end{proof}

Concerning the preceding proof,
we are obviously in the supercommutative rather than the commutative situation
treated in \cite{Bou72}. However, the results relevant for our present purposes
continue to hold unchanged.

In the following, we shall make no notational distinction between a morphism
$\varphi:\bR^{0|L}\rightarrow\bR^{0|L'}$ and its pullback
$\varphi:\bigwedge\bR^{L'}\rightarrow\bigwedge\bR^L$.

\begin{Prp}
\label{prpFlatSubmersion}
Let $\varphi:R\rightarrow S$ be a morphism in $\mathrm{Gr}$,
and consider $S$ as an $R$-module via $\varphi$. Then the following are equivalent.
\begin{enumerate}
\renewcommand{\labelenumi}{(\roman{enumi})}
\item $S$ is faithfully flat.
\item $S$ is flat.
\item $S$ is free.
\item (The associated morphism of superpoints corresponding to) $\varphi$
is a submersion.
\end{enumerate}
\end{Prp}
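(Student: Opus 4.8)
The plan is to prove the chain of equivalences $(i)\Rightarrow(ii)\Rightarrow(iii)\Rightarrow(iv)\Rightarrow(i)$, exploiting that a Graßmann algebra $\bigwedge\bR^{L}$ is a local ring with maximal ideal the nilpotent ideal $\mathfrak{n}=\langle\eta^1,\ldots,\eta^L\rangle$ and residue field $\bR$, so that finitely generated flat modules over it are automatically free. The implication $(i)\Rightarrow(ii)$ is trivial, and $(ii)\Rightarrow(iii)$ is the standard fact that a finitely generated flat module over a local (super)ring is free; I would cite the supercommutative analogue of the relevant lemma in \cite{Bou72}, noting as the paper already does that the commutative arguments carry over verbatim. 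The implication $(iii)\Rightarrow(i)$ is also standard: a nonzero free module is faithfully flat over a local ring, since tensoring with it preserves nonexactness (detected after reducing mod $\mathfrak{n}$, or directly because a free module is faithfully flat whenever the ring is local).

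The geometric heart of the statement is the equivalence of $(iv)$ with the algebraic conditions, and here I would prove $(iv)\Rightarrow(iii)$ and $(iii)\Rightarrow(iv)$ separately. For $(iv)\Rightarrow(iii)$: if $\varphi:\bR^{0|L}\rightarrow\bR^{0|L'}$ is a submersion, then by the definition of submersion there exist coordinates $(\eta^1,\ldots,\eta^L)$ of $\bR^{0|L}$ such that $\varphi^*$ identifies $\eta^1,\ldots,\eta^{L'}$ with the generators of $\bigwedge\bR^{L'}$ (this is exactly the normal form already used in the proof of Lemma \ref{lemFPPF}). Then $S=\bigwedge\bR^L$ is visibly free over $R=\bigwedge\bR^{L'}$, with basis the monomials $\eta^K$ in the "extra" generators $\eta^{L'+1},\ldots,\eta^L$. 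For $(iii)\Rightarrow(iv)$: suppose $\varphi$ is not a submersion, i.e. the induced map on cotangent spaces $\mathfrak{n}_R/\mathfrak{n}_R^2\rightarrow\mathfrak{n}_S/\mathfrak{n}_S^2$ is not injective; pick $0\neq a\in\mathfrak{n}_R$ with $\varphi^*(a)\in\mathfrak{n}_S^2$. I would then exhibit a relation showing $S$ cannot be free: after choosing coordinates adapted to $\varphi$ (so that $\varphi^*$ is the quotient by an ideal generated by the vanishing generators plus the "short" ones), compute that the annihilator of $\varphi^*(a)\in S$ strictly contains the annihilator of $a\in R$, which is impossible for a free — hence faithfully flat — module since annihilators would have to be recovered by base change. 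Concretely, it suffices to show $\mathrm{Tor}_1^R(R/aR,S)\neq0$, or equivalently that the sequence $0\to aR\to R\to R/aR\to 0$ fails to remain exact after $\otimes_R S$, because $\varphi^*(a)$ acquires new relations.

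I would carry out the steps in the order $(iv)\Rightarrow(iii)\Rightarrow(i)\Rightarrow(ii)$ (all of which are short, with the first reusing the coordinate normal form already available) and then isolate $(ii)\Rightarrow(iii)$ (the local-ring flat-implies-free lemma, cited) and finally $(iii)\Rightarrow(iv)$, which completes the cycle. The main obstacle I expect is the last implication $(iii)\Rightarrow(iv)$ (equivalently, showing non-submersions fail to be flat): one must produce an explicit torsion class, and the bookkeeping with Graßmann monomials and the precise form of the ideal $\ker\varphi^*$ requires care — in particular one should argue that flatness over the \emph{local} ring $R$ forces the map $aR\otimes_R S\to S$ to be injective, which it is not once $\varphi^*(a)$ lies deeper in the nilpotent filtration than $a$ does. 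Everything else is either routine commutative-algebra folklore transported to the supercommutative setting or a direct reading-off from the submersion normal form.
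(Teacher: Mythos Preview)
Your treatment of the equivalences $(i)\Leftrightarrow(ii)\Leftrightarrow(iii)$ and of $(iv)\Rightarrow(iii)$ is essentially the paper's: both of you invoke the local-ring facts from \cite{Bou72} and the submersion normal form. The difficulty lies entirely in $(iii)\Rightarrow(iv)$, and there your proposal has a genuine gap.

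Your contrapositive argument correctly reduces to the following: pick a degree-one element $a\in R$ with $\mu:=\varphi(a)\in\mathfrak n_S^2$, and show that the map $aR\otimes_R S\to S$ is not injective, equivalently that $\mathrm{Ann}_S(\mu)\supsetneq\mu S$, equivalently that $\dim_{\bR}(\mu S)<2^{L-1}$ where $S=\bigwedge\bR^L$. (Your parenthetical that $\varphi^*$ is ``the quotient by an ideal'' is not right, by the way: a morphism of Gra\ss mann algebras is generally neither injective nor surjective.) The trouble is that you do not prove this inequality; you only assert that it follows because $\mu$ ``lies deeper in the nilpotent filtration''. That heuristic does not close the argument. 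The obvious inclusion $\mu S\subseteq\mathfrak n_S^3$ gives $\dim_{\bR}(\mu S)\leq\dim_{\bR}\mathfrak n_S^3=2^L-1-L-\binom{L}{2}$, and for $L\geq 6$ this bound already exceeds $2^{L-1}$, so filtration depth alone cannot force $\dim_{\bR}(\mu S)<2^{L-1}$. One genuinely needs a finer combinatorial control on the ideal $(\mu)$.

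This is exactly the point at which the paper either cites Schmitt \cite{Sch89} (for $(ii)\Rightarrow(iv)$ directly) or, in its independent proof, first reduces by induction to the case $R=\bigwedge\bR^1$, then further reduces (Prp.~\ref{prpReplacementAlgorithm}) to the situation where the monomials of $\mu$ have pairwise disjoint supports, and finally invokes the explicit dimension formula of Esin--Ko\c{c} \cite{EK07}, which shows that $\dim_{\bR}(\mu)=2^{L-1}$ forces some monomial of $\mu$ to have length one. Your Tor reformulation is a clean way to phrase the reduction to $n=1$, but it does not bypass the need for a result of this type; without it your cycle does not close.
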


\begin{proof}
$ $

$(i)\iff(ii)$:
In general, an $R$-module $M$ is faithfully flat if and only if $M$ is flat
and $M\neq mM$ for every maximal ideal $m$ (\cite{Bou72}, Prp. I.3.1).
In the case of a Graßmann algebra $R$, its nilpotent part $R_{\mathrm{nil}}$
is the unique maximal ideal and, obviously, $S\neq\varphi(R_{\mathrm{nil}})S$.

$(ii)\iff(iii)$:
The Jacobson radical of $R$ is $R_{\mathrm{nil}}$, and
$R/R_{\mathrm{nil}}\cong\bR$ is a field.
The equivalence follows from Prp. II.3.5 of \cite{Bou72}.

$(iv)\implies(iii)$:
A submersion is characterised by the existence of coordinates
$(\theta^i)_{1\leq i\leq n}$ and $(\eta^j)_{1\leq j\leq n+m}$
on $R$ and $S$, respectively, such that $\theta^i\mapsto\eta^i$
(see Prp. 5.2.5 of \cite{CCF11}).
Then any $\bR$-basis of $\langle\eta^{n+1},\ldots,\eta^{n+m}\rangle$
is an $R$-basis of $S$.

$(ii)\implies(iv)$:
This is a special case of Prp. 3.6.1(ii) of \cite{Sch89}.
\end{proof}

Along with the proposition, also Thm. \ref{thmFPPF} is established.
We remark that the result by Schmitt in \cite{Sch89}, used for the last
implication, is really the hardest bit of the proof.
While Schmitt's results are stronger than needed here, his proofs
involve a heavy algebraic machinery.
We therefore consider it worth providing an independent, less abstract,
proof of the implication $(iii)\implies(iv)$ in Prp. \ref{prpFlatSubmersion}
in the remainder of this appendix.
It is shown by induction over the number of $R$-Graßmann-generators, while
the base case is established by means of ideal theory and reduction to
a special case considered by Esin and Ko\c{c} in \cite{EK07}.
The proof of the latter result, in turn, is rather concrete.
We start with two easy lemmas needed in the inductive step,
Prp. \ref{prpInductiveStep} below.

\begin{Lem}
Let $\varphi:\bigwedge\bR^n\rightarrow\bigwedge\bR^m$ be a morphism
of Graßmann algebras such that $\bigwedge\bR^m$ is free.
Then $\varphi$ is injective. In particular, $n\leq m$.
\end{Lem}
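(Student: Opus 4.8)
The plan is to show that a morphism $\varphi:\bigwedge\bR^n\rightarrow\bigwedge\bR^m$ of Graßmann algebras with $\bigwedge\bR^m$ free as a $\bigwedge\bR^n$-module (via $\varphi$) must be injective, from which $n\leq m$ follows immediately by comparing $\bR$-dimensions after tensoring down to the residue field $\bR$. First I would recall the structure of such a morphism: writing $R=\bigwedge\bR^n=\langle\theta^1,\ldots,\theta^n\rangle$ and $S=\bigwedge\bR^m=\langle\eta^1,\ldots,\eta^m\rangle$, the morphism is determined by odd elements $\varphi(\theta^i)\in S_{\overline 1}$, and the kernel is a graded ideal of $R$.

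The key observation is that the kernel of $\varphi$, if nonzero, must contain a nonzero element of the socle of $R$, namely the top monomial $\theta^1\cdots\theta^n$ (up to scalar) — more precisely, any nonzero graded ideal $\mathfrak a\subseteq R$ satisfies $\mathfrak a\cdot R_{\mathrm{nil}}^{\,k}\neq 0$ only up to the point where it hits the one-dimensional socle $\bR\cdot\theta^1\cdots\theta^n$, and in fact every nonzero ideal of $R$ contains this socle element because $R$ is a local ring whose unique maximal ideal $R_{\mathrm{nil}}$ is nilpotent and whose socle is one-dimensional (Gorenstein of dimension zero). So I would argue: suppose $\ker\varphi\neq 0$; then $\theta^1\cdots\theta^n\in\ker\varphi$, i.e.\ $\varphi(\theta^1)\cdots\varphi(\theta^n)=0$ in $S$.

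Now I bring in freeness. If $S$ is free as an $R$-module, then in particular $S$ is flat, so multiplication by the nonzerodivisor-type relations is controlled; more directly, freeness gives that the natural map $S\otimes_R\bR\to S$ has the "right size", and one computes $\dim_{\bR}(S\otimes_R\bR)=\dim_{\bR}S/\varphi(R_{\mathrm{nil}})S$. Since $S$ is free of some rank $\rho$ over $R$, we get $\dim_{\bR}S=\rho\cdot\dim_{\bR}R$, so $2^m=\rho\cdot 2^n$, forcing $\rho=2^{m-n}$ and in particular $m\geq n$ already — but for injectivity I need more: a free module over $R$ is a faithful $R$-module, hence the annihilator of $S$ in $R$ is zero; but freeness also implies that the map $R\to S$ is split injective as $R$-modules onto a direct summand isomorphic to $R$ (pick any basis element), so $\varphi$ itself is injective as a ring map provided the $R$-module structure on that summand is the one coming from $\varphi$. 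I would make this precise by noting that $1_S$ generates a submodule isomorphic to $R/\ker\varphi$, which is a direct summand of the free module $S$, hence itself projective, hence free over the local ring $R$, hence equal to $R$ (as $R/\ker\varphi$ is a cyclic free module it has rank one, so $\ker\varphi=0$).

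The step I expect to be the main obstacle is pinning down cleanly why $R/\ker\varphi$, being a cyclic submodule of a free module over the local ring $R$, is necessarily free of rank one and hence forces $\ker\varphi=0$: this uses that submodules of free modules over a local Artinian ring need not be free in general, so I cannot simply invoke that — instead I should use that $R/\ker\varphi$ is a quotient ring of $R$ that embeds in $S$ as an $R$-module, compare socles (the socle of $R/\ker\varphi$ injects into the socle of $S$, and since $\ker\varphi\neq 0$ would kill the socle of $R$ by the Gorenstein property, the socle of $R/\ker\varphi$ would be smaller, contradicting that $S$ free implies $S$ has socle of dimension $\rho$ times that of $R$ — i.e.\ a dimension count on socles). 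Alternatively, and perhaps more simply, I would just invoke Prp.~\ref{prpFlatSubmersion}: freeness of $S$ over $R$ is equivalent to $\varphi$ being a submersion, and a submersion is by its very coordinate description $\theta^i\mapsto\eta^i$ manifestly injective with $n\leq m$. Given that the previous proposition is available, the cleanest proof is the one-line reduction to Prp.~\ref{prpFlatSubmersion}, and the socle argument is the fallback if a self-contained proof is wanted.
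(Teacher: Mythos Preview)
The paper states this lemma without proof, calling it (together with the next one) ``easy''. Your proposal does contain the one-line argument the author presumably has in mind, but you pass over it: you observe that a free $R$-module is faithful, i.e.\ $\mathrm{Ann}_R(S)=0$, and then move on. That observation already finishes the proof, since the $R$-module structure on $S$ is precisely $r\cdot s=\varphi(r)s$, so $\ker\varphi\subseteq\mathrm{Ann}_R(S)=0$. The dimension count $2^m=\rho\cdot 2^n$ you give then yields $n\leq m$. Everything after that in your proposal (the cyclic-submodule-as-direct-summand attempt, the socle comparison) is unnecessary, and you correctly identify that the direct-summand step does not go through as stated.

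One point to flag: your suggested ``cleanest'' route via Prp.~\ref{prpFlatSubmersion} is circular in context. This lemma is introduced in the paper explicitly as part of the \emph{independent} proof of the implication $(iii)\Rightarrow(iv)$ in Prp.~\ref{prpFlatSubmersion}, so invoking that proposition (whose hard direction the author is in the process of re-proving without Schmitt's result) would defeat the purpose. Stick with the faithfulness argument.
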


For the following lemma, we need the inclusion maps
$\pi_j:\bigwedge\bR^{n-1}\rightarrow\bigwedge\bR^n$ with $1\leq j\leq n$,
defined by $\theta^k\mapsto\theta^k$ for $k<j$ and
$\theta^k\mapsto\theta^{k+1}$ for $k\geq j$, where
we let $(\theta^k)_{1\leq k\leq n}$ denote fixed coordinates of $\bigwedge\bR^n$
and analogous for $\bigwedge\bR^{n-1}$.
Unless said otherwise, we consider those
corresponding to the standard bases of $\bR^n$ and $\bR^{n-1}$, respectively.

\begin{Lem}
\label{lemInheritedFreeness}
Let $\varphi:\bigwedge\bR^n\rightarrow\bigwedge\bR^m$ be a morphism
of Graßmann algebras such that $\bigwedge\bR^m$ is free.
Then it is also free with respect to
$\varphi\circ\pi_j:\bigwedge\bR^{n-1}\rightarrow\bigwedge\bR^m$.
\end{Lem}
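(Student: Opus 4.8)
The plan is to reduce the statement to the transitivity of freeness under composition of ring maps, exploiting that $\pi_j$ exhibits $\bigwedge\bR^{n-1}$ as a sub-Graßmann-algebra of $\bigwedge\bR^n$ over which the latter is free of rank two. First I would record that $\pi_j$ is injective with image the subalgebra $A\subseteq\bigwedge\bR^n$ generated by $\{\theta^l\setsep l\neq j\}$, and that $\pi_j\colon\bigwedge\bR^{n-1}\to A$ is an isomorphism (it is surjective onto $A$ by construction, and both sides have dimension $2^{n-1}$ over $\bR$). Hence it suffices to show that $S:=\bigwedge\bR^m$ is free as a left $A$-module, where $A$ acts by restriction along $A\subseteq\bigwedge\bR^n\xrightarrow{\varphi}S$.

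Next I would observe that $R:=\bigwedge\bR^n$ is free of rank two as a left $A$-module, with basis $\{1,\theta^j\}$. Indeed, since $(\theta^j)^2=0$, grading by the degree in $\theta^j$ yields a decomposition $R=A\oplus A\theta^j$ of left $A$-modules, and the map $a\mapsto a\theta^j$ is injective on $A$ because it sends distinct basis monomials $\theta^{I'}$ with $j\notin I'$ to distinct basis monomials $\pm\theta^{I'\cup\{j\}}$ of $R$.

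Then I would combine the two freenesses. Let $(s_\alpha)_\alpha$ be a basis of $S$ over $R$, which exists by hypothesis, and put $t_\alpha:=\varphi(\theta^j)\,s_\alpha\in S$. I claim $(s_\alpha)_\alpha\cup(t_\alpha)_\alpha$ is an $A$-basis of $S$. For spanning: every $s\in S$ is uniquely $\sum_\alpha\varphi(r_\alpha)\,s_\alpha$ with $r_\alpha\in R$, and writing $r_\alpha=a_\alpha+b_\alpha\theta^j$ with $a_\alpha,b_\alpha\in A$ gives $s=\sum_\alpha\varphi(a_\alpha)\,s_\alpha+\sum_\alpha\varphi(b_\alpha)\,t_\alpha$. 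For independence: a relation $\sum_\alpha\varphi(a_\alpha)\,s_\alpha+\sum_\alpha\varphi(b_\alpha)\,t_\alpha=0$ rewrites as $\sum_\alpha\varphi(a_\alpha+b_\alpha\theta^j)\,s_\alpha=0$, so $a_\alpha+b_\alpha\theta^j=0$ in $R$ for every $\alpha$ by $R$-freeness, and then $a_\alpha=b_\alpha=0$ by the rank-two decomposition above. Transporting this basis along the isomorphism $\pi_j^{-1}\colon A\to\bigwedge\bR^{n-1}$ produces a basis of $S$ over $\bigwedge\bR^{n-1}$ with respect to $\varphi\circ\pi_j$, which is the assertion.

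I do not expect a genuine obstacle: keeping all modules on the left throughout avoids any sign bookkeeping from supercommutativity, and each step is a short direct check. The only point that needs a little care is the $A$-module decomposition $R=A\oplus A\theta^j$ together with the faithfulness of right multiplication by $\theta^j$ on $A$; granted that, the transitivity argument is routine. One could equivalently argue via flatness — $S$ is flat over $R$ as it is free, $R$ is flat over $A$ as it is free, hence $S$ is flat over $A$, and flatness implies freeness for morphisms of Graßmann algebras by Prp.~\ref{prpFlatSubmersion} — but the direct computation is no longer and does not invoke that proposition.
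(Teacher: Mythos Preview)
Your proof is correct. The paper states this lemma without proof, treating it as one of ``two easy lemmas'' preceding Prp.~\ref{prpInductiveStep}, so there is no argument to compare against; your transitivity-of-freeness computation (basis $\{1,\theta^j\}$ for $\bigwedge\bR^n$ over the subalgebra missing $\theta^j$, then tensoring bases) is exactly the natural direct verification. Your parenthetical remark about not invoking Prp.~\ref{prpFlatSubmersion} is also well-placed: the lemma is used in Prp.~\ref{prpInductiveStep}, which in turn furnishes the paper's alternative, Schmitt-free proof of the implication $(iii)\Rightarrow(iv)$ in Prp.~\ref{prpFlatSubmersion}, so appealing to that proposition here would undercut the point of the appendix.
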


\begin{Prp}
\label{prpInductiveStep}
Assuming that every morphism $\psi:\bigwedge\bR^n\rightarrow\bigwedge\bR^m$,
such that $\bigwedge\bR^m$ is a free $\bigwedge\bR^n$-module
with respect to $\psi$, is a submersion, the corresponding statement
holds for all morphisms $\varphi:\bigwedge\bR^{n+1}\rightarrow\bigwedge\bR^m$.
\end{Prp}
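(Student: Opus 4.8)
The plan is to use Lemma~\ref{lemInheritedFreeness} and the induction hypothesis to straighten $\varphi$ along $n$ of its generators, and then to contradict a dimension estimate in a single Graßmann algebra. Write $R:=\bigwedge\bR^{n+1}$ and $f:=\varphi(\theta^{n+1})$.

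First I would reduce. By the preceding lemma $\varphi$ is injective, so $f\neq 0$ (else $\theta^{n+1}\in\ker\varphi$). By Lemma~\ref{lemInheritedFreeness}, $\bigwedge\bR^m$ is free with respect to $\varphi\circ\pi_{n+1}\colon\bigwedge\bR^n\to\bigwedge\bR^m$, whence by the induction hypothesis $\varphi\circ\pi_{n+1}$ is a submersion; equivalently, by the characterisation of submersions via Prp.~5.2.5 of \cite{CCF11} (as used in the proof of Prp.~\ref{prpFlatSubmersion}), the linear parts $\ell_1,\dots,\ell_n$ of $\varphi(\theta^1),\dots,\varphi(\theta^n)$ are linearly independent. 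Suppose, towards a contradiction, that $\varphi$ is \emph{not} a submersion. Then $\ell_1,\dots,\ell_{n+1}$ are linearly dependent, so the linear part $\ell_{n+1}$ of $f$ lies in $\spann(\ell_1,\dots,\ell_n)$, say $\ell_{n+1}=\sum_{i\le n}c_i\ell_i$. Precomposing $\varphi$ with the automorphism of $R$ fixing $\theta^1,\dots,\theta^n$ and sending $\theta^{n+1}\mapsto\theta^{n+1}-\sum_{i\le n}c_i\theta^i$ — which affects neither freeness nor the property of being a submersion — I may assume that $f$ has vanishing linear part, i.e.\ that $f$ lies in the square of the nilpotent ideal of $\bigwedge\bR^m$.

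Next I would read off the numerical constraint. On $R$ itself, multiplication by $\theta^{n+1}$ has kernel equal to image (both equal $\theta^{n+1}\bigwedge\bR^n$), hence so does its action on any free $R$-module; on $\bigwedge\bR^m$, where $\theta^{n+1}$ acts as multiplication by $f$, this says $\mathrm{Ann}_{\bigwedge\bR^m}(f)=f\cdot\bigwedge\bR^m$. For the perfect pairing given by $\langle a,b\rangle=$ the coefficient of $\eta^1\cdots\eta^m$ in $ab$, one checks that $\mathrm{Ann}(I)=I^{\perp}$ for every ideal $I$, so $\dim_\bR I+\dim_\bR\mathrm{Ann}(I)=2^m$; therefore $f\cdot\bigwedge\bR^m$ equals its own orthogonal complement and $\dim_\bR(f\cdot\bigwedge\bR^m)=2^{m-1}$.

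What remains — and what I expect to be the technical heart — is the contradicting fact that a nonzero odd element $f$ of a Graßmann algebra $\bigwedge\bR^q$ with vanishing linear part satisfies $\dim_\bR(f\cdot\bigwedge\bR^q)<2^{q-1}$. Here "$\le$" is immediate, since $f^2=0$ forces $f\cdot\bigwedge\bR^q\subseteq\mathrm{Ann}(f)=(f\cdot\bigwedge\bR^q)^{\perp}$; the content is strictness, which I would prove by induction on $q$. Splitting $f=f'+\eta^q f''$ along $\bigwedge\bR^q=\bigwedge\bR^{q-1}\oplus\eta^q\bigwedge\bR^{q-1}$, the part $f'$ is odd with vanishing linear part and $f''$ is even with vanishing constant term. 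If $f'=0$, then $f\cdot\bigwedge\bR^q=\eta^q\,(f''\cdot\bigwedge\bR^{q-1})$ has dimension at most $2^{q-1}-1$, because $(f'')$ is a proper ideal ($f''$ being a nonzero nilpotent). If $f'\neq 0$, a short computation, projecting $f\cdot\bigwedge\bR^q$ onto the $\bigwedge\bR^{q-1}$-summand, gives $\dim_\bR(f\cdot\bigwedge\bR^q)=\dim(f'\cdot\bigwedge\bR^{q-1})+\dim\bigl(f''\cdot\mathrm{Ann}(f')+f'\cdot\bigwedge\bR^{q-1}\bigr)$; the second summand lies inside $\mathrm{Ann}(f')$, and since $f''$ is in the nilpotent ideal it acts nilpotently on the module $\mathrm{Ann}(f')/(f'\cdot\bigwedge\bR^{q-1})$, so the inclusion is strict unless that module vanishes, i.e.\ unless $\dim(f'\cdot\bigwedge\bR^{q-1})=2^{q-2}$ — which is ruled out by the induction hypothesis applied to $f'$. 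In either case $\dim_\bR(f\cdot\bigwedge\bR^q)<2^{q-1}$, contradicting the previous paragraph; hence $\varphi$ must have been a submersion, which completes the induction.
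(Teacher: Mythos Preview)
Your argument is correct, and it takes a genuinely different route from the paper's. Both proofs begin the same way---using Lem.~\ref{lemInheritedFreeness} and the induction hypothesis to straighten the first $n$ generators---but diverge at the crucial step. The paper observes that, after straightening, the composite $\varphi\circ\pi_1\circ\cdots\circ\pi_n:\bigwedge\bR^1\to\bigwedge\bR^m$ still makes $\bigwedge\bR^m$ free (again by Lem.~\ref{lemInheritedFreeness}), and then invokes the \emph{base case} $n=1$ to conclude that this map is a submersion, hence the last row of $(d\varphi)_0$ is nonzero. That base case is established separately, via an algorithm (Prp.~\ref{prpReplacementAlgorithm}) that transforms $\mu=\varphi(\theta^1)$ into a special form and then appeals to the explicit dimension formula of Esin--Ko\c{c} \cite{EK07}.

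Your argument instead supplies its own proof of the key inequality $\dim_{\bR}(f)<2^{m-1}$ for a nonzero odd $f$ with vanishing linear part, by an elementary induction on the number of Graßmann generators using the split $f=f'+\eta^q f''$ together with the annihilator/perfect-pairing identity. This is exactly the content of the base case $n=1$, so your inductive step is self-contained: it does not need Prp.~\ref{prpReplacementAlgorithm}, Lemmas~\ref{lemAutomorphismDimension}--\ref{lemIdealReplacement}, or the external result \cite{EK07}. The paper's approach, by contrast, isolates the $n=1$ case and handles it by reduction to a situation where a closed-form ideal dimension is known; this is more circuitous but has the side benefit of connecting to the literature on principal ideals in exterior algebras.
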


\begin{proof}
Let $\varphi:\bigwedge\bR^{n+1}\rightarrow\bigwedge\bR^m$ be such that
$\bigwedge\bR^m$ is a free $\bigwedge\bR^{n+1}$-module.
By Lem. \ref{lemInheritedFreeness}, it is also free with respect to the map
$\varphi_{n+1}:=\varphi\circ\pi_{n+1}:\bigwedge\bR^n\rightarrow\bigwedge\bR^m$
which, by assumption, is a submersion.
Therefore, there are coordinates of $\bigwedge\bR^n$ and
$\bigwedge\bR^m$, respectively, still denoted
$\theta^1,\ldots,\theta^n$ and $\eta^1,\ldots,\eta^m$, such that
$\varphi_{n+1}(\theta^i)=\eta^i$ (\cite{CCF11}, Prp. 5.2.5). Endowing the former
coordinates with the original $\theta^{n+1}$, $\varphi$ obtains the form
\begin{align*}
\varphi(\theta^1)=\eta^1,\ldots,\varphi(\theta^n)=\eta^n\,,\qquad
\varphi(\theta^{n+1})\in\bigwedge\bR^m
\end{align*}
We may assume that $\varphi(\theta^{n+1})\in\bigwedge\bR^m\setminus\langle\eta^1,\ldots,\eta^n\rangle$,
for if not we can modify $\theta^{n+1}$ by subtracting from it a suitable element of
$\langle\theta^1,\ldots,\theta^n\rangle$.
Denoting the associated morphism of superpoints still by $\varphi$,
the differential at the single topological point $0$ assumes the form
\begin{align*}
\renewcommand{\arraystretch}{1.6}
(d\varphi)_0
=\left(\arr{cc}{1_{n\times n}&0_{(m-n)\times n}\\0_{1\times n}&\left(\dd[\varphi(\theta^{n+1})]{\eta^{n+1}}|_0\,\ldots\right)}\right)
\end{align*}
$\varphi$ is a submersion if and only if the lower right submatrix is non-zero.
This condition is satisfied by the following argument.
The last line of $(d\varphi)_0$ equals the differential $(d\varphi_{n+1})_0$
of the map $\varphi_{n+1}:=\varphi\circ\pi_1\circ\ldots\circ\pi_n$, where the
$\pi_j$ are defined with respect to the new coordinates.
But $\varphi_{n+1}$ is a submersion by Lem. \ref{lemInheritedFreeness}
and the induction hypothesis for $n=1$.
\end{proof}

We now turn to the base case $n=1$.
The following two lemmas provide equivalent characterisations of freeness
in terms of ideal theory.

\begin{Lem}
\label{lemFreenessBasis}
Let $\varphi:\bigwedge\bR^1\rightarrow\bigwedge\bR^L$ be a morphism
of Graßmann algebras. $\bigwedge\bR^L$ is free
if and only if it has an $\bR$-basis of the form
$(v_1,\ldots,v_{2^{L-1}},\varphi(\theta^1)\cdot v_1,\ldots,\varphi(\theta^1)\cdot v_{2^{L-1}})$.
\end{Lem}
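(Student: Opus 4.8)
\textbf{Proof proposal for Lemma \ref{lemFreenessBasis}.}
The plan is to prove the two implications separately. For the easy direction, suppose $\bigwedge\bR^L$ admits an $\bR$-basis of the stated shape, namely $(v_1,\ldots,v_N,\varphi(\theta^1)v_1,\ldots,\varphi(\theta^1)v_N)$ with $N=2^{L-1}$. Writing $c:=\varphi(\theta^1)\in(\bigwedge\bR^L)_{\overline1}$, I claim that $(v_1,\ldots,v_N)$ is then a $\bigwedge\bR^1$-basis of $\bigwedge\bR^L$ via $\varphi$. Spanning is immediate, since any element is an $\bR$-combination of the $v_i$ and the $cv_i$, and $\bigwedge\bR^1=\bR\oplus\bR\theta^1$ acts by $1$ and by $c$. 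For linear independence over $\bigwedge\bR^1$: a relation $\sum_i(a_i+b_i\theta^1)\cdot v_i=0$ with $a_i,b_i\in\bR$ becomes $\sum_i a_i v_i+\sum_i b_i\, c v_i=0$, and $\bR$-independence of the full list forces all $a_i=b_i=0$. Hence $\bigwedge\bR^L$ is free (of rank $N$).

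For the converse, suppose $\bigwedge\bR^L$ is free over $R:=\bigwedge\bR^1$ via $\varphi$, with $R$-basis $(w_1,\ldots,w_r)$. First I would pin down the rank: by the implication $(iii)\Longrightarrow(i)$ of Prp. \ref{prpFlatSubmersion} (freeness implies faithful flatness), or more elementarily by tensoring with $R/R_{\mathrm{nil}}\cong\bR$, one gets $\dim_{\bR}\bigwedge\bR^L = r\cdot\dim_{\bR}R$, i.e. $2^L = 2r$, so $r=N=2^{L-1}$. Now set $c:=\varphi(\theta^1)$; since $c$ is odd, $c^2=0$. The key structural point is that multiplication by $c$ on $\bigwedge\bR^L$ has image contained in its kernel (as $c^2=0$), and I want the list $(w_1,\ldots,w_N,cw_1,\ldots,cw_N)$ to be an $\bR$-basis. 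It spans over $\bR$ because $\bigwedge\bR^L=R\cdot\{w_i\}$ and $R=\bR\cdot 1\oplus\bR\cdot\theta^1$ with $\theta^1$ acting as $c$. For $\bR$-independence, note the list has $2N=2^L$ elements, matching $\dim_{\bR}\bigwedge\bR^L$, so it suffices to check spanning, which we already have — hence it is automatically a basis. Finally, relabel $w_i=:v_i$ to obtain the asserted form.

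The one genuine subtlety — the step I expect to be the main obstacle — is making sure the argument does not secretly assume what is being proved. In particular, the claim ``$(w_1,\dots,w_N,cw_1,\dots,cw_N)$ spans $\bigwedge\bR^L$ over $\bR$'' must be justified purely from the $R$-module structure: every element $z$ is $\sum_i r_i w_i$ with $r_i\in R$, and writing $r_i=\alpha_i+\beta_i\theta^1$ ($\alpha_i,\beta_i\in\bR$) gives $z=\sum_i\alpha_i w_i+\sum_i\beta_i\,(cw_i)$, using $\varphi(\theta^1)=c$. That uses nothing but the definition of the $R$-action through $\varphi$, so it is clean. A secondary point worth a sentence is that in the easy direction one should check $c\neq 0$ is \emph{not} needed: even if $\varphi(\theta^1)=0$ the displayed list would have repeated/zero entries and could not be an $\bR$-basis of a $2^L$-dimensional space for $L\ge 1$, so the hypothesis already precludes that degenerate case. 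With these observations the proof is a short dimension count plus the elementary ``$R=\bR\oplus\bR\theta^1$, $\theta^1\mapsto c$'' bookkeeping, and can be written out in a few lines.
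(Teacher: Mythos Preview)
Your proof is correct and is precisely the natural argument the paper is gesturing at when it writes ``shown analogous to the proof of Lem.~\ref{lemInheritedFreeness}'' (a lemma whose proof is itself omitted): both directions come down to the bookkeeping $R=\bR\oplus\bR\theta^1$ with $\theta^1$ acting as $c=\varphi(\theta^1)$, together with the dimension count $\dim_{\bR}\bigwedge\bR^L=r\cdot\dim_{\bR}R$. One cosmetic remark: the phrase ``by tensoring with $R/R_{\mathrm{nil}}$ one gets $\dim_{\bR}\bigwedge\bR^L=r\cdot\dim_{\bR}R$'' is not literally what that tensor product yields; the identity you want follows directly from $\bigwedge\bR^L\cong R^r$ viewed as $\bR$-vector spaces, which you effectively use anyway.
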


\begin{proof}
This is shown analogous to the proof of Lem. \ref{lemInheritedFreeness}.
\end{proof}

\begin{Lem}
\label{lemFreeness}
Let $\mu\in(\bigwedge\bR^L)_{\overline{1}}$ and consider the ideal
$(\mu)$ in $\bigwedge\bR^L$ generated by $\mu$. Then $\bigwedge\bR^L$
admits an $\bR$-basis of the form
$(v_1,\ldots,v_{2^{L-1}},\mu\cdot v_1,\ldots,\mu\cdot v_{2^{L-1}})$
if and only if $\dim_{\bR}(\mu)\geq 2^{L-1}$.
In this case, $\dim_{\bR}(\mu)=2^{L-1}$.
\end{Lem}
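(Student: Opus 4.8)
The plan is to reduce the statement to a dimension count for the annihilator of $\mu$. Two structural observations drive everything. First, since $\mu$ is odd and $\mathrm{char}\,\bR\neq 2$, supercommutativity gives $\mu^2=-\mu^2$, hence $\mu^2=0$. Second, because $\bigwedge\bR^L$ is supercommutative the ideal $(\mu)$ coincides with $\mu\cdot\bigwedge\bR^L$, i.e. with the image of the $\bR$-linear map $L_\mu\colon a\mapsto\mu a$, whose kernel is the annihilator $\mathrm{Ann}(\mu):=\{a\mid\mu a=0\}$; rank--nullity then gives $\dim_\bR(\mu)+\dim_\bR\mathrm{Ann}(\mu)=2^L$. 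From $\mu^2=0$ one has $(\mu)\subseteq\mathrm{Ann}(\mu)$, so $\dim_\bR(\mu)\le 2^{L-1}$ with no hypotheses at all. This already yields the final assertion of the lemma: under the hypothesis $\dim_\bR(\mu)\ge 2^{L-1}$ we must have $\dim_\bR(\mu)=2^{L-1}$.

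For the ``only if'' direction, if $\bigwedge\bR^L$ has an $\bR$-basis $(v_1,\dots,v_{2^{L-1}},\mu v_1,\dots,\mu v_{2^{L-1}})$ then the $2^{L-1}$ vectors $\mu v_i$ are linearly independent and lie in $(\mu)$, so $\dim_\bR(\mu)\ge 2^{L-1}$, hence equality by the unconditional bound. For the ``if'' direction, assume $\dim_\bR(\mu)\ge 2^{L-1}$; then $\dim_\bR(\mu)=2^{L-1}$ and, by the displayed identity, $\dim_\bR\mathrm{Ann}(\mu)=2^{L-1}$, so the inclusion $(\mu)\subseteq\mathrm{Ann}(\mu)$ is an equality. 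Choose an $\bR$-subspace $V$ complementary to $(\mu)$ in $\bigwedge\bR^L$, with basis $v_1,\dots,v_{2^{L-1}}$. Since $V\cap\mathrm{Ann}(\mu)=V\cap(\mu)=0$, the map $L_\mu$ restricts to an injection $V\to(\mu)$, which is an isomorphism for dimension reasons; hence $(\mu v_1,\dots,\mu v_{2^{L-1}})$ is a basis of $(\mu)$ and $(v_1,\dots,v_{2^{L-1}},\mu v_1,\dots,\mu v_{2^{L-1}})$ is a basis of $\bigwedge\bR^L=V\oplus(\mu)$, as required.

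The argument is entirely elementary, so I do not anticipate a real obstacle; the only points to be careful with are that $(\mu)$ must be read as the superideal $\mu\cdot\bigwedge\bR^L$ (not the $\bR$-span of $\mu$), and that the step deducing $(\mu)=\mathrm{Ann}(\mu)$ uses both the dimension equality and the inclusion in one direction. The complement $V$ may be chosen freely, as only a direct-sum decomposition of $\bR$-vector spaces is needed.
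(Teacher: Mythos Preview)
Your proof is correct and follows essentially the same approach as the paper's: both arguments hinge on the fact that $\mu^2=0$ forces the image of $L_\mu$ to sit inside its kernel, and both conclude by showing that a complement to $(\mu)$ is carried isomorphically onto $(\mu)$ by multiplication with $\mu$. Your phrasing via the annihilator and rank--nullity makes the dimension count explicit from the start and yields the unconditional bound $\dim_\bR(\mu)\le 2^{L-1}$ before invoking the hypothesis, whereas the paper extends a basis of $(\mu)$ and arrives at the same count by a slightly more hands-on argument; the underlying idea is the same.
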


\begin{proof}
If a basis as stated exists, then the vectors $\mu\cdot v_i\in(\mu)$
are all linearly independent, thus the real dimension of $(\mu)$ is greater
than or equal to their number, $2^{L-1}$.

Conversely, let $(\mu\cdot w_1,\ldots,\mu\cdot w_d)$
denote a real basis of $(\mu)$ with $d\geq 2^{L-1}$.
We may endow this basis by vectors $v_j$, $1\leq j\leq f$ to a basis
\begin{align*}
(v_1,\ldots,v_f,\mu\cdot w_1,\ldots,\mu\cdot w_d)
\end{align*}
of $\bigwedge\bR^L$. It follows that $f+d=2^L$ and thus $f\leq 2^{L-1}$.
Multiplying all vectors with $\mu$ and using $\mu^2=0$, one sees that the vectors
$(\mu\cdot v_1,\ldots,\mu\cdot v_f)$ span $(\mu)$,
whence $f\geq\dim(\mu)=d\geq 2^{L-1}$, such that
$f=d=2^{L-1}$. In particular, $(\mu\cdot v_1,\ldots,\mu\cdot v_d)$
is a basis of $(\mu)$. Endowed with the vectors $v_i$, we obtain a basis
of $\bigwedge\bR^L$ as claimed.
\end{proof}

Our strategy for the base case will be to transform $\mu:=\varphi(\theta^1)$
to another odd element of some bigger Graßmann algebra with similar properties,
and such that the associated ideal has the form treated by Esin and Ko\c{c}
in \cite{EK07}. This is Prp. \ref{prpReplacementAlgorithm} below.
We now continue with two lemmas used in the proof of that algorithm.
The first one is clear by Lem. \ref{lemFreeness}.

\begin{Lem}
\label{lemAutomorphismDimension}
Let $\mu\in(\bigwedge\bR^L)_{\overline{1}}$.
If $\dim_{\bR}(\mu)\geq 2^{L-1}$ then $\dim_{\bR}(\varphi(\mu))\geq 2^{L-1}$
for every Graßmann automorphism $\varphi:\bigwedge\bR^L\rightarrow\bigwedge\bR^L$.
\end{Lem}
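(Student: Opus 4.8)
The plan is to reduce the statement to the equivalence provided by Lemma \ref{lemFreeness}, which translates the dimension bound $\dim_{\bR}(\mu)\geq 2^{L-1}$ into the existence of a particularly structured real basis of $\bigwedge\bR^L$. First I would apply Lemma \ref{lemFreeness} to the hypothesis $\dim_{\bR}(\mu)\geq 2^{L-1}$ to obtain an $\bR$-basis of $\bigwedge\bR^L$ of the form $(v_1,\ldots,v_{2^{L-1}},\mu\cdot v_1,\ldots,\mu\cdot v_{2^{L-1}})$.

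Next I would push this basis forward along $\varphi$. Since $\varphi$ is in particular a bijective $\bR$-linear endomorphism of $\bigwedge\bR^L$, the image family $(\varphi(v_1),\ldots,\varphi(v_{2^{L-1}}),\varphi(\mu\cdot v_1),\ldots,\varphi(\mu\cdot v_{2^{L-1}}))$ is again an $\bR$-basis. Because $\varphi$ is moreover an algebra homomorphism, $\varphi(\mu\cdot v_i)=\varphi(\mu)\cdot\varphi(v_i)$; and being a morphism of Graßmann algebras it preserves parity, so $\varphi(\mu)$ is odd and the ideal $(\varphi(\mu))$ is of the type considered in Lemma \ref{lemFreeness}. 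Writing $w_i:=\varphi(v_i)$, we therefore have a basis $(w_1,\ldots,w_{2^{L-1}},\varphi(\mu)\cdot w_1,\ldots,\varphi(\mu)\cdot w_{2^{L-1}})$ of $\bigwedge\bR^L$.

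Finally I would invoke the converse direction of Lemma \ref{lemFreeness}, now with $\mu$ replaced by $\varphi(\mu)$: the existence of such a basis forces $\dim_{\bR}(\varphi(\mu))\geq 2^{L-1}$, which is the claim. There is no genuine obstacle in this argument; the only thing to watch is that $\varphi$ is used in two capacities at once, as a linear isomorphism (to transport the basis) and as a ring homomorphism (to extract the scalar $\mu$ from the products $\mu\cdot v_i$), both of which are guaranteed by the hypothesis that $\varphi$ is a Graßmann automorphism.
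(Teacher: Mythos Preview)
Your argument is correct and is precisely the intended one: the paper merely states that the lemma ``is clear by Lem.~\ref{lemFreeness}'', and you have spelled out exactly how that lemma is applied, using $\varphi$ both as a linear isomorphism to transport the basis and as an algebra map to rewrite $\varphi(\mu\cdot v_i)=\varphi(\mu)\cdot\varphi(v_i)$. One could shortcut even this by noting directly that an algebra automorphism carries the ideal $(\mu)$ bijectively onto $(\varphi(\mu))$, so the dimensions coincide; but your route via Lem.~\ref{lemFreeness} matches the paper's hint.
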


\begin{Lem}
\label{lemIdealReplacement}
Let $\mu\in(\bigwedge\bR^L)_{\overline{1}}$.
Let $r\in(\bigwedge\bR^L)_{\overline{0}}$ and consider
$\hat{\mu}:=\mu+\eta^{L+1}\cdot r\in(\bigwedge\bR^{L+1})_{\overline{1}}$.
If $\dim_{\bR}(\mu)\geq 2^{L-1}$ (with $(\mu)$ as an ideal in $\bigwedge\bR^L$),
then $\dim_{\bR}(\hat{\mu})\geq 2^L$ (as an ideal in $\bigwedge\bR^{L+1}$).
\end{Lem}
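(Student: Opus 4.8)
The plan is to produce $2^{L}$ explicit $\bR$-linearly independent elements of the ideal $(\hat{\mu})\subseteq\bigwedge\bR^{L+1}$. Since $(\mu)=\mu\cdot\bigwedge\bR^L$ is the image of the $\bR$-linear map $w\mapsto\mu w$ on $\bigwedge\bR^L$, the hypothesis $\dim_{\bR}(\mu)\geq 2^{L-1}$ allows me to choose $v_1,\ldots,v_n\in\bigwedge\bR^L$, where $n:=2^{L-1}$, such that $\mu v_1,\ldots,\mu v_n$ are $\bR$-linearly independent. (Alternatively one may invoke Lem. \ref{lemFreeness}, which furnishes such $v_i$ as part of a basis of $\bigwedge\bR^L$.)

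Next I would use the decomposition $\bigwedge\bR^{L+1}=\bigwedge\bR^L\oplus\eta^{L+1}\bigwedge\bR^L$ into a free rank-two $\bigwedge\bR^L$-module, according to whether the generator $\eta^{L+1}$ occurs. Since $\hat{\mu}$ is homogeneous, the ideal it generates in the supercommutative ring $\bigwedge\bR^{L+1}$ is $(\hat{\mu})=\hat{\mu}\cdot\bigwedge\bR^{L+1}$; in particular it contains the $2n=2^L$ elements
\begin{align*}
\hat{\mu}\,v_i=\mu v_i+\eta^{L+1}(r v_i)\,,\qquad
\hat{\mu}\,\eta^{L+1}v_i=-\eta^{L+1}(\mu v_i)\,,\qquad i=1,\ldots,n\,,
\end{align*}
where the second identity follows from $\mu\eta^{L+1}=-\eta^{L+1}\mu$ (both odd), from $r$ being even, and from $(\eta^{L+1})^2=0$, which also kills the term $\eta^{L+1}r\,\eta^{L+1}v_i$. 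Observe that the $\bigwedge\bR^L$-component of $\hat{\mu}v_i$ is exactly $\mu v_i$, while $\hat{\mu}\,\eta^{L+1}v_i$ lies entirely in the summand $\eta^{L+1}\bigwedge\bR^L$.

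To conclude I would test a relation $\sum_i a_i\,\hat{\mu}v_i+\sum_i b_i\,\hat{\mu}\,\eta^{L+1}v_i=0$ with $a_i,b_i\in\bR$ by projecting onto the two summands. The $\bigwedge\bR^L$-component reads $\sum_i a_i\,\mu v_i=0$, forcing $a_i=0$ for all $i$ by the choice of the $v_i$; the $\eta^{L+1}\bigwedge\bR^L$-component then reduces to $-\sum_i b_i\,\mu v_i=0$, whence $b_i=0$ for all $i$ as well. Thus the $2^L$ displayed elements of $(\hat{\mu})$ are $\bR$-linearly independent, giving $\dim_{\bR}(\hat{\mu})\geq 2^L$. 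No genuine obstacle is anticipated here; the only point needing a little care is the sign-and-nilpotency bookkeeping in the computation of $\hat{\mu}\,\eta^{L+1}v_i$, together with the remark that this element, being a multiple of $\hat{\mu}$, indeed lies in the ideal $(\hat{\mu})$.
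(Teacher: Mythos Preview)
Your proof is correct and follows essentially the same approach as the paper: both produce $2^L$ elements of $(\hat{\mu})$ of the form $\hat{\mu}(v_i+\eta^{L+1}w_j)$ and check $\bR$-linear independence via the splitting $\bigwedge\bR^{L+1}=\bigwedge\bR^L\oplus\eta^{L+1}\bigwedge\bR^L$. The paper phrases this as injectivity of the linear map $V\oplus V\to(\hat{\mu})$, $(v,w)\mapsto\mu v+\eta^{L+1}(rv-\mu w)$, which is exactly your computation with the basis vectors $(v_i,0)$ and $(0,v_i)$.
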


\begin{proof}
Writing elements of $\bigwedge\bR^{L+1}$ in the form $v+\eta^{L+1}w$
with $v,w\in\bigwedge\bR^L$, we obtain
\begin{align*}
(\hat{\mu})
=\{\mu v+\eta^{L+1}(rv-\mu w)\setsep v,w\in\bigwedge\bR^L\}
\end{align*}
By assumption and Lem. \ref{lemFreeness}, $\bigwedge\bR^L$ has a real basis
$(v_1,\ldots,v_{2^{L-1}},\mu\cdot v_1,\ldots,\mu\cdot v_{2^{L-1}})$.
Let $V:=\spann_{\bR}(v_1,\ldots,v_{2^{L-1}})\subseteq\bigwedge\bR^L$.
There is a canonical map from $V\oplus V$ to the space
\begin{align*}
\{\mu v+\eta^{L+1}(rv-\mu w)\setsep v,w\in V\}\subseteq(\hat{\mu})
\end{align*}
sending $(v,w)$ to $\mu v+\eta^{L+1}(rv-\mu w)$, which is
clearly $\bR$-linear and surjective.
Assume $\mu v+\eta^{L+1}(rv+\mu w)=0$.
Then, in particular, $\mu v=0$. By definition of $V$, it follows that
$v=0$. Then also $\mu w=0$ and, similarly, $w=0$.
The aforementioned map is injective, and we conclude that
$\dim_{\bR}(\hat{\mu})\geq\dim(V\oplus V)=2^L$.
\end{proof}

For the next proposition, we need the following notation.
Let $(\eta^i)_i$ denote the coordinates of $\bigwedge\bR^L$ corresponding
to the standard basis of $\bR^L$. We write the expansion of
$\mu\in\bigwedge\bR^L$ with respect to these coordinates in the form
\begin{align}
\label{eqnMuExpansion}
\mu=\sum_JC_{\mu}^J\eta^J\;,\qquad C_{\mu}^J\in\bR\;,\qquad
\eta^J=\eta^{J_1}\cdot\ldots\cdot\eta^{J_{\abs{J}}}
\end{align}
where the sum runs over all multiindices $J$ of length $\abs{J}$ up to $L$.

\begin{Prp}
\label{prpReplacementAlgorithm}
Let $\mu\in(\bigwedge\bR^L)_{\overline{1}}$ be such that
$\dim_{\bR}(\mu)\geq 2^{L-1}$. Then there are $L'\geq L$ and
$\mu'\in(\bigwedge\bR^{L'})_{\overline{1}}$ such that
\begin{itemize}
\item There is a bijective correspondence
\begin{align*}
\lambda:\{J\setsep C_{\mu}^J\neq 0\}\rightarrow\{J'\setsep C_{\mu'}^{J'}\neq 0\}
\end{align*}
such that $\abs{J}=\abs{\lambda(J)}$.
\item The product $\prod_{J'\setsep C_{\mu'}^{J'}\neq 0}\eta^{J'}$ is non-zero.
\item $\dim_{\bR}(\mu')\geq 2^{L'-1}$.
\end{itemize}
\end{Prp}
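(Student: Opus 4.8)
The plan is to construct $\mu'$ explicitly by an iterative ``separation'' procedure that removes the obstruction preventing the product $\prod_{J\,:\,C_\mu^J\neq 0}\eta^J$ from being non-zero, while controlling the dimension of the generated ideal via Lem. \ref{lemIdealReplacement}. Write $\mu=\sum_J C_\mu^J\eta^J$ as in (\ref{eqnMuExpansion}), and call an index $i\in\{1,\dots,L\}$ \emph{overloaded} if it appears in two or more multiindices $J$ with $C_\mu^J\neq 0$. The product of the monomials occurring in $\mu$ vanishes precisely when some index is overloaded (so that two distinct factors $\eta^J,\eta^{J'}$ share a generator). The idea is to replace each overloaded generator by as many ``fresh'' copies as there are monomials using it, so that in $\mu'$ every generator appears in at most one of the surviving monomials, forcing the product to be non-zero.

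Concretely, first I would treat the case of a single overloaded generator, say $\eta^1$, occurring in monomials indexed by $J_1,\dots,J_s$ (all with $1\in J_r$). Pass from $\bigwedge\bR^L$ to $\bigwedge\bR^{L+s-1}$, introduce fresh generators $\eta^{L+1},\dots,\eta^{L+s-1}$, and in each $J_r$ replace the letter $1$ by the letter $L+r-1$ for $r\geq 2$ (leaving $J_1$ untouched), keeping the coefficients $C_\mu^{J_r}$. This defines $\mu'$; the bijection $\lambda$ is the obvious relabelling, and it clearly preserves lengths $|J|=|\lambda(J)|$. The dimension bound $\dim_{\bR}(\mu')\geq 2^{L+s-2}$ is where Lem. \ref{lemIdealReplacement} enters: the single-step move $\mu\mapsto\mu+\eta^{L+1}r$ with $r$ a suitable even element (namely the part of $\mu$ obtained by dropping the letter $1$ from those $J_r$ one wishes to peel off, times $\pm\eta^1$ removed appropriately) realises exactly such a replacement up to a Graßmann automorphism, so that repeated application, combined with Lem. \ref{lemAutomorphismDimension}, upgrades $\dim_{\bR}(\mu)\geq 2^{L-1}$ to $\dim_{\bR}(\mu')\geq 2^{(L+s-1)-1}$. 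Iterating over all overloaded generators (finitely many, each handled in finitely many steps) yields the final $L'$ and $\mu'$ with all three properties.

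The first two bullet points are essentially bookkeeping once the relabelling is set up correctly, so the main obstacle is the third: showing that the dimension bound $\dim_{\bR}(\mu)\geq 2^{L-1}$ propagates through each separation step to give $\dim_{\bR}(\mu')\geq 2^{L'-1}$. The delicate point is matching the abstract move in Lem. \ref{lemIdealReplacement}, which adds one generator and one ``$\eta^{L+1}r$'' term for a \emph{given} even $r$, with the combinatorial relabelling I actually want; this requires choosing $r$ so that the automorphism $\eta^{L+1}\mapsto\eta^{L+1}$, $\eta^1\mapsto\eta^1-\eta^{L+1}(\text{something})$ (or a similar triangular change of coordinates) converts $\mu+\eta^{L+1}r$ into the desired split form, and then invoking Lem. \ref{lemAutomorphismDimension} to carry the dimension estimate across. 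I expect each such step to be a short but slightly fiddly linear-algebra verification in the Graßmann algebra; once it is in place, an easy induction on the number of separation steps finishes the proof.
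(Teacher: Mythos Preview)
Your proposal is correct and follows essentially the same route as the paper: iteratively peel off one occurrence of an overloaded generator by passing to $\hat{\mu}=\mu+\eta^{L+1}\cdot r$ (with $r$ the cofactor of the chosen monomial after removing $\eta^{j_0}$), invoke Lem.~\ref{lemIdealReplacement} for the dimension bound, then apply a triangular automorphism and Lem.~\ref{lemAutomorphismDimension}, and repeat. The only refinement is that the clean automorphism to use is $\eta^{L+1}\mapsto\eta^{L+1}-\eta^{j_0}$ (fixing the other generators), not one acting on $\eta^{j_0}$ as you tentatively suggest; this leaves $\mu$ and $r$ fixed and sends $\hat{\mu}$ directly to $(\mu-C_\mu^I\eta^I)+\eta^{L+1}C_\mu^I r$, so the ``fiddly verification'' you anticipate is in fact a one-line computation.
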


\begin{proof}
$\mu'$ is successively built from $\mu$ as follows.
Let $j_0$ denote the smallest integer such that the generator
$\eta^{j_0}$ is contained in at least two monomials $\eta^J$ such that
$C_{\mu}^J\neq 0$. Let $I$ denote one of $k\geq 2$ such
multiindices.
By assumption, there is $r\in(\bigwedge\bR^L)_{\overline{0}}$ such that
$\eta^{I}=\eta^{j_0}\cdot r$. Consider
\begin{align*}
\hat{\mu}:=(\mu-C_{\mu}^I\eta^I)+(\eta^{j_0}+\eta^{L+1})\cdot C_{\mu}^Ir
\end{align*}
By Lem. \ref{lemIdealReplacement}, it satisfies
$\dim_{\bR}(\hat{\mu})\geq 2^{L'-1}$ with $L'=L+1$. Consider next the
automorphism $\varphi:\bigwedge\bR^{L'}\rightarrow\bigwedge\bR^{L'}$
defined by $\eta^i\mapsto\eta^i$ ($i<L'$) and
$\eta^{L'}\mapsto(\eta^{L'}-\eta^{j_0})$. Then the element
$\varphi(\hat{\mu})$ satisfies $\dim(\varphi(\hat{\mu}))\geq 2^{L'-1}$
by Lem. \ref{lemAutomorphismDimension}. Moreover, the number of
monomials containing $\eta^{j_0}$ is reduced to $k-1$. It is also
clear that the multiindex bijection required in the statement is satisfied.

Now start with $\mu$ replaced by $\varphi(\hat{\mu})$ from the previous step,
and repeat the construction until finally there is no generator $\eta^{j_0}$
contained in more than one monomial.
Since every step of the construction satisfies the first and third items
in the statement, the same holds for the final result. As no two
monomials therein share a common generator, the product over all
is non-zero.
\end{proof}

\begin{proof}[Proof of $(iii)\implies(iv)$ in Prp. \ref{prpFlatSubmersion}]
This remaining implication is proved by induction over $n$ in
$R=\bigwedge\bR^n$. It remains to show the base case as the inductive step
was already established in Prp. \ref{prpInductiveStep}.
Consider thus a morphism $\varphi:\bigwedge\bR^1\rightarrow\bigwedge\bR^L$
such that $\bR^L$ is free as an $\bR^1$-module via $\varphi$.
By Lem. \ref{lemFreenessBasis} and Lem. \ref{lemFreeness},
this property is characterised by $\dim_{\bR}(\mu)\geq 2^{L-1}$ for
$\mu:=\varphi(\theta^1)$. The algorithm of Prp. \ref{prpReplacementAlgorithm}
constructs another odd $\mu'$ with a similar shape, but such that the product
over all monomials with nonvanishing coefficients in (\ref{eqnMuExpansion})
does not vanish. This is the case treated in \cite{EK07}.
In Thm. 4 of that reference, the dimension of $(\mu')$
is explicitly calculated to be
\begin{align*}
\dim_{\bR}(\mu')=2^{L'-1}\left(1-\prod_{\left\{J\setsep C_{\mu'}^J\neq 0\right\}}\left(1-2^{1-\abs{J}}\right)\right)
\end{align*}
Together with $\dim(\mu')\geq 2^{L'-1}$, this forces at least one of the
multiindices $J$ in the product to be of length $\abs{J}=1$.
But then the corresponding $\mu$-multiindex $\lambda^{-1}(J)$
has also length $1$. It follows that $\varphi$ is a submersion.
\end{proof}

\addcontentsline{toc}{section}{References}

\bibliographystyle{alpha}

\end{document}